\documentclass[a4paper,11pt]{article}
\pdfoutput=1
\usepackage[utf8]{inputenc}
\usepackage[T1]{fontenc}
\usepackage[UKenglish]{isodate}
% Guillemets
%\DeclareUnicodeCharacter{00AB}{\og{}}
%\DeclareUnicodeCharacter{00BB}{\fg{}}

%----------------------------------------------
% Fonts
%----------------------------------------------

\usepackage{mathrsfs}
\usepackage{bm}
\usepackage{bbm}
\usepackage{palatino}
%\usepackage{fourier}
%\usepackage{mathpazo}

%----------------------------------------------
% Margins
%----------------------------------------------

\usepackage[margin=3cm]{geometry}
% Machins de SP:
%\setlength{\parindent}{0cm}
%\setlength{\parskip}{2ex plus 0.5ex minus 0.5ex}

%----------------------------------------------
% Packages
%----------------------------------------------
\usepackage[usenames,dvipsnames]{xcolor}
\usepackage{amsmath}
\usepackage{amssymb}
\usepackage{amsthm}  
\usepackage{ifthen}
\usepackage{mathabx}
\usepackage{stmaryrd}
\usepackage{array}
\usepackage{float}
\usepackage[colorlinks=true,urlcolor=Mahogany,linkcolor=Mahogany,citecolor=Mahogany,plainpages=false,pdfpagelabels]{hyperref}
\usepackage{tikz}
\usepackage{verbatim}
\usetikzlibrary{shapes.geometric,plotmarks,backgrounds,fit,calc,circuits.ee.IEC}
\usetikzlibrary{decorations.pathreplacing}
\usepackage[backend=bibtex,maxcitenames=4,maxalphanames=100,maxbibnames=100,isbn=false, sorting=none]{biblatex}
\tikzset{phase/.style = {draw,fill,shape=circle,minimum size=5pt,inner sep=0pt},crossx/.style={path picture={ 
\draw[thick,black,inner sep=0pt]
(path picture bounding box.south east) -- (path picture bounding box.north west) (path picture bounding box.south west) -- (path picture bounding box.north east);
}}, cross/.style={path picture={ 
\draw[thick,black](path picture bounding box.north) -- (path picture bounding box.south) (path picture bounding box.west) -- (path picture bounding box.east);
}}, not/.style={draw,circle,cross,minimum width=0.3 cm}}
\usepackage{xcolor}

\usepackage[colorinlistoftodos]{todonotes}

\usepackage{mathtools}
%----------------------------------------------
% Theorem/Lemma/Definition/Alouette
%----------------------------------------------

\newtheorem{theorem}{Theorem}
\newtheorem{lemma}[theorem]{Lemma}

\newtheorem{proposition}[theorem]{Proposition}
\newtheorem{definition}[theorem]{Definition}

%----------------------------------------------
% Notation de quantique
% Plus ou moins plagié de Thomas B. Pedersen
% Université d'Aarhus, Danemark
%----------------------------------------------

%% Ket: |a>
\newcommand{\sket}[1]{{\ensuremath{\lvert#1\rangle}}}
\newcommand{\lket}[1]{{\ensuremath{\left\lvert#1\right\rangle}}}
\newcommand{\ket}[1]{\mathchoice{\lket{#1}}{\sket{#1}}{\sket{#1}}{\sket{#1}}}

%% Bra: <a|
\newcommand{\sbra}[1]{{\ensuremath{\langle#1\rvert}}}
\newcommand{\lbra}[1]{{\ensuremath{\left\langle#1\right\rvert}}}
\newcommand{\bra}[1]{\mathchoice{\lbra{#1}}{\sbra{#1}}{\sbra{#1}}{\sbra{#1}}}

%% Bracket: <a|b>

%% ``Ketbra'': |a><b|

%% Projection: |a><a|

%%

%\newcommand{\eqdef}{\stackrel{\Delta}{=}}
\newcommand{\stackindx}[2]{\text{\renewcommand{\arraystretch}{0.5}$\begin{array}{@{}c@{}} #1 \\ #2 \end{array}$}}

%----------------------------------------------
% HAL eprint archive for biblatex
%----------------------------------------------
\DeclareFieldFormat{eprint:hal}{%
  HAL Id\addcolon\space
  \ifhyperref
      {\href{https://hal.inria.fr/hal-#1}{\texttt{#1}}}
    {\texttt{#1}}}

%----------------------------------------------
% IACR eprint archive for biblatex
%----------------------------------------------
\DeclareFieldFormat{eprint:iacr}{%
  IACR eprint\addcolon\space
  \ifhyperref
    {\href{http://eprint.iacr.org/#1}{\texttt{#1}}}
    {\texttt{#1}}}

%----------------------------------------------
% HAL eprint archive for biblatex
%----------------------------------------------
\DeclareFieldFormat{eprint:hal}{%
  HAL Id\addcolon\space
  \ifhyperref
      {\href{https://hal.archives-ouvertes.fr/#1}{\texttt{#1}}}
    {\texttt{#1}}}

%----------------------------------------------
% Normes, etc (plagié de John Watrous)
%----------------------------------------------

%----------------------------------------------
% Macros
%----------------------------------------------

%polices

%polices

%\newcommand{\ident}{\mathbb{I}}
\newcommand{\ident}{\mathbbm{1}}

 % dag operator

%mathbb's

%mathsf's

%mathfrak's

\newcommand{\email}[1]{\href{mailto:#1}{#1}}
% Bibliographie
\addbibresource{big.bib}

% Divers
\renewcommand{\otimes}{\varotimes}

\newcommand{\eqdef}{:=}

\setlength{\intextsep}{10pt plus 1.0pt minus 2.0pt}
\setlength{\abovecaptionskip}{5pt plus 1.0pt minus 2.0pt}
\setlength{\belowcaptionskip}{5pt plus 1.0pt minus 2.0pt}

\begin{document}
\cleanlookdateon
\allowdisplaybreaks

% to remove the star found on stackexchange
\makeatletter
\def\thanks#1{\protected@xdef\@thanks{\@thanks
        \protect\footnotetext{#1}}}
\makeatother

%The purpose of this note is to prove that multilevel polarization of Clifford no. $7$ and $8$ from Valentin's notes can be used to reduce the number of noiseless EPR pair shared for decoding.
\title{Multilevel Polarization for Quantum Channels}
\author{Ashutosh~Goswami, Mehdi~Mhalla, Valentin~Savin \vspace*{-10mm}
\thanks{A. Goswami is with  Université Grenoble Alpes, Grenoble INP, LIG, F-38000 Grenoble, France (\email{ashutosh-kumar.goswami@univ-grenoble-alpes.fr}).}% <-this % stops a space
\thanks{M. Mhalla is with Université Grenoble Alpes, CNRS, Grenoble INP, LIG, F-38000 Grenoble, France (\email{mehdi.mhalla@univ-grenoble-alpes.fr}).}% <-this % stops a space
\thanks{V. Savin is with Université Grenoble Alpes, CEA-LETI, F-38054 Grenoble, France (\email{valentin.savin@cea.fr}).}}% <-this % stops a space
\maketitle
%\section{Notations:}
%\begin{itemize}
%    \item $\bar{P_1} = P_1/\{ \pm 1, \pm \imath\}$
%    \item We shall represent elements of $\bar{P_1}$ by two bits as following: \\
%    $$ \{\mathcal{I}, X, Z, Y \} \rightarrow \{00, 10, 01, 11\} $$
%\item A classical channel $W$ is defined over input and output set $\{00, 10, 01, 11\}$ with probabilities $W(i|j) = p_{i\oplus j} $, where $i,j \in \{00, 10, 01, 11\}$.
%\end{itemize}

\begin{abstract}

Recently, a purely quantum version of polar codes has been proposed in~\cite{DGMS19} based on a quantum channel combining and splitting procedure, where a randomly chosen two-qubit Clifford unitary acts as channel combining operation. Here, we consider the quantum polar code construction using the same channel combining and splitting procedure as in~\cite{DGMS19}, but with a fixed two-qubit Clifford unitary. For the family of Pauli channels, we show that polarization happens in multi-levels, where synthesized quantum virtual channels tend to become completely noisy, half-noisy, or noiseless. Further, we present a quantum polar code exploiting the multilevel nature of polarization, and provide an efficient decoding for this code. We show that half-noisy channels can be frozen by fixing their inputs in either the amplitude or the phase basis, which allows reducing the number of preshared EPR pairs compared to the construction in~\cite{DGMS19}. We provide an upper bound on the number of preshared EPR pairs, which is an equality in the case of the quantum erasure channel.  To improve the speed of polarization, we propose an alternative construction, which again polarizes in multi-levels, and the previous upper bound on the number of preshared EPR pairs also holds. For a quantum erasure channel, we confirm by numerical analysis that the multilevel polarization happens relatively faster for the alternative construction.
%We show multilevel channel polarization for Pauli channel which reduces significantly the number of preshared EPR pairs.
%
%Using multilevel polarization of classical counterpart of Pauli channel, we show that the required preshared EPR pair can be reduced for quantum polar codes in the case of Pauli-channels.
% and it is shown that synthesised virtual quantum channels tend to become either completely noisy or noiseless as quantum channels, not merely in a basis.

\end{abstract}

\section{Introduction}

Polar codes are a family of the capacity-achieving codes for any  discrete memoryless classical channel, with efficient encoding and decoding algorithms~\cite{arikan09, sta09}. Polar codes have been generalized for quantum channels in two different ways. The first generalization is a CSS-like construction, which uses the polar codes for classical-quantum (cq) channels in the amplitude and the phase basis \cite{rdr11, rw12, wg13-2, wg13}. The CSS-like construction achieves symmetric coherent information for any qubit-input quantum channel and has an efficient decoding algorithm for the Pauli channel. Recently, a new generalization is proposed in~\cite{DGMS19, our-itw-paper19}, which is called \emph{purely} quantum polar codes. The purely quantum construction relies on a specific quantum channel combining and splitting procedure, where a randomly chosen two-qubit Clifford unitary combines two copies of a quantum channel. The recursive channel combining and splitting procedure synthesizes so called virtual channels, which tend to be either ``completely noisy'', or ``noiseless'' as quantum channels, not merely in one basis, hence, the name purely quantum. The code is entanglement assisted as preshared EPR pairs need to be supplied for all the completely noisy channels. This construction also achieves symmetric coherent information for any qubit-input quantum channel and has an efficient decoding in the case of the Pauli channel \cite{DGMS19}. Moreover, it is shown that choosing the channel combining operation from a set of 9 or 3 two-qubit Clifford unitaries is sufficient to achieve polarization for the Pauli channel.

\medskip In this work, we consider the following two questions arising naturally from~\cite{DGMS19}:
\begin{itemize}
\item Whether polarization still can be achieved when the channel combining operation is a fixed two-qubit Clifford unitary.

\item How much we can reduce the number of preshared EPR pairs. 
\end{itemize}
For the first question, we show that the Pauli channel polarizes, using the channel combining and splitting procedure defined in~\cite{DGMS19}, but with a fixed two-qubit Clifford gate as channel combining operation. However, polarization here happens in multi-levels in the sense of~\cite{PB12, sahebi2011multilevel}, instead of two levels. In particular, the synthesized virtual channels can also be ``half-noisy''  except being completely noisy or noiseless. The half-noisy channels need to be frozen by fixing their inputs in either the amplitude or the phase basis, while preshared EPR pairs are required for the completely noisy channels as in~\cite{DGMS19}. As some of the bad channels are frozen in either the amplitude or the phase basis, the quantum polar code constructed here requires a fewer number of preshared EPR pairs than the construction in~\cite{DGMS19}. We also give an upper bound on the number of preshared EPR pairs, which is an equality for the quantum erasure channel. In particular, for a quantum erasure channel with erasure probability $\epsilon$, the fraction of preshared EPR pairs is $\epsilon^2$, while it is $\epsilon$ for the construction proposed in~\cite{DGMS19}. Therefore, for the second question, the number of preshared EPR pairs is significantly reduced, taking advantage of the multilevel nature of polarization. The decoding can also be efficiently performed by decoding a classical polar code on a classical channel with  a 4-symbol input alphabet similar to~\cite{DGMS19}. 

Finally, we relax the fixed channel combining condition and present a slightly different construction utilizing a quantum circuit equivalence. For a quantum erasure channel, we show with the help of a computer program that the multilevel polarization occurs relatively faster for this alternative construction compared to the first construction. Further, the alternative construction requires the same number of preshared EPR pairs as the first construction.

The paper is organized as follows: in Section~\ref{sec:preliminary}, we recall some useful definitions, and properties of the quantum polar code proposed in~\cite{DGMS19}. The definitions of the symmetric mutual information and the Bhattacharyya parameter of a classical channel are also provided. In Section~\ref{sec:noisy_half_noiseless}, we introduce noiseless, half-noisy, and noisy channels. In Section~\ref{sec:multilevel}, we prove our main result, that is, the multilevel polarization in the case of the Pauli channel, using a fixed two-qubit Clifford as channel combining operation. In Section~\ref{sec:Qpolar_constr}, it is shown that the multilevel polarization can be used to construct an efficient quantum polar code. We also give an upper bound on the number of preshared EPR pairs and a fast polarization property that ensures reliable decoding. Finally, in Section~\ref{sec:alt_constr}, we propose an alternative construction to improve the speed of polarization, and in Section~\ref{sec:quantm_erasure}, it is shown by numerical simulation that for a quantum erasure channel, the speed of polarization significantly improves, when the alternative construction is used instead of the first construction.

\section{Preliminaries} \label{sec:preliminary}

\noindent \textbf{Notation:} Let $P_N$ be the $N$-qubit Pauli group, $\mathcal{C}_N$ be the $N$-qubit Clifford group, and $\bar{P}_N = P_N/\{\pm 1,  \pm i\}$ be the Abelian group obtained by taking the quotient of $P_N$ by its centralizer. We write $\bar{P}_1 = \{I, X, Y, Z \}$, and $\bar{P}_2 = \{ u \otimes v | u, v \in P_1 \} \cong \bar{P}_1 \times \bar{P}_1$. The conjugate action of $C \in \mathcal{C}_2$ on $\bar{P}_2$, denoted by $\Gamma_C$, is an automorphism of $\bar{P}_2$ such that $\Gamma_C( u \otimes v) \eqdef C (u \otimes v) C^\dagger $. When no confusion is possible, we shall simply denote $\Gamma_C$ by $\Gamma$.
%
% We write $\bar{P}_1 = \{\sigma_i \mid i \in \{0,1,2,3\} \}$, with $\sigma_0 = I$, $\sigma_1 = Z $, $\sigma_2 = X$, $\sigma_3 = Y$, and $\bar{P}_2 = \{\sigma_{i,j}\eqdef \sigma_i \otimes \sigma_j \mid i,j \in \{0,1,2,3\} \} \simeq \bar{P}_1\times\bar{P}_1$. The conjugate action of $C \in \mathcal{C}_2$ on $\bar{P}_2$ is denoted by $\Gamma(C)$ or simply by $\Gamma$ when no confusion is possible. Therefore, $\Gamma$ is an automorphism on $\bar{P}_2$, defined by $\Gamma(\sigma_{i, j}) = C \sigma_{i,j} C^\dagger $. We further simplify the notation by identifying $(\bar{P}_1, \times) \equiv (\{0, 1, 2, 3\}, \oplus)$ such that $\sigma_i \equiv i$ and $i \oplus j$ is bitwise sum (XOR) of binary representations of $i$ and $j$. 	Using this notation,  $\Gamma(C)$ is identified by automorphism, $\Gamma(C):\{0, 1, 2, 3\}^2 \to  \{0, 1, 2, 3\}^2$, such that $C(\sigma_i \otimes \sigma_j)C^\dagger = \sigma_{\Gamma(i, j)}$.
%
%\medskip \noindent Let's recall following definitions from~\cite{DGMS19}.
%
\subsection{Quantum polarization for Pauli channels} 
\begin{definition}[Classical counterpart of a Pauli  channel]~\label{def:classical_channel}
Let $\mathcal{N}$ be a Pauli channel, that is, $\mathcal{N}(\rho) = \sum_u p_u u \rho u$, such that $u \in \bar{P}_1$, and $p_u \geq 0$ satisfying $\sum_u p_u = 1$. The classical counterpart of $\mathcal{N}$, denoted by $\mathcal{N}^\#$, is a classical channel from $\bar{P}_1$ (input alphabet) to $\bar{P}_1$ (output alphabet), which is defined by the transition probabilities,  ${\cal N}^\#(u \mid v) = p_w$, where $w \in \bar{P}_1$ is such that $u  v = w $.

% $\mathcal{N}^\#(u | v) = p_{u \oplus v}$, for any $u, v \in \bar{P}_1$.
\end{definition}

\begin{definition} [Classical mixture of Pauli (CMP) channels ]~\label{def:classical_channel_cmp}
A Classical Mixture of Pauli (CMP) channels is a quantum channel defined as, $\mathcal{N} (\rho) =  \sum_x \lambda_x \ket{x}\bra{x} \otimes \mathcal{N}_x(\rho)$, where $\{\ket{x} | x \in X\}$ is some orthonormal basis of an auxiliary system, $\mathcal{N}_x$ are Pauli channels, and $\lambda_x$ is a probability distribution on $X$.
\end{definition}
\noindent The definition of the classical counterpart channel from Definition~\ref{def:classical_channel} can be extended to the CMP channel by defining the classical counterpart ${\cal N}^\#$ as the mixture of classical channels ${\cal N}_x^\#$, where ${\cal N}_x^\#$ is used with probability $\lambda_x$. Hence, the input and output alphabets of ${\cal N}^\#$  are $\bar{P}_1$ and $X \times \bar{P}_1$, respectively, and the transition probability is given by ${\cal N}^\#(x, u \mid v) = \lambda_x\,{\cal N}^\#_x(u \mid v)$, for any $x \in X$, and $u, v \in \bar{P}_1$.

\begin{definition} [Equivalent classical channels]~\label{def:equivalent_cc}
Given two classical channels ${\cal U}$ and ${\cal V}$, we say they are equivalent and denote it by ${\cal U} \equiv {\cal V}$, if they are defined by the identical transition probability matrix up to a permutation of rows and columns.
\end{definition}
We now consider the channel combining and splitting procedure from~\cite{DGMS19}, on two copies of a quantum channel $\mathcal{W}_{A' \to B}$, where $A'$ and $B$ are the input and output quantum systems, respectively. Two instances of $\mathcal{W}_{A' \rightarrow B}$ are first combined using a two-qubit Clifford unitary $C \in \mathcal{C}_2$ as follows
\begin{equation}
(\mathcal{W} \bowtie_C \mathcal{W})( \rho_1 \otimes \rho_2 ) = \mathcal{W}_{A_1' \to B_1} \otimes \mathcal{W}_{A_2' \to B_2} \left( C( \rho_1 \otimes \rho_2) C^\dagger ) \right).
\end{equation}
The combined channel $\mathcal{W} \bowtie_C \mathcal{W}$ is then split into two quantum virtual channels, the bad channel $\mathcal{W} \boxast_C \mathcal{W}$ and the good channel $\mathcal{W} \varoast_C \mathcal{W}$, as follows
\begin{align}
(\mathcal{W} \boxast_C \mathcal{W})_{A_1' \to B_1B_2}(\rho) &= \mathcal{W}_{A_1' \to B_1} \otimes \mathcal{W}_{A_2' \to B_2} \left(C \left(\rho \otimes \frac{\ident}{2}\right) C^\dagger \right). \label{eq:quantum_split_bad}\\
(\mathcal{W} \varoast_C \mathcal{W})_{A_2' \to R_1B_1B_2} (\rho) &= \mathcal{W}_{A_1' \to B_1} \otimes \mathcal{W}_{A_2' \to B_2} \left(C \left( \Phi_{R_1 A_1'} \otimes \rho \right) C^\dagger \right). \label{eq:quantum_split_good}
\end{align}
Quantum polar code construction is obtained by recursively applying the above channel combining and splitting procedure on $N \eqdef 2^n$ copies of the quantum channel $\mathcal{W}$, with $n > 0$, which synthesizes $2^n$ quantum  virtual channels, $\mathcal{W}^{i_1 \cdots i_n}$, with $\{i_1 \cdots i_n\} \in \{0,1\}^n$~\cite{DGMS19} (see also Section~\ref{sec:Qpolar_constr}).
\medskip

When $\mathcal{W}$ is a CMP channel, it is shown in~\cite{DGMS19} that the synthesized virtual channels $\mathcal{W}^{i_1 \cdots i_n}$  are also CMP channels. Therefore, one can define classical counterpart channel for $\mathcal{W}^{i_1 \cdots i_n}$, which is denoted by ${\mathcal{W}^{i_1 \cdots i_n}}^\#$. 

Moreover, the classical channel combining and splitting procedure is defined for two copies of $\mathcal{W}^\#$, the classical counterpart of the CMP channel $\mathcal{W}$, using the permutation $\Gamma \eqdef \Gamma_C$, as channel combining operation. The channel combining in this case is given by
\begin{equation}
(\mathcal{W}^\# \bowtie_\Gamma \mathcal{W}^\#)( y_1, y_2| u, v ) = {\mathcal{W}^\#}^2 (y_1, y_2 | \Gamma(u, v)),
\end{equation}
 where $u, v \in \bar{P}_1$. The channel splitting yields the bad channel $\mathcal{W}^\# \boxast_\Gamma \mathcal{W}^\#$, and the good channel $\mathcal{W}^\# \varoast_\Gamma \mathcal{W}^\#$, as follows
\begin{align}
\mathcal{W}^\# \boxast_\Gamma \mathcal{W}^\#(y_1, y_2|u) = \sum_{v} \frac{1}{4} (\mathcal{W}^\# \bowtie_\Gamma \mathcal{W}^\#)( y_1, y_2| u, v ),  \label{eq:classical_com_1} \\
\mathcal{W}^\# \varoast_\Gamma \mathcal{W}^\#(y_1, y_2, u|v) =  \frac{1}{4} (\mathcal{W}^\# \bowtie_\Gamma \mathcal{W}^\#)( y_1, y_2| u, v ), \label{eq:classical_com_2}
\end{align}
%{\mathcal{W}^\#}^2 (y_1, y_2|\Gamma(u, v)),
%{\mathcal{W}^\#}^2 (y_1, y_2|\Gamma(u, v)),
Once again, by applying the above channel combining and splitting recursively on $2^n$ copies of the classical channel $\mathcal{W}^\#$, we obtain $2^n$ classical virtual channels, ${\mathcal{W}^\#}^{i_1 \cdots i_n}$, with $\{i_1 \cdots i_n\} \in \{0,1\}^n$. 

\medskip It is proven in~\cite{DGMS19} that classical channels  ${\mathcal{W}^{i_1 \cdots i_n}}^\#$ and ${\mathcal{W}^\#}^{i_1 \cdots i_n}$ are equivalent in the sense of Definition~\ref{def:equivalent_cc}, \emph{i.e.},
\begin{equation}
 {\mathcal{W}^{i_1 \cdots i_n}}^\# \equiv {\mathcal{W}^\#}^{i_1 \cdots i_n}.
\end{equation}
The above equation implies that $\mathcal{W}$ and $\mathcal{W}^\#$ polarize simultaneously under their respective polar code constructions (see Proposition 20 and Corollary 21 in~\cite{DGMS19}). Therefore, it would be sufficient to prove that polarization happens for any one of the two polar code constructions, as this would imply the same for the remaining one. In this work, we shall consider the polar code construction on the classical counterpart $\mathcal{W}^\#$ to show the multilevel polarization.

\subsection{Symmetric mutual information and Bhattacharyya parameter} \label{sec:definitions}
%
%{\color{red}
From now on, we denote $W \eqdef \mathcal{W}^\# $ for the sake of clarity. Recall that $W$ is a classical channel with the input alphabet $\bar{P}_1$. Note that $\bar{P}_1$ is isomorphic to the additive group $(\{00, 01, 10, 11\}, \oplus)$, where $\oplus$ denotes bitwise sum modulo 2. Throughout this paper, we shall identify $I \equiv 00$, $Z \equiv 01$, $X \equiv 10$, and $Y \equiv 11$. Using this identification, we may write $\bar{P}_1 = \{00, 01, 10, 11 \}$, or sometimes $\bar{P}_1 = \{0, 1, 2, 3 \}$, the notation will be clear from the context. 
%}

\medskip \noindent We will use the symmetric mutual information of $W$, which is given by

\begin{equation} \label{eq:sym_mut_info}
I(W) = \frac{1}{4} \sum\limits_y \sum_{x \in \bar{P}_1} W(y|x) \text{ log}_2 \frac{W(y|x)}{P(y)},
\end{equation}
where $P(y) = \frac{1}{4} \sum_{x' \in \bar{P}_1} W(y|x')$. For any $x,x',d \in \bar{P}_1$, we further define two information measures $I(W_{x,x'})$ and $I_d(W)$ as follows
\begin{align}
&I(W_{x,x'}) =  \sum_y \frac{1}{2} \Big[ W(y|x) \text{ log}_2 \frac{W(y|x)}{\frac{1}{2}[W(y|x) + W(y|x')]} +  W(y|x') \text{ log}_2 \frac{W(y|x')}{\frac{1}{2}[W(y|x) + W(y|x')]}\Big]. \\
& I_d(W) = \frac{1}{4} \sum_x I(W_{x,x \oplus d}) \label{eq:information_I_d}.
\end{align}
Note that $I(W_{x,x'})$ is the symmetric mutual information of the binary-input channel obtained by restricting the input alphabet  of $W$ to $\{x,x'\} \subseteq \bar{P}_1$.

\medskip\noindent For $x,x', x'', d\in \bar{P}_1$, we define
\begin{align}
Z(W_{x, x^\prime}) &:= \sum_{y} \sqrt{W(y|x) W(y|x^\prime)}. \label{eq:bhattacharyya1} \\
Z_d(W) &:= \frac{1}{4}\sum_{x \in \bar{P}_1} Z(W_{x, x\oplus d}), \label{eq:def_Z_d}\\
       & = \frac{1}{2} [Z(W_{0, d}) + Z(W_{x'', x'' \oplus d})], \text { for any } x'' \neq 0,d, \label{eq:bhattacharyya_d}
\end{align}
where~(\ref{eq:bhattacharyya_d}) follows from $Z(W_{x,x'}) = Z(W_{x',x})$. Also, $Z(W_{x,x}) = 1$, $\forall x \in \bar{P}_1$, therefore $Z_d(W) = 1$ for $d = 0$. The Bhattacharyya parameter of the non-binary input channel $W$ is given by~\cite{sta09},
\begin{equation} \label{eq:bhattacharyya}
 Z(W) \eqdef \frac{1}{12} \sum_{x, x' \in \bar{P}_1: x \neq x^\prime}  Z(W_{x, x^\prime}) = \frac{1}{3} \sum_{d \in \bar{P}_1: d \neq 0} Z_d(W).
 \end{equation}
From~\cite{sta09}, we have the following relation between $I(W)$ and $Z(W)$, 
\begin{align}
I(W) &\geq \text{log}_2 \frac{4}{1 + 3Z(W)}.  \label{eq:I(W)_lower_bound}\\
I(W) &\leq  6 (\text{log}_2e) \sqrt{1 - Z(W)^2} \label{eq:I(W)_upper_bound}.
\end{align}
The first inequality from the above implies that $I(W)$ goes to $2$ if $Z(W)$ goes to $0$, and the second inequality implies that $I(W)$ goes to $0$ if $Z(W)$ goes to $1$.

%
%
%
%
%
%Therefore, together with~\label{eq:}

\section{Noiseless, half-noisy and noisy channels} \label{sec:noisy_half_noiseless}

 In the lemma below, we show that if any two parameters from the set \break $\{Z_1(W), Z_2(W), Z_3(W)\}$, defined in~(\ref{eq:bhattacharyya_d}), approach 1, the remaining third parameter will also approach 1.
\begin{lemma} \label{lem:two_1_imp_thre_1}
 For any $\{d_1, d_2, d_3\} = \{ 1, 2, 3\}$, if $Z_{d_1}(W) \geq 1 - \epsilon_1$, and $Z_{d_2}(W) \geq 1 - \epsilon_2$, then,
 \begin{equation}
 Z_{d_3}(W) \geq 1 - \epsilon_3, \text{ where } \epsilon_3 = 4(\sqrt{\epsilon_1} + \sqrt{\epsilon_2})^2.
 \end{equation}
\end{lemma}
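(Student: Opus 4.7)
The plan is to exploit the fact that in $\bar{P}_1 = \{0,1,2,3\}$ equipped with bitwise XOR, any three nonzero elements $d_1, d_2, d_3$ satisfying $\{d_1, d_2, d_3\} = \{1, 2, 3\}$ also satisfy $d_1 \oplus d_2 = d_3$. This means the pair $\{0, d_3\}$ can be joined via the intermediate vertex $d_1$ (since $0 \oplus d_1 = d_1$ and $d_1 \oplus d_2 = d_3$), and similarly the pair $\{d_1, d_2\}$ can be joined via the vertex $0$. This path structure is what will let me turn the hypotheses on $Z_{d_1}$ and $Z_{d_2}$ into a bound on $Z_{d_3}$ via a triangle inequality.

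The key analytic tool is the Hellinger-type metric $d(p,q) := \sqrt{1 - \sum_y \sqrt{p(y)q(y)}}$, which satisfies the triangle inequality (it is, up to a $1/\sqrt{2}$ factor, the $L^2$ distance between $\sqrt{p}$ and $\sqrt{q}$). In our notation this reads $d(W(\cdot|x), W(\cdot|x')) = \sqrt{1 - Z(W_{x,x'})}$. I plan to first unpack the definitions using~(\ref{eq:bhattacharyya_d}): since $\bar{P}_1$ partitions into exactly two pairs $\{0,d_i\}$ and $\{d_j,d_k\}$ under the relation $x \sim x \oplus d_i$, we have $Z_{d_i}(W) = \frac{1}{2}\bigl[Z(W_{0,d_i}) + Z(W_{d_j,d_k})\bigr]$. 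Combined with the trivial upper bound $Z(W_{x,x'}) \leq 1$, the hypothesis $Z_{d_i}(W) \geq 1-\epsilon_i$ forces each individual term to satisfy $Z(W_{0,d_i}) \geq 1 - 2\epsilon_i$ and $Z(W_{d_j,d_k}) \geq 1 - 2\epsilon_i$, so $d(W(\cdot|0), W(\cdot|d_i)) \leq \sqrt{2\epsilon_i}$ and likewise for the complementary pair.

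Then I apply the triangle inequality twice: along the path $0 \to d_1 \to d_3$, I get $d(W(\cdot|0), W(\cdot|d_3)) \leq \sqrt{2\epsilon_1} + \sqrt{2\epsilon_2}$, and along the path $d_1 \to 0 \to d_2$, I get $d(W(\cdot|d_1), W(\cdot|d_2)) \leq \sqrt{2\epsilon_1} + \sqrt{2\epsilon_2}$. Squaring, both $1 - Z(W_{0, d_3})$ and $1 - Z(W_{d_1, d_2})$ are bounded by $2(\sqrt{\epsilon_1} + \sqrt{\epsilon_2})^2$, and averaging via~(\ref{eq:bhattacharyya_d}) yields $Z_{d_3}(W) \geq 1 - 2(\sqrt{\epsilon_1}+\sqrt{\epsilon_2})^2$, which is in fact stronger than the claimed $\epsilon_3 = 4(\sqrt{\epsilon_1}+\sqrt{\epsilon_2})^2$ and therefore implies it.

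The only real subtlety, and the thing I would be most careful about, is the passage from the averaged quantity $Z_{d_i}(W)$ to bounds on its individual summands: this works only because $Z(W_{x,x'}) \in [0,1]$, and it costs a factor of $2$ in the noise parameter. Beyond that the argument is mechanical; the triangle inequality for the Hellinger metric is standard (it follows from Minkowski's inequality applied to $\sqrt{p}, \sqrt{q}$ in $L^2$), and the XOR identity $d_1 \oplus d_2 = d_3$ is immediate once one identifies $\bar{P}_1$ with $\mbF_2^2$.
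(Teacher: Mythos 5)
Your proof is correct and follows the same route as the paper: exploit $d_1 \oplus d_2 = d_3$ and apply the triangle inequality for the $L^2$/Hellinger distance between the square-root output distributions. You are in fact slightly sharper than the paper's own proof---using the two-term pairing in~(\ref{eq:bhattacharyya_d}) you convert $Z_{d_i}(W) \geq 1-\epsilon_i$ into $1 - Z(W_{x,x\oplus d_i}) \leq 2\epsilon_i$, whereas the paper uses the looser four-term average to get only $\leq 4\epsilon_i$, so you obtain $\epsilon_3 = 2(\sqrt{\epsilon_1}+\sqrt{\epsilon_2})^2$, which implies the stated bound.
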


\begin{proof}
For $x \in \bar{P}_1$, consider a vector $\vec{A}(x)$ such that $\vec{A}(x) = ( \sqrt{W(y|x)}, y \in Y )$. It follows that $ \vec{A}(x) \cdot \vec{A}(x') = Z(W_{x,x'}) $ and $|\vec{A}(x) - \vec{A}(x')| = \sqrt{2 \big(1 - Z(W_{x,x'})\big)}$, where $|\vec{A}(x) - \vec{A}(x')|$ is the Euclidean distance between the vectors $\vec{A}(x)$ and $\vec{A}(x')$. Using the triangle inequality and $d_1 \oplus d_2 = d_3$, we have that

\begin{equation} \label{eq:trngle_bhatt}
\sqrt{ \big(1 - Z(W_{x,x \oplus d_3})\big)} \leq  \sqrt{ \big(1 - Z(W_{x,x \oplus d_1})\big)} + \sqrt{\big(1 - Z(W_{x \oplus d_1, x \oplus d_1 \oplus d_2 })\big)}.
\end{equation}

\noindent For $d \in \{d_1, d_2\}$, we have that $Z_d(W) \geq 1 - \epsilon \implies(1 - Z(W_{x, x\oplus d})) \leq 4 \epsilon, \forall x$. Then, from~(\ref{eq:trngle_bhatt}),
\begin{align}
&\sqrt{ \big(1 - Z(W_{x,x \oplus d_3})\big)}   \leq 2 (\sqrt{\epsilon_1} + \sqrt{\epsilon_2}), \forall x \nonumber \\
&\implies   Z_{d_3}(W)  \geq 1 - 4(\sqrt{\epsilon_1} + \sqrt{\epsilon_2})^2. \nonumber \qedhere
\end{align}
\end{proof}

We now define the partial channels of the non-binary input channel $W$.

\begin{definition}
(Partial channels). Consider $x  = x_1 x_2 \in \bar{P}_1 = \{00, 01, 10, 11\}$ is given as the channel input of $W$. We define the following three binary-input channels that are obtained by randomizing one bit of information from $x$,
{\small
\begin{align}
& W^{[1]}: x_1 \to y; \:W^{[1]}(y|0) = \frac{W(y|00) + W(y|01)}{2}, \: W^{[1]}(y|1) = \frac{W(y|10) + W(y|11)}{2}. \label{eq:partial_1} \\
& W^{[2]}: x_2 \to y; \: W^{[2]}(y|0) = \frac{W(y|00) + W(y|10)}{2}, \: W^{[2]}(y|1) = \frac{W(y|01) + W(y|11)}{2}. \label{eq:partial_2} \\
& W^{[3]}: x_1 \oplus x_2 \to y; \: W^{[3]}(y|0) = \frac{W(y|00) + W(y|11)}{2}, \: W^{[3]}(y|1) = \frac{W(y|01) + W(y|10)}{2}. \label{eq:partial_3}
\end{align}
}
\end{definition}
%\noindent If $u_1\:u_2 \in \{0, 1\}^2$ is input to $W$, input to partial channel $W^{[1]}$ is $u_1$, input to partial channel $W^{[2]}$ is $u_2$ and input to partial channel $W^{[3]}$ is parity of $u_1\:u_2$.  

\noindent In particular, the partial channel $W^{[1]}$ takes $x_1$ as input and randomizes $x_2$, the partial channel $W^{[2]}$ takes $x_2$ as input and randomizes $x_1$, and the partial channel $W^{[3]}$ takes $x_1 \oplus x_2$ as input and randomizes both $x_1$ and $x_2$, individually. For $ \{d_1, d_2 , d_3\} = \{1, 2, 3\} $, the above three definitions can be merged into the following
\begin{equation} \label{eq:def:partial_w_d}
W^{[d_1]}(y|0) = \frac{W(y|0) + W(y| d_1 )}{2}, \text{ and } W^{[d_1]}(y|1) = \frac{W(y|d_2) + W(y| d_3)}{2}.
\end{equation}
\medskip

We now prove several bounds relating $Z_d(W)$, $Z(W^{[d]})$ and $I(W)$.
\begin{lemma} \label{lem:inequal}
Given $\{d_1, d_2, d_3\} = \{1, 2, 3\}$, we have the following inequalities, which bear similarities to Lemmas 9 and 10 from \cite{PB12}:
\begin{enumerate}
 \item  $Z(W^{[d_1]})  \leq  Z_{d_2}(W) + Z_{d_3} (W)$.
 
 \item $Z(W^{[d_1]})  \geq  Z_{d_i}(W)$, where $Z_{d_i}(W) = \text{\rm{max}} (Z_{d_2}(W), Z_{d_3}(W))$.
 
 \item $I(W)  \leq  \frac{1}{3} \sum_{d \in \{1, 2, 3\}} \sqrt{1 - Z_d(W)^2} + \frac{1}{3} \sum_{d \in \{1, 2, 3\}} \sqrt{1 - Z(W^{[d]})^2}$.
  
\end{enumerate}
\end{lemma}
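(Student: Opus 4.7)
The plan is to handle the three inequalities separately. Parts 1 and 2 will both follow from expanding the definition of $Z(W^{[d_1]})$ using~(\ref{eq:def:partial_w_d}) and applying elementary inequalities on $\sqrt{(a+b)(c+d)}$; part 3 will rely on a chain rule decomposition of $I(W)$ combined with the standard binary bound $I(V) \leq \sqrt{1 - Z(V)^2}$.

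For part 1, I would start from
\[
2Z(W^{[d_1]}) = \sum_y \sqrt{(W(y|0) + W(y|d_1))(W(y|d_2) + W(y|d_3))},
\]
expand the product under the square root into four cross-terms, and apply the subadditivity $\sqrt{a+b+c+d} \leq \sqrt{a}+\sqrt{b}+\sqrt{c}+\sqrt{d}$ (valid since squaring the right-hand side produces the left-hand side plus non-negative cross-terms). Summing over $y$ yields the four Bhattacharyya parameters $Z(W_{0, d_2})$, $Z(W_{0, d_3})$, $Z(W_{d_1, d_2})$, $Z(W_{d_1, d_3})$. These regroup via the identity $Z_d(W) = \frac{1}{2}[Z(W_{0,d}) + Z(W_{x, x \oplus d})]$ for any $x \neq 0, d$, which follows from~(\ref{eq:bhattacharyya_d}), into $2(Z_{d_2}(W) + Z_{d_3}(W))$, giving the claim after dividing by $2$.

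For part 2, starting from the same expansion I would apply Cauchy-Schwarz in the form $\sqrt{(a+b)(c+d)} \geq \sqrt{ac} + \sqrt{bd}$. The pairing $(a,c) = (W(y|0), W(y|d_2))$, $(b,d) = (W(y|d_1), W(y|d_3))$ gives $Z(W^{[d_1]}) \geq Z_{d_2}(W)$ after summing over $y$; the alternative pairing gives $Z(W^{[d_1]}) \geq Z_{d_3}(W)$, and hence $Z(W^{[d_1]}) \geq \max(Z_{d_2}(W), Z_{d_3}(W))$.

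For part 3, the key identity I would establish is the chain rule
\[
I(W) = I(W^{[d]}) + I_d(W), \quad d \in \{1,2,3\}.
\]
Viewing the uniform input $X$ on $\bar{P}_1$ as a pair of uniform bits, and letting $D_d(X) \in \{0,1\}$ indicate whether $X \in \{0, d\}$ or $X \in \bar{P}_1 \setminus \{0,d\}$, the marginal channel $D_d(X) \to Y$ is exactly $W^{[d]}$. The usual chain rule $I(X;Y) = I(D_d(X); Y) + I(X; Y \mid D_d(X))$ then splits the conditional term as $\frac{1}{2}[I(W_{0,d}) + I(W_{d_2, d_3})]$, which equals $I_d(W)$ by the symmetry $I(W_{x,x'}) = I(W_{x',x})$ applied to the four pairs in~(\ref{eq:information_I_d}). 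Averaging the chain rule over $d \in \{1,2,3\}$ gives $I(W) = \frac{1}{3}\sum_d [I(W^{[d]}) + I_d(W)]$. I then bound each $I(W^{[d]})$ by the binary inequality $I(V) \leq \sqrt{1 - Z(V)^2}$, and similarly each $I(W_{x, x \oplus d})$ inside $I_d(W)$, concluding with Jensen's inequality for the concave function $\sqrt{1-t^2}$ on $[0,1]$ to replace the average of the $Z(W_{x, x \oplus d})$ by $Z_d(W)$. The main obstacle will be verifying that the two cosets arising from conditioning on $D_d(X)$ match exactly the Bhattacharyya pairs in $I_d(W)$; parts 1 and 2 are essentially routine once the square-root expansion is written out.
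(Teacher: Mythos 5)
Your proposal is correct and follows essentially the same route as the paper: parts 1 and 2 are the same square-root subadditivity and Cauchy--Schwarz/AM--GM manipulations, and part 3 rests on the same decomposition $I(W)=\tfrac13\sum_d I(W^{[d]})+\tfrac13\sum_d I_d(W)$. The only cosmetic difference is that you obtain that decomposition by explicitly stating and averaging the chain rule $I(W)=I(W^{[d]})+I_d(W)$, whereas the paper derives it in one direct algebraic computation; the underlying identity is identical.
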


\begin{proof}
Proof is given in Appendix~\ref{sec:proof-inequality}.
\end{proof}
%\section{Proof of lemma~\ref{lem:inequal}} \label{app:app_A}
%\noindent Finally, we prove the following lemma.
\begin{lemma} \label{lem:two_0_I(W)_I(W_3)}
Given $Z_{d_1}(W) \leq \epsilon$, $Z_{d_2}(W) \leq \epsilon$, and $Z_{d_3}(W) \geq 1-\epsilon$, with $\epsilon > 0$, and $\{d_1, d_2, d_3\} = \{1, 2, 3\}$, then
\begin{itemize}

\item[(i)] $I(W^{[d_3]}) \in  [1 - \text{ \rm{log}}_2(1 + 2\epsilon), 1] $.

\item[(ii)] $|I(W) - I(W^{[d_3]})|  \leq \Delta$, \text{ where } $\Delta = \sqrt{2\epsilon} + \text{ \rm{log}}_2(1 + 2\epsilon)$.

\end{itemize} 
\end{lemma}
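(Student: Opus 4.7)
The plan is to treat the two bounds in parts (i) and (ii) separately, relying on the three inequalities proved in Lemma~\ref{lem:inequal} together with the standard $I$--$Z$ bounds for the binary-input symmetric channel $W^{[d_3]}$.

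For part~(i), I would first invoke part~1 of Lemma~\ref{lem:inequal}, which gives $Z(W^{[d_3]}) \leq Z_{d_1}(W) + Z_{d_2}(W) \leq 2\epsilon$ under the hypotheses. Since $W^{[d_3]}$ is a binary-input channel, the standard analogue of inequality~(\ref{eq:I(W)_lower_bound}) reads $I(W^{[d_3]}) \geq \log_2 \frac{2}{1 + Z(W^{[d_3]})}$, which yields the desired lower bound $I(W^{[d_3]}) \geq 1 - \log_2(1 + 2\epsilon)$. The upper bound $I(W^{[d_3]}) \leq 1$ is immediate, since any binary-input symmetric channel satisfies $I \leq \log_2 2 = 1$.

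For part~(ii), I would first observe the one-sided inequality $I(W) \geq I(W^{[d_3]})$, which follows from the chain-rule identity $I(x;y) = I(x_{d_3};y) + I(x; y \mid x_{d_3})$ and the non-negativity of conditional mutual information (the input $x = x_1 x_2$ is uniform, $x_{d_3}$ is one of its linear components, and $I(W^{[d_3]})$ is the mutual information it carries). Hence $|I(W) - I(W^{[d_3]})| = I(W) - I(W^{[d_3]})$, and it remains to upper bound this difference. I would apply part~3 of Lemma~\ref{lem:inequal} and bound each of the six terms individually: from the hypotheses, $\sqrt{1 - Z_{d_3}(W)^2} \leq \sqrt{2\epsilon}$, while the two remaining $\sqrt{1 - Z_{d_j}(W)^2}$ for $j \in \{1,2\}$ are trivially $\leq 1$; by part~2 of Lemma~\ref{lem:inequal} both $Z(W^{[d_1]})$ and $Z(W^{[d_2]})$ are at least $Z_{d_3}(W) \geq 1 - \epsilon$, so the corresponding terms are also $\leq \sqrt{2\epsilon}$, and finally $\sqrt{1 - Z(W^{[d_3]})^2} \leq 1$. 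Summing produces $I(W) \leq 1 + \sqrt{2\epsilon}$, and subtracting the lower bound from part~(i) gives $I(W) - I(W^{[d_3]}) \leq \sqrt{2\epsilon} + \log_2(1 + 2\epsilon) = \Delta$, as required.

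No single step is a serious obstacle: once Lemma~\ref{lem:inequal} is available, the proof is essentially a careful bookkeeping exercise combined with the standard binary-input $I$--$Z$ bound. The one mildly delicate point is the pairing of terms in part~3 of Lemma~\ref{lem:inequal}: the three contributions bounded by $\sqrt{2\epsilon}$ must absorb the ``overshoot'' of $I(W)$ above $1$, and this works precisely because small $Z_{d_j}(W)$ for $j \in \{1,2\}$ is matched with large $Z(W^{[d_j]})$, while large $Z_{d_3}(W)$ is matched with small $Z(W^{[d_3]})$. This is exactly where the asymmetry between $d_3$ and $\{d_1, d_2\}$ in the hypotheses is consumed.
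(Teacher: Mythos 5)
Your proof is correct, and your handling of part~(ii) takes a genuinely different route from the paper. Part~(i) coincides with the paper's argument: apply point~1 of Lemma~\ref{lem:inequal} to get $Z(W^{[d_3]}) \leq 2\epsilon$, then use the binary-input bound $I(W_b) \geq 1 - \log_2(1 + Z(W_b))$. For part~(ii), the paper bounds $I(W)$ from \emph{both} above and below (the lower bound coming from (\ref{eq:I(W)_lower_bound}) with $Z(W) \leq \tfrac{1+2\epsilon}{3}$) and then takes the maximum of the two one-sided gaps to get $\Delta$. You instead observe that $I(W) \geq I(W^{[d_3]})$ always holds, so the absolute value resolves to $I(W) - I(W^{[d_3]})$, after which the same upper bound $I(W) \leq 1 + \sqrt{2\epsilon}$ (from point~3 of Lemma~\ref{lem:inequal}, applied exactly as in the paper) combined with the lower bound from part~(i) gives $\Delta$ directly. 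This is arguably cleaner: it makes the sign of the difference explicit and dispenses with the paper's separate lower bound on $I(W)$, which is anyway dominated by the other side in the max (and, incidentally, appears to carry a small constant slip in the paper — $\log_2\frac{4}{2+\epsilon}$ where the hypotheses give $\log_2\frac{4}{2+2\epsilon}$). The one step you should spell out a little more is the chain-rule monotonicity: one must identify $I(W^{[d]})$ with $I(X_d; Y)$ under uniform input, and note that for $\{d_3, d_3'\}$ a two-element subset of $\{1,2,3\}$, the pair $(X_{d_3}, X_{d_3'})$ is an invertible linear transform of $(X_1, X_2)$, so $I(X_1 X_2; Y) = I(X_{d_3}; Y) + I(X_{d_3'}; Y \mid X_{d_3}) \geq I(X_{d_3}; Y) = I(W^{[d_3]})$; this is routine but is the load-bearing observation of your alternative route.
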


\begin{proof}

\textbf{Point $(i)$}: Since $W^{[d_3]}$ is a binary-input channel, $I(W^{[d_3]}) \leq 1$. From point $1$ of Lemma~\ref{lem:inequal}, we have that 
\begin{align} \label{eq:Z_d3}
0 \leq Z(W^{[d_3]})  \leq Z_{d_1}(W) + Z_{d_2}(W) \leq 2\epsilon.
\end{align}

\noindent Using the inequality $I(W_b) \geq 1 - \text{ log}_2(1 + Z(W_b))$ for any binary-input channel $W_b$ from~\cite{arikan09}, we can lower bound $I(W^{[d_3]})$ as follows

\begin{equation} \label{eq:cap_W^3}
I(W^{[d_3]}) \geq 1 - \text{ log}_2(1 + 2\epsilon).
\end{equation}
Hence,
\begin{equation} \label{eq:interval_IW3}
 I(W^{[d_3]}) \in [1 - \text{ log}_2(1 + 2\epsilon), 1].
\end{equation}

\medskip \noindent \textbf{Point $(ii)$}: From point $2$ of Lemma~\ref{lem:inequal}, we have that
\begin{equation} \label{eq:Z_d1_2}
Z(W^{[d_i]}) \geq Z_{d_3}(W) \geq 1 - \epsilon,  \forall d_i = d_1, d_2.
\end{equation}
%Recall that, for $W^1$, $(d_1 = 1, d_2 = 2)$ and for $W^2$, $(d_1 = 2, d_2 = 1)$. Then using point $2$ of lemma~\ref{lem:inequal}, we have
%\begin{equation}
%Z(W^1) \geq Z_{11}(W)\geq 1 - \epsilon_3  \text{ and } Z(W^2) \geq Z_{11}(W) \geq 1 - \epsilon_3
%\end{equation}
In point $3$ of Lemma~\ref{lem:inequal}, substituting the lower bound on $Z_d(W)$, \emph{i.e.}, $Z_{d_1}(W) = Z_{d_2}(W) = 0, Z_{d_3}(W) = 1 - \epsilon$, and the lower bound on $Z(W^{[d]})$ from~(\ref{eq:Z_d3}) and~(\ref{eq:Z_d1_2}), \emph{i.e.},  $Z(W^{[d_1]}) = Z(W^{[d_2]}) = 1 - \epsilon, Z(W^{[d_3]}) = 0$, we have the following upper bound on $I(W)$,
\begin{equation} \label{eq:upper_bound_IW}
I(W) \leq 1 + \sqrt{2\epsilon}.
\end{equation}

\noindent From inequality in~(\ref{eq:I(W)_lower_bound}),  $I(W)$ can also be lower bounded, as below
\begin{align}
I(W) &\geq \text{ log}_2 \frac{4}{2 + \epsilon} \nonumber \\
     & \geq 1 - \text{ log}_2 ( 1 + \frac{\epsilon}{2} ). \label{eq:lower_bound_IW}
\end{align}
From~(\ref{eq:upper_bound_IW}) and~(\ref{eq:lower_bound_IW}), we have that 
\begin{equation} \label{eq:interval_IW}
I(W) \in [1 - \text{ log}_2 ( 1 + \frac{\epsilon}{2}), 1 + \sqrt{2\epsilon}].
\end{equation}
From~(\ref{eq:interval_IW3}) and~(\ref{eq:interval_IW}), we have that
\begin{align}
|I(W) - I(W^{[d_3]})| & \leq \Delta,
%\implies |I(W) - I(W^{d_3})| & \xrightarrow[\epsilon_1, \epsilon_2, \epsilon_3 \rightarrow 0]{} 0 
\end{align}
where $\Delta = \text{max}\left(\sqrt{2\epsilon} + \text{ log}_2(1 + 2\epsilon),  \text{ log}_2 ( 1 + \frac{\epsilon}{2})) = \sqrt{2\epsilon} + \text{ log}_2(1 + 2\epsilon\right) $.
\end{proof}
 We are now in a position to define the noiseless, half-noisy, and noisy channels.
\begin{definition}
Given $\delta >0$, a channel $W$ is said to be:
\begin{itemize}
\item[(i)] $\delta$-noiseless if $Z_1(W) < \delta, Z_2(W) < \delta$, and $Z_3(W) < \delta$.

\item[(ii)] $\delta$-noisy if $Z_1(W) > 1-\delta$, and $Z_2(W) > 1-\delta$.

\item[(iii)]  $\delta$-half-noisy of type $d_3$, if  $Z_{d_1}(W) < \delta, Z_{d_2}(W) < \delta$, and $Z_{d_3}(W) > 1-\delta$, with $\{d_1, d_2, d_3\} = \{1, 2,3\}$.
 \end{itemize}
\end{definition}
\noindent Recall that $W$ takes as input two bits $x_1x_2$, where $x_1, x_2$, and $x_1 \oplus x_2$ are inputs to the partial channels $W^{[1]}, W^{[2]}$, and $W^{[3]}$, respectively.

\medskip \noindent If $W$ is such that $Z_1(W) < \delta, Z_2(W) < \delta, \text{ and }Z_3(W) < \delta$, using~(\ref{eq:I(W)_lower_bound}), we have that  $I(W) \to 2$ as $\delta \to 0$. Therefore, we call $W$, $\delta$-noiseless. 

\medskip \noindent If $W$ is such that $Z_1(W) > 1-\delta, Z_2(W) > 1-\delta$, using~(\ref{eq:I(W)_upper_bound}) and Lemma~\ref{lem:two_1_imp_thre_1}, we have that $I(W) \to 0$ as $\delta \to 0$.  Therefore, we call $W$, $\delta$-noisy.

\medskip \noindent If $W$ is such that $Z_{d_1}(W) \leq \delta$, $Z_{d_2}(W) \leq \delta$, and $Z_{d_3}(W) \geq 1 - \delta$, with $\{d_1, d_2, d_3\} = \{1, 2,3\}$ and $\delta \to 0$, from point $(i)$ of Lemma~\ref{lem:two_0_I(W)_I(W_3)}, the binary-input partial channel $W^{[d_3]}$ tends to be noiseless, that is, $I(W^{[d_3]}) \to 1$. We may take $d_3 = 1$, without loss of generality. Then, we can reliably transmit one bit of information, namely $x_1$, the input to the partial channel $W^{[1]}$, using $W$. Moreover, from point $(ii)$ of Lemma~\ref{lem:two_0_I(W)_I(W_3)}, $I(W^{[d_3]}) \to I(W)$. Thus, the remaining one bit from the input of $W$, namely $x_2$, the input to the partial channel $W^{[2]}$, is completely randomized or erased. Therefore, we call $W$, ``$\delta$-half-noisy of type $d_3$''.

\medskip

\section{Multilevel polarization} \label{sec:multilevel}

%\subsection{First Construction}
In this section, we show that the CMP channel polarize into noiseless, half-noisy or noisy channels, under the recursive channel combining and splitting procedure, using a fixed two qubit Clifford as channel combining operation. We take the following two-qubit gate as channel combining operation. 
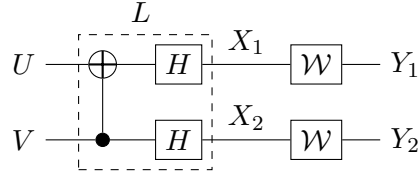
\begin{figure}[H]
\[
\begin{tikzpicture}
\draw
(0, 0) node[not](n){} 
(n)++(1, 0) node[draw](hd1){$H$} 
(hd1.east)++ (1.5, 0) node[draw](can1){$\mathcal{W}$} 
(can1.east)to ++(0.5,0)node[right]{$Y_1$}
(hd1.east) to node[above](){$X_1$} (can1)
(n) to ++(-0.75, 0) node[left](){$U$}
(n) to (hd1.west)
(n) ++ (0, -1) node[phase](c){} 
(c) ++ (1, 0)  node[draw](hd2){$H$} 
(c) to (hd2.west) 
(hd2.east) ++(1.5, 0) node[draw](can2){$\mathcal{W}$}
(hd2.east) to node[above](){$X_2$} (can2)
(can2.east)to ++(0.5,0) node[right]{$Y_2$}
(c) to ++(-0.75, 0) node[left](){$V$}
(c) to (n)
(hd1) ++ (-0.5, 0.7) node[]{$L$}
; 

\node[draw, dashed, fit=(hd1) (hd2) (c) (n)] (g) {};

\end{tikzpicture}
\]
\caption{ Two-qubit Clifford gate $L$. Here $H$ is the Hadamard gate.}
\label{fig:chan_com_oper}
\end{figure}
 The above two-qubit Clifford unitary $L$ generates the same permutation $\Gamma$ on $\bar{P}_1 \times \bar{P}_1$ as the Clifford $L_{3,3}$ from~\cite[Figure $4$]{DGMS19}. As only permutation $\Gamma$ matters for the polarization of Pauli channels, the gate $L$ is equivalent to $L_{3,3}$ for our purposes.  Note that the gate $L$ applies the same single qubit gate, namely the Hadamard gate $H$, on both qubits after the CNOT gate. Also, it is important to mention that multilevel polarization may not happen for all the Cliffords given in~\cite[Figure $4$]{DGMS19}.

\medskip
As mentioned before, we will use the polar code construction on the classical counterpart of the CMP channel, \emph{i.e.}, $W \eqdef \mathcal{W}^\#$, to prove the multilevel polarization. The channel combining operation for two copies of $W$ is $\Gamma(L)$, that is, the permutation generated by the conjugate action of $L$ on $\bar{P}_1 \times \bar{P}_1$, which is depicted in the following figure\footnote{Recall that $00 \equiv I, 01 \equiv Z, 10 \equiv X, 11 \equiv Y$.},

\begin{figure}[H]
\[
\begin{tikzpicture}
\draw
(0, 0) node[not](n){} 
(n)++(1, 0) node[draw](hd1){$H$} 
(hd1.east)++ (2.5, 0) node[draw](can1){$W$} 
(can1.east)to ++(0.5,0)node[right]{$y_1$}
(hd1.east) to node[above](){$u_2, u_1 \oplus v_1$} (can1)
(n) to ++(-0.75, 0) node[left](){$u_1, u_2$}
(n) to (hd1.west)
(n) ++ (0, -1) node[phase](c){} 
(c) ++ (1, 0)  node[draw](hd2){$H$} 
(c) to (hd2.west) 
(hd2.east) ++(2.5, 0) node[draw](can2){$W$}
(hd2.east) to node[above](){$u_2 \oplus v_2, v_1$} (can2)
(can2.east)to ++(0.5,0) node[right]{$y_2$}
(c) to ++(-0.75, 0) node[left](){$v_1, v_2$}
(c) to (n)
(hd1) ++ (-0.5, 0.7) node[]{$\Gamma(L)$}
; 
\node[draw, dashed, fit=(hd1) (hd2) (c) (n)] { };
\end{tikzpicture}
\]
\caption{ The permutation $\Gamma(L)$. To avoid any possible confusion, two bits of the input and output symbols are separated here by a comma.
}
\label{fig:perm_comb_2}
\end{figure}
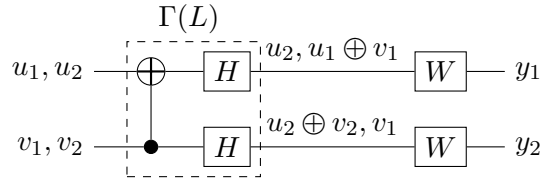
\noindent From~(\ref{eq:classical_com_1}) and~(\ref{eq:classical_com_2}), the virtual channels obtained after the channel combining and splitting procedure on two copies of $W$, using $\Gamma(L)$ as channel combining operation, are given by
% shown in figure~\ref{fig:my_figure2}. Define following two virtual channels:\\ \\
\begin{align}
(W \boxast W)(y_1, y_2|u_1,u_2) &= \frac{1}{4} \sum_{v_1,v_2} W(y_1|u_2 , u_1 \oplus v_1) W(y_2|u_2 \oplus v_2 , v_1), \label{eq:bad_channel_W} \\
 (W \varoast W)(y_1, y_2, u_1,u_2|v_1,v_2) &= \frac{1}{4} W(y_1|u_2 , u_1 \oplus v_1) W(y_2|u_2 \oplus v_2 , v_1), \label{eq:good_channel_W}
\end{align}
where $u_1,u_2, v_1,v_2 \in \{0, 1\}$. From the chain rule of mutual information, we have that
\begin{equation} \label{eq:I_preserve}
I(W \boxast W) + I(W \varoast W) = 2 I(W),
\end{equation}
which means that the mutual information is preserved under the above channel combining and splitting procedure.

\medskip We now give the following two lemmas.
\begin{lemma} \label{lem:inequality_bht}
The following equalities hold for the good channel $W \varoast W$,
\begin{align}
Z_1( W \varoast W ) &= Z_2(W). \label{eq:ineq_Zgood_1} \\
Z_2(W \varoast W) &=Z_1(W)^2. \label{eq:ineq_Zgood_2}
\end{align}
\end{lemma}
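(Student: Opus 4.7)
The plan is to verify both equalities by directly expanding the Bhattacharyya parameters from their definitions in~(\ref{eq:def_Z_d}) and~(\ref{eq:bhattacharyya1}), using the explicit form of the good channel given in~(\ref{eq:good_channel_W}). To fix conventions, I recall that $v=v_1v_2\in\bar{P}_1$ corresponds to the integer value $2v_1+v_2$, so that $d=1\equiv 01$ flips only the second bit $v_2$, whereas $d=2\equiv 10$ flips only the first bit $v_1$. This is the observation that decides which $W$-factor in~(\ref{eq:good_channel_W}) is affected by the input shift.

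For~(\ref{eq:ineq_Zgood_1}), I would compute $Z\bigl((W\varoast W)_{(v_1,v_2),(v_1,v_2\oplus 1)}\bigr)$. Since only $v_2$ is flipped, the factor $W(y_1\mid u_2,u_1\oplus v_1)$ appears identically in both inputs and pulls outside the square root, whereas $W(y_2\mid u_2\oplus v_2,v_1)$ changes to $W(y_2\mid u_2\oplus v_2\oplus 1,v_1)$. Summing the common factor over $y_1$ gives $1$, summing over $u_1$ gives a factor $2$, and the change of variable $b=u_2\oplus v_2$ produces two equal contributions; together with the $1/4$ prefactor from~(\ref{eq:good_channel_W}) these constants cancel. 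What remains is $\sum_{y_2}\sqrt{W(y_2\mid 0,v_1)W(y_2\mid 1,v_1)}=Z(W_{v_1,v_1\oplus 2})$, independent of $v_2$. Averaging over $(v_1,v_2)$ yields $\tfrac12[Z(W_{0,2})+Z(W_{1,3})]$, which is $Z_2(W)$ by~(\ref{eq:bhattacharyya_d}) with $d=2$ and $x''=1$.

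For~(\ref{eq:ineq_Zgood_2}), the flip $v_1\mapsto v_1\oplus 1$ touches both $W$-factors, so the Bhattacharyya factorises as a product of two independent sums over $(y_1,u_1)$ and over $y_2$. In the first factor, the substitution $u_1\mapsto u_1\oplus v_1$ erases the $v_1$-dependence, and summing over $y_1,u_1$ yields $2\,Z(W_{(u_2,0),(u_2,1)})$, a Bhattacharyya between inputs differing by $d=1$. The second factor contributes directly $Z(W_{(u_2\oplus v_2,v_1),(u_2\oplus v_2,v_1\oplus 1)})$, again of $d=1$ type. Summing over $u_2\in\{0,1\}$ and then averaging over $(v_1,v_2)$, the double sum decouples into a product, and each factor reconstructs $Z_1(W)$ via~(\ref{eq:bhattacharyya_d}) with $d=1$ and $x''=2$; this yields $Z_1(W)^2$. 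The quadratic behaviour comes precisely from the product structure, which would not appear if only one $W$-factor were affected by the flip.

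The main obstacle is book-keeping: one must track which bit of the input is flipped by each $d$, and group the summations so that the first identity collapses to a single $Z$-sum of $W$ while the second genuinely factorises into an independent product. Once the order of summations is chosen correctly, both computations reduce to mechanical manipulations of Bhattacharyya sums, with the numerical prefactors $1/4$, $2$, $2$ conveniently cancelling in each case.
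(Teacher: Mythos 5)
Your proposal is correct and proceeds by the same direct expansion as the paper's proof (Appendix~\ref{sec:proof_inequality_bht}): unfold the good channel from~(\ref{eq:good_channel_W}), observe that the $d=1$ flip (of $v_2$) touches only the second $W$-factor while the $d=2$ flip (of $v_1$) touches both, and use the substitutions $u_1\mapsto u_1\oplus v_1$ and $u_2\mapsto u_2\oplus v_2$ to decouple the sums. The only cosmetic difference is that you evaluate the per-pair quantity $Z\bigl((W\varoast W)_{v,\,v\oplus d}\bigr)$ first and then average over $v$, whereas the paper carries the full sum over all variables throughout.
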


\begin{proof}
Proof is given in Appendix~\ref{sec:proof_inequality_bht}.
\end{proof}

\begin{lemma} \label{lem:partial_canvirtual_inequality}
The following inequalities hold for the partial channels, $(W \boxast W)^{[i]}$ and $(W \varoast W)^{[i]}$, for all $ i \in \{1, 2\}$,
\begin{align}
Z\big((W \boxast W)^{[1]}\big) &\leq 2Z(W^{[2]}) - Z(W^{[2]})^2.  \label{eq::ineq_1}\\
Z\big((W \boxast W)^{[2]}\big) &= Z(W^{[1]}).  \label{eq::ineq_2} \\
Z\big((W \varoast W)^{[1]}\big) &= Z_1(W) Z(W^{[2]}). \label{eq::ineq_3}  \\
Z\big((W \varoast W)^{[2]}\big) &\leq Z(W^{[1]}).  \label{eq::ineq_4} 
\end{align}
\end{lemma}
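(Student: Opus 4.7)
The plan is to compute each of the four partial channels explicitly by marginalising the joint transition probabilities in~(\ref{eq:bad_channel_W}) and~(\ref{eq:good_channel_W}) according to~(\ref{eq:def:partial_w_d}), and then either evaluate the Bhattacharyya parameter directly (for the two equalities) or reduce to the classical Ar\i kan bound, resp.\ to data processing (for the two inequalities). The computational trick used throughout is the substitution $u_1 \oplus v_1 \to w$ or $u_2 \oplus v_2 \to w'$, which allows a sum such as $\sum_{u_1} W(y \mid u_2, u_1 \oplus v_1)$ to be replaced by $2 W^{[1]}(y \mid u_2)$, and analogously by $2 W^{[2]}(y \mid b)$ when the first input bit is summed out.

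For the bad channel, averaging $(W \boxast W)(y_1, y_2 \mid u_1, u_2)$ over $u_1$ turns the first factor into $2 W^{[1]}(y_1 \mid u_2)$, while the remaining sum over $v_1, v_2$ of the second factor becomes $4 P^{[1]}(y_2)$, where $P^{[1]}$ is the output distribution of $W^{[1]}$ on the uniform input. Hence $(W \boxast W)^{[2]}(y_1, y_2 \mid u_2) = W^{[1]}(y_1 \mid u_2)\, P^{[1]}(y_2)$; since the $y_2$-part is independent of $u_2$, this gives $Z((W \boxast W)^{[2]}) = Z(W^{[1]})$, which is~(\ref{eq::ineq_2}). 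Averaging over $u_2$ instead produces $(W \boxast W)^{[1]}(y_1, y_2 \mid u_1) = \tfrac{1}{2} \sum_{v_1} W^{[2]}(y_1 \mid u_1 \oplus v_1)\, W^{[2]}(y_2 \mid v_1)$, which is precisely the standard binary-input Ar\i kan bad-channel construction applied to $W^{[2]}$, so the classical bound $Z(W' \boxast W') \le 2 Z(W') - Z(W')^2$ from~\cite{arikan09} immediately yields~(\ref{eq::ineq_1}).

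For the good channel, marginalising $v_2$ in $(W \varoast W)$ replaces the second factor by $W^{[2]}(y_2 \mid v_1)$ and leaves the first factor untouched, so the Bhattacharyya factorises. The $y_2$-sum contributes $Z(W^{[2]})$, while the sum $\tfrac{1}{4} \sum_{u_1, u_2, y_1} \sqrt{W(y_1 \mid u_2, u_1)\, W(y_1 \mid u_2, u_1 \oplus 1)}$ equals $\tfrac{1}{2}[Z(W_{00, 01}) + Z(W_{10, 11})] = Z_1(W)$ by~(\ref{eq:bhattacharyya_d}), since flipping the second input bit corresponds to the Pauli difference $d = 01 \equiv 1$. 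Multiplying the two factors yields~(\ref{eq::ineq_3}). For~(\ref{eq::ineq_4}) I invoke monotonicity of the Bhattacharyya parameter under output processing (a standard Cauchy--Schwarz argument): marginalising $(y_1, u_1)$ out of $(W \varoast W)^{[2]}$ yields the sub-channel $\widetilde{W}(y_2, u_2 \mid v_2) = \tfrac{1}{2} W^{[1]}(y_2 \mid u_2 \oplus v_2)$, whose Bhattacharyya equals $Z(W^{[1]})$ by a direct two-line calculation; since discarding output symbols can only increase $Z$, the claim follows.

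The main obstacle is the careful bookkeeping of the two-bit Pauli indices under the combining permutation $\Gamma(L)$, in particular identifying which bit-flip of the input to $W$ corresponds to which Pauli difference $d$ when collecting terms in~(\ref{eq::ineq_3}). Once the marginalisations and reindexings are set up correctly, the remaining steps are routine algebra plus two well-known facts (the Ar\i kan inequality and Bhattacharyya monotonicity).
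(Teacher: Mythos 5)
Your proof is correct and follows essentially the same route as the paper: both compute the partial channels explicitly from~(\ref{eq:bad_channel_W})--(\ref{eq:good_channel_W}) via index substitutions, then obtain~(\ref{eq::ineq_2}) and~(\ref{eq::ineq_3}) by direct factorisation of the Bhattacharyya sum, (\ref{eq::ineq_1}) by recognising the Ar\i kan binary $\boxast$ construction applied to $W^{[2]}$, and~(\ref{eq::ineq_4}) by a Cauchy--Schwarz/data-processing step. The only cosmetic difference is in~(\ref{eq::ineq_4}): the paper marginalises only $u_1$ and keeps $(y_1,y_2,u_2)$ as the coarsened output, whereas you discard $(y_1,u_1)$ together; since $y_1$ becomes uninformative about $v_2$ once $u_1$ is averaged out, both give the identical bound $Z(W^{[1]})$.
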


\begin{proof}
Proof is given in Appendix~\ref{sec:proof_inequality_partial_bht}.
\end{proof}

\medskip\noindent We define $ W^0 \eqdef W \boxast W$ and  $ W^1 \eqdef W \varoast W $, and consider the recursive application of channel combining and splitting procedure,  $W \mapsto (W^0, W^1)$. After two steps of polarization, we have a set of four virtual channels, $(W^{i_1})^{i_2}, \forall i_1i_2 \in \{0, 1\}^2$. Similarly, after $n$ polarization steps, we have the following set of $2^n$ virtual channels,
\begin{equation} \label{eq:virtual_channels}
 W^{i_1 \cdots i_{n}} \eqdef (W^{i_1 \cdots i_{n-1}})^{i_n}, \forall i_1 \cdots i_n \in \{0, 1\}^n. 
\end{equation}

\medskip We now state the multilevel polarization theorem.
\begin{theorem} \label{thm:main_multi_level}
Let $\{W^{i_1 \cdots i_n} | i_1 \cdots i_n \in \{0, 1\}^n \}$ be the set of virtual channels defined in~(\ref{eq:virtual_channels}), when the permutation $\Gamma(L)$ is used as channel combining operation. Then, for any $\delta > 0$, 
{ \small $$\lim_{n \to \infty} \frac{\#\{i_1 \cdots i_n \in \{0, 1\}^n  \mid  W^{i_1 \cdots i_n} \text{ is either } \delta\text{-noiseless, } \delta\text{-half-noisy of type 1 or 2, or } \delta \text{-noisy} \}}{2^n} = 1.$$}
\end{theorem}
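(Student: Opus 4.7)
\emph{Proof plan.} The strategy is a martingale convergence argument on the symmetric mutual information, refined by analysing the explicit dynamics on the Bhattacharyya parameters $Z_1, Z_2, Z_3$ provided by Lemmas~\ref{lem:inequality_bht} and~\ref{lem:partial_canvirtual_inequality}.

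Let $B_1, B_2, \ldots$ be i.i.d.\ Bernoulli$(1/2)$ and set $W_n = W^{B_1 \cdots B_n}$, the random virtual channel at level~$n$. The chain-rule identity~(\ref{eq:I_preserve}) shows that $I_n := I(W_n)$ is a bounded $[0,2]$-valued martingale, so by Doob's theorem it converges almost surely, and the usual $L^2$-summability of the martingale differences yields $I(W_n^1) - I(W_n) \to 0$ almost surely. Combining $Z_1(W \varoast W) = Z_2(W)$ and $Z_2(W \varoast W) = Z_1(W)^2$ from Lemma~\ref{lem:inequality_bht}, two consecutive good steps give the squaring relations $Z_1(W_{n+2}) = Z_1(W_n)^2$ and $Z_2(W_{n+2}) = Z_2(W_n)^2$. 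Because almost every realization of $(B_n)$ contains arbitrarily many disjoint $1$-pairs, this repeated squaring, combined with the a.s.\ convergence of $I_n$ and the capacity--Bhattacharyya bounds~(\ref{eq:I(W)_lower_bound})--(\ref{eq:I(W)_upper_bound}), forces $Z_1(W_n)$ and $Z_2(W_n)$ to converge almost surely to an element of $\{0, 1\}$. Lemma~\ref{lem:two_1_imp_thre_1} and point~3 of Lemma~\ref{lem:inequal} (together with $I(W_n^1) - I(W_n) \to 0$, which pins down the remaining cases via the $I$--$Z$ inequality) then force $Z_3(W_n)$ to converge almost surely to an element of $\{0, 1\}$ as well.

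By Lemma~\ref{lem:two_1_imp_thre_1}, no triple in $\{0,1\}^3$ can contain exactly two $1$'s, so the only possible limits for $(Z_1^\infty, Z_2^\infty, Z_3^\infty)$ are $(0,0,0)$, $(1,1,1)$, $(1,0,0)$, $(0,1,0)$, and $(0,0,1)$, corresponding to $\delta$-noiseless, $\delta$-noisy, and $\delta$-half-noisy of types~$1$,~$2$, and~$3$, respectively. To exclude type~$3$, assume $W_n$ is $\delta$-half-noisy of type~$3$. Then $Z_3(W_n) \geq 1 - \delta$ and by point~$2$ of Lemma~\ref{lem:inequal} one has $Z(W_n^{[2]}) \geq Z_3(W_n) \geq 1 - \delta$. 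Equation~(\ref{eq::ineq_3}) of Lemma~\ref{lem:partial_canvirtual_inequality} then gives $Z((W_n \varoast W_n)^{[1]}) = Z_1(W_n)\, Z(W_n^{[2]}) \leq \delta$, and re-applying point~$2$ of Lemma~\ref{lem:inequal} to $W_n \varoast W_n$ yields $Z_3(W_n \varoast W_n) \leq Z((W_n \varoast W_n)^{[1]}) \leq \delta$. Combined with $Z_1(W_n \varoast W_n) \leq \delta$ and $Z_2(W_n \varoast W_n) \leq \delta^2$ from Lemma~\ref{lem:inequality_bht}, this shows that a good step converts a $\delta$-half-noisy type-$3$ channel into a $\delta$-noiseless channel. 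Since $B_n = 1$ infinitely often almost surely, type~$3$ cannot be an a.s.\ limit, and converting the resulting a.s.\ convergence of $W_n$ into a statement about the fraction of virtual channels at level~$n$ (via dominated convergence) yields the theorem.

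The main obstacle is to turn the good-branch squaring of $Z_1$ and $Z_2$ into pointwise a.s.\ convergence of the full triple $(Z_1(W_n), Z_2(W_n), Z_3(W_n))$ into $\{0, 1\}^3$: the bad-branch dynamics are accessed only indirectly, through the martingale of $I_n$ and the $I$--$Z$ bounds, so consistently combining these ingredients across all three $Z_d$'s, and quantifying the corresponding $\delta$'s, is the most delicate technical step.
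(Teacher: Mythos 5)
Your outline is a genuinely different route from the paper's, but it has a genuine gap that you yourself flag, and that gap is the heart of the problem, not a technicality to be quantified later.

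Your engine is the bounded martingale $I_n = I(W_n)$, combined with the exact squaring identities $Z_1(W^{11}) = Z_1(W)^2$, $Z_2(W^{11}) = Z_2(W)^2$ on good--good steps. The difficulty is that neither $Z_1(W_n)$ nor $Z_2(W_n)$ is itself a (super)martingale, and the bad-branch dynamics on these quantities are not controlled by any inequality available in the paper: Lemma~\ref{lem:partial_canvirtual_inequality} gives nothing of the form $Z_d(W\boxast W) \le f(Z_d(W))$. Convergence of $I_n$ constrains only the sum $Z_1 + Z_2 + Z_3$ (through the $I$--$Z$ inequalities~(\ref{eq:I(W)_lower_bound})--(\ref{eq:I(W)_upper_bound})), so it cannot rule out realizations where $Z_1(W_n)$ and $Z_2(W_n)$ oscillate in an anti-correlated way with $I_n$ nevertheless converging (say to $1$, compatible with a half-noisy limit). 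The infinitely-many-$11$-blocks argument then shows that \emph{if} $Z_d(W_n)$ converges it must converge to $\{0,1\}$, but it does not produce the convergence itself. So the final sentence of your second paragraph, ``forces $Z_1(W_n)$ and $Z_2(W_n)$ to converge almost surely to an element of $\{0,1\}$,'' is not established.

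The paper's resolution is to track a different pair of processes, the partial-channel Bhattacharyya parameters $Z(W_m^{[1]})$ and $Z(W_m^{[2]})$. Unlike the raw $Z_d$'s, these \emph{are} provably bounded supermartingales (Lemma~\ref{lem:supermartingale}, via the two-step transform $W \mapsto (W^{00}, W^{01}, W^{10}, W^{11})$ and the inequalities~(\ref{eq::ineq_1})--(\ref{eq::ineq_4})), and they satisfy the squaring upgrade $Z(W_{m+1}^{[d]}) \le Z(W_m^{[d]})^2$ with positive probability, so the standard polarization lemma (Lemma~\ref{lem:pol_cond}) applies directly and yields a.s.\ convergence to $\{0,1\}$. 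The link to the $Z_d$'s is then obtained through the bound $Z_{d'}(W), Z_3(W) \le Z(W^{[d]})$ from Lemma~\ref{lem:inequal} point~$2$, plus a Borel--Cantelli argument (Lemma~\ref{lem:sub_set}, point $(ii).(b)$) to handle the high end. This is the missing ingredient in your plan.

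A side remark: your direct argument that a $\delta$-half-noisy type-$3$ channel is converted to a $\delta$-noiseless channel by a single good step is correct and clean (it combines~(\ref{eq:ineq_Zgood_1}), (\ref{eq:ineq_Zgood_2}), (\ref{eq::ineq_3}) and Lemma~\ref{lem:inequal} point~$2$), and it is a nice explicit explanation of why type~$3$ never appears in the limit --- the paper only establishes this implicitly, through the form of the events $S_d$, $T_d$. But this observation does not repair the main gap: you still need an a.s.\ convergence mechanism for the $Z$-triple, and for that you need the $Z(W^{[d]})$ supermartingales or something equivalent.
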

Note that it is sufficient to prove the above theorem assuming that $n$ goes to infinity through even values $2, 4, 6, \dots $. Indeed if the above theorem holds for $n$ going to infinity through even values, we can set $W = W^{i_1}$, for all $i_1 \in \{0, 1\}$, and then it follows that it also holds for $n$ going to infinity through odd values. Therefore, from now on, we  assume that $n = 2m$.

\medskip In~(\ref{eq::ineq_1})-(\ref{eq::ineq_4}), the upper bound on $Z({W^{i_1}}^{[d]})$, for any $i_1 \in \{0, 1\}$ and $d \in \{1, 2\}$, is a function of $Z(W^{[d']})$, such that $\{d,d'\} = \{1, 2\}$. Therefore, applying the transform $ W \to (W^0, W^1)$ twice, we get an upper bound on $Z({W^{i_1i_2}}^{[d]}), \forall i_1i_2 \in \{0, 1\}^2$, which is a function of $Z(W^{[d]})$. For this reason, it is convenient to consider even steps of polarization, \emph{i.e.}, $n =2m$, and use $W\to  (W^{00}, W^{01}, W^{10}, W^{11})$ as our basic transform for recursion. For any given sequence $i_1 \cdots i_n \in \{0, 1\}^n$, we write $i_1 \cdots i_n = \omega_1 \cdots \omega_m$, such that $\omega_k = i_{2k-1} i_{2k} \in \{0,1\}^2, \forall k > 0$. 

\medskip To prove Theorem~\ref{thm:main_multi_level}, we will express the limit therein as the probability of an event on a probability space. Therefore, suppose that $\{B_i: i= 0, 1, \dots \infty \}$ is a sequence of random i.i.d variables defined on a probability space $(\Omega, \mathcal{F}, P)$, where each $B_i$ takes values in $\{0, 1\}^2$ with equal probability, meaning that $P(B_i = 00) = P(B_i = 01) = P(B_ i= 10) = P(B_i = 11) =  \frac{1}{4}$. Let ${\cal F}_0 = \{ \phi, \Omega\}$ be the trivial $\sigma$-algebra and ${\cal F}_m$, $m\geq 1$ be the $\sigma$-field generated by $(B_1, \dots, B_m)$.
Define a random sequence of channels $\{W_m: m \geq 0\}$ on the probability space, such that $W_0 = W$, and at any time $m \geq 1$, $W_m = W_{m-1}^{\omega_m}$, where $\omega_m \in \{0, 1\}^2$ is the value of $B_m$. Therefore, if $B_1 = \omega_1, B_2 =\omega_2, \dots, B_m = \omega_m$, we have that  $W_m = W^{\omega_1 \cdots \omega_m}$.

%$B_1B_2 \cdots B_m = \omega_1 \omega_2 \cdots \omega_m$,

\medskip\noindent For  any $0< \delta < \frac{1}{2}$, we define the following events on probability space,
\begin{align}
A &= \{\omega \in \Omega: \exists m_0, \forall  m \geq m_0, W_m \text{ is } \delta\text{-noiseless}  \}. \label{eq:event_A} \\
B &= \{\omega \in \Omega:  \exists  m_0, \forall m \geq m_0,  W_m \text{ is } \delta\text{-half-noisy of type } 1 \}. \label{eq:event_B} \\
C &= \{\omega \in \Omega:  \exists m_0, \forall m \geq m_0, W_m \text{ is } \delta\text{-half-noisy of type } 2 \}. \label{eq:event_c}\\
D &= \{\omega \in \Omega:  \exists m_0, \forall m \geq m_0, W_m \text{ is } \delta\text{-noisy} \}. \label{eq:event_D}
\end{align}

\noindent The intersection of any two of the above sets is the null set. Note that the limit in Theorem~\ref{thm:main_multi_level} is equal to $P(A \cup B \cup C \cup D)$, hence, in other words, Theorem~\ref{thm:main_multi_level} states that one of the events from $A, B, C, D$ occurs with probability 1, as $n$ goes to infinity. We first prove the following Lemmas~\ref{lem:pol_cond},~\ref{lem:supermartingale} and~\ref{lem:sub_set}, and then use them to prove the above polarization theorem.

\begin{lemma} \label{lem:pol_cond}
 Consider a stochastic process $\{T_m:m \geq 0\}$ defined on $(\Omega, \mathcal{F}, P)$ such that it satisfies the following properties:
\begin{enumerate}
\item $T_m$ takes values in $[0, 1]$ and is measurable with respect to $\mathcal{F}_m$, that is, $T_0$ is a constant and $T_m$ is a function of $(B_1, \dots, B_m)$.

\item Process $\{(T_m, \mathcal{F}_m): m \geq 0\}$ is a super-martingale.

\item $T_{m + 1} = T_m^2$ with probability $\frac{1}{2}$.
\end{enumerate}
Then, the limit $T_\infty = \lim_{m \rightarrow \infty} T_m$ exists with probability 1, and $T_\infty$ takes values in $\{0, 1\}$.

\begin{proof}
The proof is  similar to~\cite[Proposition 9]{arikan09}. Since the process $\{(T_m, \mathcal{F}_m): m \geq 0\}$ is a super-martingale, $T_m$ converges with probability 1. This gives the proof of the first part, which implies that $ \text{lim}_{m \to \infty} |T_{m+1} - T_{m}| = 0$. As $T_{m + 1} = T_m^2$ with probability $\frac{1}{2}$, it follows that $T_m$ takes values in $\{0, 1\}$.
%
%(only difference is that $T_{m}$ is getting squared after every two steps instead of every one step)
\end{proof}

\end{lemma}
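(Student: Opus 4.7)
The plan is to follow the classical two‑step argument of Ar\i kan's Proposition~9, adapted to the present setting. The existence of $T_\infty$ is immediate from a martingale convergence argument, while the fact that $T_\infty \in \{0,1\}$ will follow by exploiting the squaring hypothesis together with the fact that a convergent sequence has vanishing increments.

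\medskip\noindent
\emph{Step 1 (existence of the limit).} The process $\{T_m\}$ is a super‑martingale with values in the bounded interval $[0,1]$, so in particular $\sup_m \mathbb{E}[|T_m|] \leq 1 < \infty$. Doob's martingale convergence theorem therefore applies and yields an $\mathcal{F}_\infty$‑measurable random variable $T_\infty$, taking values in $[0,1]$, such that $T_m \to T_\infty$ almost surely. Since the sequence is also uniformly bounded, the bounded convergence theorem upgrades this to $L^1$ convergence, and in particular
$$\mathbb{E}\bigl[|T_{m+1} - T_m|\bigr] \xrightarrow[m \to \infty]{} 0.$$

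\medskip\noindent
\emph{Step 2 (the limit lies in $\{0,1\}$).} Hypothesis~3 gives a measurable event $E_m \in \mathcal{F}_{m+1}$ of probability $\tfrac{1}{2}$ on which $T_{m+1} = T_m^2$, so
$$\mathbb{E}\bigl[|T_{m+1} - T_m|\bigr] \;\geq\; \mathbb{E}\bigl[\mathbbm{1}_{E_m}|T_m^2 - T_m|\bigr] \;=\; \tfrac{1}{2}\,\mathbb{E}\bigl[T_m(1-T_m)\bigr],$$
using $T_m \in [0,1]$ in the last equality. The left‑hand side tends to $0$ by Step~1, hence $\mathbb{E}[T_m(1-T_m)] \to 0$. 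Since $T_m(1-T_m) \to T_\infty(1-T_\infty)$ almost surely and is bounded in $[0, \tfrac{1}{4}]$, bounded convergence gives $\mathbb{E}[T_\infty(1-T_\infty)] = 0$. A non‑negative random variable with zero expectation vanishes almost surely, so $T_\infty(1-T_\infty) = 0$ a.s., which means $T_\infty \in \{0,1\}$ almost surely.

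\medskip\noindent
There is essentially no obstacle here: the whole argument is an application of standard martingale tools to a bounded process, and the squaring hypothesis is precisely what forces the limit to be concentrated on the fixed points of $t \mapsto t^2$ inside $[0,1]$. The only point requiring a slight remark is that the event $E_m$ of conditional probability $\tfrac{1}{2}$ need not be independent of $\mathcal{F}_m$ in general, but the inequality above only uses $P(E_m)=\tfrac{1}{2}$ together with the fact that on $E_m$ the increment equals $T_m^2-T_m$, so no independence assumption is needed.
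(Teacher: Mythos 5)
Your proof is correct and follows the same route as Ar\i{}kan's Proposition~9, which is exactly what the paper's terse proof is gesturing at: supermartingale convergence gives an a.s.\ (and, by boundedness, $L^1$) limit, the squaring event forces $\mathbb{E}[T_m(1-T_m)]\to 0$, and bounded convergence pins $T_\infty$ onto the fixed points of $t\mapsto t^2$ in $[0,1]$.

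The one thing you should correct is your closing remark. The equality
$\mathbb{E}\bigl[\mathbbm{1}_{E_m}\,T_m(1-T_m)\bigr]=\tfrac12\,\mathbb{E}\bigl[T_m(1-T_m)\bigr]$
does \emph{not} follow from $P(E_m)=\tfrac12$ alone. If $E_m$ were allowed to correlate arbitrarily with $\mathcal{F}_m$, it could sit precisely on the part of the sample space where $T_m(1-T_m)$ is small (or zero), and the left-hand side could be far below $\tfrac12\,\mathbb{E}[T_m(1-T_m)]$, wrecking the lower bound. What you actually use is $\mathbb{E}[\mathbbm{1}_{E_m}\mid\mathcal{F}_m]=\tfrac12$ a.s.\ together with the $\mathcal{F}_m$-measurability of $T_m$; by the tower property this gives the stated equality. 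That conditional-probability fact does hold here, because in the paper's setting $E_m=\{B_{m+1}\in\{q_1,q_2\}\}$ is $\sigma(B_{m+1})$-measurable and $B_{m+1}$ is independent of $\mathcal{F}_m$ --- so the hypothesis ``with probability $\tfrac12$'' in point~3 should be read as a conditional probability given $\mathcal{F}_m$. In short, the independence is genuinely used, not superfluous; acknowledging it makes the step airtight rather than weakening it.
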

\begin{lemma} \label{lem:supermartingale}
For all d = 1, 2, the process $\{Z(W^{[d]}_m): m \geq 0\}$ defined on $(\Omega, \mathcal{F}, P)$, is a super-martingale and there exist $q_1 = q_1(d), q_2 = q_2(d) \in \{ 0, 1\}^2$, such that when $B_{m+1} \in \{q_1, q_2\}$, $Z(W^{[d]}_{m+1}) \leq Z(W^{[d]}_m)^2$.
\end{lemma}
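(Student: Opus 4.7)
The plan is to treat one step of the process $\{W_m\}$ as two consecutive single polarization steps, so that I must analyze the four children $W^{00}, W^{01}, W^{10}, W^{11}$ of $W\equiv W_m$. Writing $z_d := Z(W^{[d]})$, I first apply Lemma~\ref{lem:partial_canvirtual_inequality} to $W$ to control the partial channels of $W^0$ and $W^1$, and then apply it a second time to $(W^{i_1}, W^{i_1})$ in order to express each $Z((W^{i_1 i_2})^{[d]})$. Using in particular the equality $Z((W^0)^{[2]}) = z_1$ from~(\ref{eq::ineq_2}), this yields, for $d = 1$,
\begin{align*}
Z((W^{00})^{[1]}) &\leq 2z_1 - z_1^2, & Z((W^{01})^{[1]}) &= Z_1(W^0)\cdot z_1, \\
Z((W^{10})^{[1]}) &\leq 2 Z((W^1)^{[2]}) - Z((W^1)^{[2]})^2, & Z((W^{11})^{[1]}) &= Z_1(W^1) \cdot Z((W^1)^{[2]}),
\end{align*}
and symmetrically for $d = 2$ (with $z_1, z_2$ and the two partial-channel indices exchanged).

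The crux is to replace the mixed factors $Z_1(W^0)$, $Z_1(W^1)$, $Z((W^1)^{[2]})$ (and $Z_1(W)$ in the $d = 2$ case) by $z_d$. I invoke point 2 of Lemma~\ref{lem:inequal}, which states $Z(W^{[d_1]}) \geq Z_{d'}(W)$ for every $d' \neq d_1$. Applied to $W$ itself this yields $Z_2(W) \leq z_1$ and $Z_1(W) \leq z_2$; applied to the \emph{compound} channel $W^0$, it yields the key inequality $Z_1(W^0) \leq Z((W^0)^{[2]}) = z_1$, where the equality comes from~(\ref{eq::ineq_2}). Combined with Lemma~\ref{lem:inequality_bht} (giving $Z_1(W^1) = Z_2(W) \leq z_1$), with the bound $Z((W^1)^{[2]}) \leq z_1$ from~(\ref{eq::ineq_4}), and with the monotonicity of $x \mapsto 2x - x^2$ on $[0,1]$, each of the four expressions collapses to $\leq 2z_d - z_d^2$ for two of the branches and $\leq z_d^2$ for the other two.

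The two conclusions of the lemma follow at once. For the squaring claim I take $(q_1, q_2) = (01, 11)$ when $d = 1$ and $(q_1, q_2) = (10, 11)$ when $d = 2$; these are precisely the branches whose bound is $z_d^2$. For the super-martingale property, averaging the four bounds with equal weight gives
\[
\tfrac{1}{4}\sum_{i_1 i_2 \in \{0, 1\}^2} Z((W^{i_1 i_2})^{[d]}) \leq \tfrac{1}{4}\bigl[\,2(2z_d - z_d^2) + 2 z_d^2\,\bigr] = z_d,
\]
i.e.\ $E[Z(W_{m+1}^{[d]}) \mid \mathcal{F}_m] \leq Z(W_m^{[d]})$, while $\mathcal{F}_m$-measurability of $Z(W_m^{[d]})$ is immediate since $W_m$ is a deterministic function of $(B_1, \dots, B_m)$. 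The main technical obstacle is the bound on $Z_1(W^0)$: Lemma~\ref{lem:partial_canvirtual_inequality} together with Lemma~\ref{lem:inequality_bht} directly controls the $\varoast$ side but leaves $Z_d(W \boxast W)$ unspecified, and only by re-applying point 2 of Lemma~\ref{lem:inequal} to the compound channel $W^0$ can $Z_1(W^0)$ be tied back to $z_1$.
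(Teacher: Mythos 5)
Your proof is correct and follows essentially the same route as the paper: apply Lemma~\ref{lem:partial_canvirtual_inequality} twice to bound $Z\bigl((W^{i_1i_2})^{[d]}\bigr)$, tie the mixed factors $Z_1(W^0)$, $Z_1(W^1)$, $Z\bigl((W^1)^{[2]}\bigr)$ back to $z_d$ via Lemma~\ref{lem:inequal} point~2 applied to the compound channels together with Lemma~\ref{lem:inequality_bht}, then sum the four bounds and read off the squaring branches $\{01,11\}$ (resp.\ $\{10,11\}$). The paper's proof is the same argument written out as equations~(\ref{eq:bad_bad_can1})--(\ref{eq:good_good_can1}) and~(\ref{eq:bad_bad_can2})--(\ref{eq:good_good_can2}).
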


\begin{proof}
For $d = 1$, using~(\ref{eq::ineq_1})-(\ref{eq::ineq_4}) with $W = W_m$, we get 
\begin{align}
Z({W_{m}^{00}}^{[1]}) & \leq 2Z({W_m^0}^{[2]}) - Z((W_m^0)^{[2]})^2 = 2Z(W_m^{[1]}) - Z(W_m^{[1]})^2, \label{eq:bad_bad_can1} \\
Z({W_m^{01}}^{[1]}) & \leq Z_1(W_m^0) Z({W_m^0}^{[2]}) \leq Z({W}_m^{[1]})^2, \label{eq:second} \\
Z({W_m^{10}}^{[1]}) & \leq 2Z({W_m^1}^{[2]}) - Z({W_m^1}^{[2]})^2 \leq 2Z({W}_m^{[1]}) - Z({W}_m^{[1]})^2, \\
Z({W_m^{11}}^{[1]}) & = Z_1(W_m^1) Z({W_m^1}^{[2]})) \leq Z_2(W_m) Z(W_m^{[1]}) \label{eq:good_good_can1},  
\end{align}

%Z_1(W_m^0) Z({W_m^0}^{[2]})) \leq Z({W_m^0}^{[2]})) Z({W_m^0}^{[2]}) = Z(W_m^{[1]})^2 \label{eq:third} \\
%\leq 2Z({W_m^1}^{[2]}) - Z({W_m^1}^{[2]}))^2 \leq 2Z(W_m^{[1]}) -  Z(W_m^{[1]})^2 \\
where the second inequality in~(\ref{eq:second})  uses the inequality $Z_1(W^0) \leq  Z({W^0}^{[2]})$ from [Lemma~\ref{lem:inequal}, point $2$], and second inequality in~(\ref{eq:good_good_can1}) uses $Z_1(W_m^1)=Z_2(W_m)$ from~(\ref{eq:ineq_Zgood_1}). From~(\ref{eq:bad_bad_can1})-(\ref{eq:good_good_can1}) and $Z_2(W) \leq  Z(W^{[1]})$ [Lemma~\ref{lem:inequal}, point $2$], it follows,
\begin{align}
 \sum_{i_1,i_2 \in \{0, 1\}} Z_1({W_m^{i_1i_2}}^{[1]}) & \leq 4 Z_1(W_m^{[1]}).
\end{align}
%&\leq 2[2Z(W_m^{[1]}) - Z(W_m^{[1]})^2] + Z(W_m^{[1]})^2 + Z_2(W_m) Z(W_m^{[1]}) \nonumber \\
Hence, the process  $\{Z(W_m^{[1]}): m \geq 0\}$ is a super-martingale and also when $B_{m+1} \in \{01, 11\}$, we have that $Z(W^{[1]}_{m+1}) \leq Z(W^{[1]}_m)^2$.

\medskip\noindent For $d=2$, from~(\ref{eq::ineq_1})-(\ref{eq::ineq_4}) with $W = W_m$, we have that
\begin{align}
Z({W_m^{00}}^{[2]}) & = Z({W_m^0}^{[1]})) \leq 2Z(W_m^{[2]}) - Z(W^{[2]})^2. \label{eq:bad_bad_can2}\\
Z({W_m^{01}}^{[2]}) & \leq Z({W_m^0}^{[1]})  \leq 2Z(W_m^{[2]}) - Z(W^{[2]})^2.   \\
Z({W_m^{10}}^{[2]})) & = Z({W_m^1}^{[1]}) = Z_1(W_m) Z(W_m^{[2]}). \\
Z({W_m^{11}}^{[2]})) & \leq  Z({W_m^1}^{[1]}) = Z_1(W_m) Z(W_m^{[2]}).  \label{eq:good_good_can2}
\end{align}
From~(\ref{eq:bad_bad_can2})-(\ref{eq:good_good_can2}) and using $Z_1(W) \leq Z(W^{[2]})$ [Lemma~\ref{lem:inequal}, point $2$], we have that
\begin{equation}
 \sum_{i_1,i_2 \in \{0, 1\}} Z({W_m^{i_1i_2}}^{[2]})) \leq 4 Z(W_m^{[2]}).
\end{equation}
Thus, process  $\{Z(W^{[2]}_m): m \geq 0\}$ is a super-martingale, and also when $B_{m+1} \in \{10, 11\}$, we have that $Z(W^{[2]}_{m+1}) \leq Z(W^{[2]}_m)^2$.
\end{proof}

\begin{lemma} \label{lem:sub_set}
Define the following events for $d = 1, 2$,
\begin{align}
S^{[d]}(\delta) &\eqdef \{ \omega \in \Omega : \exists m_0,  \forall m \geq m_0, Z(W^{[d]}_m) < \delta \}. \label{eq:event_S^d}\\
T^{[d]}(\delta) &\eqdef \{ \omega \in \Omega :\exists m_0,  \forall m \geq m_0, Z(W^{[d]}_m) > 1 - \delta \}. \label{eq:event_T^d}\\
S_{d}(\delta) &\eqdef \{ \omega \in \Omega : \exists m_0, \forall m \geq m_0, Z_{d}(W_m) < \delta, Z_{3}(W_m) < \delta \}. \label{eq:event_S_d}\\
T_{d}(\delta) &\eqdef \{ \omega \in \Omega : \exists m_0,  \forall m \geq m_0, Z_{d}(W_m) > 1 - \delta \}. \label{eq:event_T_d}
\end{align}
Then,
\begin{itemize}
\item[(i)] $P(S^{[d]}(\delta) \cup T^{[d]}(\delta)) = 1$, $\forall  d = 1, 2$.

\item[(ii)] Given $\{d,d'\} = \{1, 2\}$, then 

\begin{enumerate}
\item[(a)] $S^{[d]}(\delta) \subseteq S_{d'}(\delta)$.
%\item[(b)] $S_{d'}(\delta) \subseteq S^{[d]}(2\delta)$.
%\item[(c)] $T_{d'}(\delta) \subseteq T^{[d]}(\delta)$.
\item[(b)] $T^{[d]}(\delta)  \subseteq T_{d'}(\delta) \text{ with probability 1}$.
\end{enumerate}
\end{itemize}
\end{lemma}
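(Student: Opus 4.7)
The plan is to address the three sub-claims in three stages, using martingale-type arguments combined with the algebraic bounds from Lemma~\ref{lem:inequal}.

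For part~(i), I would apply Lemma~\ref{lem:pol_cond} to the process $T_m := Z(W^{[d]}_m)$ for each $d \in \{1, 2\}$. Measurability with respect to $\mathcal{F}_m$ is automatic from the construction of $W_m$, and boundedness in $[0,1]$ is a basic property of the Bhattacharyya parameter. The supermartingale property and the existence of two branch values $q_1, q_2$ of total probability $\tfrac{1}{2}$ on which $T_{m+1} \leq T_m^2$ are exactly the content of Lemma~\ref{lem:supermartingale}. Although Lemma~\ref{lem:pol_cond} is stated with the equality $T_{m+1} = T_m^2$, its proof adapts verbatim to the inequality: a.s.\ convergence of the supermartingale yields $T_m - T_{m+1} \to 0$, while on the favorable-branch subsequence---infinite a.s.\ by the second Borel--Cantelli lemma applied to the independent events $\{B_{m+1} \in \{q_1,q_2\}\}$---one has $0 \leq T_m(1 - T_m) \leq T_m - T_{m+1}$, forcing $T_m(1 - T_m) \to 0$ along that subsequence and hence $T_\infty^{(d)} := \lim_m T_m \in \{0,1\}$ a.s. For $\delta \in (0, \tfrac{1}{2})$, the events $\{T_\infty^{(d)} = 0\}$ and $\{T_\infty^{(d)} = 1\}$ coincide a.s.\ with $S^{[d]}(\delta)$ and $T^{[d]}(\delta)$ respectively, so their union has probability~$1$.

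For parts~(ii)(a) and the easy half of~(ii)(b), the work reduces to an algebraic manipulation of the bounds in Lemma~\ref{lem:inequal}. Part~(ii)(a) is immediate from point~2: $Z(W^{[d]}) \geq \max(Z_{d'}(W), Z_3(W))$ for $\{d, d'\} = \{1, 2\}$, so $Z(W^{[d]}_m) < \delta$ eventually forces the same bound on both $Z_{d'}(W_m)$ and $Z_3(W_m)$, placing $\omega$ in $S_{d'}(\delta)$. For part~(ii)(b), combining points~1 and~2 yields the composite bound $Z_{d'}(W) \geq Z(W^{[d]}) - Z_3(W) \geq Z(W^{[d]}) - Z(W^{[d']})$. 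Using part~(i) applied to~$d'$, I would split $T^{[d]}(\delta)$ (up to a null set) as $[T^{[d]}(\delta) \cap S^{[d']}(\delta)] \cup [T^{[d]}(\delta) \cap T^{[d']}(\delta)]$. On the first piece, $Z(W^{[d']}_m) \to 0$ and $Z(W^{[d]}_m) \to 1$ a.s., so the composite bound gives $Z_{d'}(W_m) \to 1$, whence $\omega \in T_{d'}(\delta)$ eventually.

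The second piece $T^{[d]}(\delta) \cap T^{[d']}(\delta)$ is the main obstacle, as the composite bound degenerates when both binary partials tend to~$1$. Here one must rule out a.s.\ the ``half-noisy of type~$3$'' limiting profile $Z_1, Z_2 \to 0$, $Z_3 \to 1$, which is fully consistent with $Z(W^{[1]}_m), Z(W^{[2]}_m) \to 1$ (both the lower and upper bounds in points~1 and~2 of Lemma~\ref{lem:inequal} collapse to~$1$). The plan is to exploit the explicit good-branch dynamics of~$Z_3$: a direct computation in the spirit of Lemma~\ref{lem:inequality_bht} yields the identity $Z_3(W \varoast W) = Z_1(W)\, Z_3(W)$. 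Combined with $Z_1(W \varoast W) = Z_2(W)$ and $Z_2(W \varoast W) = Z_1(W)^2$, one macro step $W_m \mapsto W_m^{11} = (W_m^1)^1$ sends a near-half-noisy-of-type-$3$ channel to one whose three $Z_d$ parameters are all near~$0$, i.e.\ near-noiseless, so by point~1 of Lemma~\ref{lem:inequal} we have $Z(W^{[d]}(W_m^{11}))$ near~$0$. Since $\{B_{m+1} = 11\}$ occurs for infinitely many~$m$ a.s.\ by the second Borel--Cantelli lemma, this contradicts $Z(W^{[d]}_{m+1}) > 1 - \delta$, which must hold for all sufficiently large~$m$ on $T^{[d]}(\delta)$. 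Hence the half-noisy-of-type-$3$ limit has probability zero on the event; on its complement within this piece, $Z_3(W_m) \to 0$ a.s., and the composite bound again yields $Z_{d'}(W_m) \to 1$, so $\omega \in T_{d'}(\delta)$. This contradiction argument is the most delicate step and is where the ``with probability~$1$'' caveat in the statement is consumed.
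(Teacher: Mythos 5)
Your treatment of parts~(i) and~(ii)(a) is correct, and indeed slightly more careful than the paper's: you note explicitly that Lemma~\ref{lem:pol_cond} must be adapted from $T_{m+1}=T_m^2$ to $T_{m+1}\leq T_m^2$, which the paper leaves implicit. Your composite-bound argument $Z_{d'}(W)\geq Z(W^{[d]})-Z(W^{[d']})$ disposes of the piece $T^{[d]}(\delta)\cap S^{[d']}(\delta)$ cleanly and is a genuinely different argument from the paper's. However, your treatment of the piece $T^{[d]}(\delta)\cap T^{[d']}(\delta)$ has a real gap. You rule out the profile $\{Z_1(W_m),Z_2(W_m)\to0,\ Z_3(W_m)\to1\}$ via the identity $Z_3(W\varoast W)=Z_1(W)Z_3(W)$ (which is correct and verifiable by the same computation as Lemma~\ref{lem:inequality_bht}), but you then assert, with no justification, that ``on its complement within this piece, $Z_3(W_m)\to0$ a.s.'' Nothing in Lemmas~\ref{lem:pol_cond}--\ref{lem:supermartingale} shows that $Z_3(W_m)$ converges at all, let alone that its only possible limits are $0$ and $1$ (the supermartingale result covers $Z(W_m^{[1]})$ and $Z(W_m^{[2]})$ only), so the complement of the ``type-$3$'' profile may contain paths on which $Z_3(W_m)$ oscillates or converges to an interior value, and your composite bound then gives nothing.

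The paper avoids this entirely by a single, uniform contradiction argument that does not split $T^{[d]}(\delta)$ into cases and does not need the extra identity for $Z_3(W\varoast W)$: assume $\omega\in T^{[d]}(\delta)\setminus T_{d'}(\delta)$ with positive probability, so $Z_{d'}(W_m)\leq 1-\delta$ infinitely often; by a second Borel--Cantelli argument, a.s.\ infinitely many such $m$ also have $B_{m+1}=11$ (the paper uses $B_{m+1}=B_{m+2}=11$, but one step already suffices); then~(\ref{eq:good_good_can1})--(\ref{eq:good_good_can2}) give $Z(W_{m+1}^{[d]})\leq Z_{d'}(W_m)Z(W_m^{[d]})\leq 1-\delta$, contradicting $Z(W_{m+1}^{[d]})>1-\delta$ which holds eventually on $T^{[d]}(\delta)$. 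You can repair your piece-(ii) argument in the same way: the good ingredients you already have (the $11$-branch bounds, Borel--Cantelli, and the eventual lower bound $Z(W_m^{[d]})>1-\delta$) assemble into this contradiction directly, without any case analysis on $Z_3(W_m)$ and without the unprovable convergence assertion.
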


\begin{proof}

\noindent \textbf{Point $(i)$}: It follows directly from Lemmas~\ref{lem:pol_cond} and~\ref{lem:supermartingale}. As a consequence, note that any $\omega  \in \Omega$ belongs to one of the sets, $S^{[1]}(\delta) \cap S^{[2]}(\delta)$, $S^{[1]}(\delta) \cap T^{[2]}(\delta)$, $T^{[1]}(\delta) \cap S^{[2]}(\delta)$ and $T^{[1]}(\delta) \cap T^{[2]}(\delta)$ with probability $1$.  This will be used in the proof of Theorem~\ref{thm:main_multi_level}.

\medskip \noindent \textbf{Point $(ii).(a)$}: From [lemma~\ref{lem:inequal}, point $2$], we have that $Z_{d'}(W_m) \leq Z(W_m^{[d]}) $  and  $Z_{3}(W_m) \leq Z(W_m^{[d]}) $, for $\{d, d'\} = \{1, 2\}$. Then, it immediately follows by definitions of $S^{[d]}(\delta)$ and $S_{d'}(\delta)$ that $S^{[d]}(\delta) \subseteq S_{d'}(\delta)$.

% Furthermore, from [lemma~\ref{lem:inequal}, point $1$], we have that $Z(W_m^{[d]}) \leq Z_{d'}(W_m) + Z_3(W_m)$, which implies $S_{d'}(\delta) \subseteq S^{[d]}(2\delta)$.

\medskip \noindent \textbf{Point $(ii).(b)$}\footnote{Note that $T_{d'}(\delta) \subseteq T^{[d]}(\delta)$, by the same reasoning as in the proof of previous point $(ii).(a)$. Hence,  point $(ii).(b)$ actually implies that $T^{[d]}(\delta) = T_{d'}(\delta)$ with probability 1.}:
We assume $T^{[d]}(\delta) \not \subset T_{d'}(\delta)$ with non-zero probability and disprove it by contradiction. The above  assumption implies that the following event,
\begin{equation}
E = \{ \omega \in \Omega: \omega \in T^{[d]}(\delta), \omega \not\in T_{d'}(\delta)  \},
\end{equation}
occurs with non-zero probability, that is, $P(E) > 0$.

\medskip\noindent Define an event $\mathcal{E}_m$ such that  $Z_{d'}(W_m) \leq 1 - \delta$, that is, given $B_1 = \omega_1, \dots, B_m = \omega_m$, we have $Z_{d'}(W^{\omega_1 \cdots \omega_m}) \leq 1 - \delta$. Any $\omega \in E$ belongs to infinitely many $\mathcal{E}_m$  because if there exists a $m_0$ such that $Z_{d'}(W_m) > 1 - \delta$, for all $m > m_0$, this would imply $\omega \in T_{d'}(\delta)$, which is not true by assumption. Given  $\omega \in E$, consider $M = \{ m_1, m_2, \dots \}$ as the set of instances such that for all $m_i \in M$, $\omega \in \mathcal{E}_{m_i}$. Further, take $m$ such that  $B_{m+1}  =  B_{m+2}= 11$ happens, probability of such an event is given by $P(B_{m+1} =  B_{m+2} = 11)  = \frac{1}{16} > 0$, for any $m\geq 1$, therefore, $\sum_{m_i \in M} P(B_{m_i+1} = B_{m_i+2} = 11) = \infty$. Since $\{B_{m}:m \geq 1\}$ are i.i.d. random variables, using Borel-Cantelli lemma, there are infinitely many $m_i \in M$ for which $B_{m_i+1} =  B_{m_i+2} =11$.

\medskip \noindent The condition $\omega \in T^{[d]}(\delta)$ implies that $Z(W^{[d]}_m) > 1 - \delta$, for all $m \geq m_0$. Take a $m \geq m_0$ such that $\omega \in \mathcal{E}_m$, and $B_{m+1} = B_{m+2} =  11$. Then, we have the following for all $d = 1, 2$,
\begin{align*}
Z(W^{[d]}_{m+2}) & \leq Z_{d'}(W_{m+1}) Z(W^{[d]}_{m+1}) \nonumber \\
                 & \leq Z_{d'}(W_{m})^2 Z_{d'}(W_{m}) Z_d(W^{[d]}_{m}) \nonumber \\
& \leq (1-\delta)^3 < (1 - \delta), \nonumber
\end{align*}
where both the first and second inequalities use~(\ref{eq:good_good_can1}) and~(\ref{eq:good_good_can2}), the second inequality also uses $Z_{d}(W_{m+1}) = Z_{d}(W_{m})^2 $ (from~(\ref{eq:ineq_Zgood_1}) and~(\ref{eq:ineq_Zgood_2})), and the third inequality follows from the assumption that $Z_{d'}(W_{m}) \leq (1 - \delta)$. Hence, we have a contradiction with the statement that $Z(W^{[d]}_m) > 1 - \delta$ for all $m \geq m_0$. Therefore, $T^{[d]}(\delta) \subseteq T_{d'}(\delta)$ holds with probability $1$.

%and second inequality also uses $Z_{d}(W_{m+1}) = Z_{d}(W_{m})^2 $ for $B_{m+1} = 11$ (from~(\ref{eq:ineq_Zgood_1}) and~(\ref{eq:ineq_Zgood_2})). 
%\begin{align*}
%Z(W^{[d]}_{m+2}) & \leq Z_{d'}(W_{m+1}) Z(W^{[d]}_{m+1}) \nonumber \\
%&\leq Z_{d'}(W_{m})^2 Z_{d'}(W_{m}) Z(W^{[d]}_{m}) \nonumber \\
%& \leq (1-\delta)^3 < (1 - \delta),
%\end{align*}
%
%
%Define an event $E$ such that,
%\begin{align}
%E &= \{E_n \text{ occurs infintely many times } \} \nonumber \\
%  &= \{\omega \in \Omega: \text{ for every m } \geq 0, \exists \text{ a, }n \geq m \text{ such that } E_n \text{ occurs }\} \nonumber
%\end{align}
%Since $P(E_n) = \frac{1}{16} > 0$, using Borel- Cantelli lemma $P(E) = 1$. Thus, $ B(\delta) \cap E = B$ and $C(\delta) \cap E = C(\delta)$. Now, we show that, $ (A(\delta) \cap E)  \subset (C(\delta) \cap E)$.
%\medskip
%Define a event $E_n$ such that $B_{n} = B_{n+1} = \dots = B_{n+k-1} = 11$, probability of such an event is $\frac{1}{4^k} > 0$. Using $Z(W^{[1]}_n) = Z(W) Z(W^{[1]}_{n-1})$ and $Z_1(W) = Z_1(W)^2$,
%
%\begin{align}
%Z(W^{[1]}_{n+k-1}) &= Z(W^{[1]}_{n-1}) Z_2(W_{n-1}) Z_2(W_{n-1})^2 Z_2(W_{n+1})^4 \dots Z_2(W_{n+1})^{2^{k-1}} \nonumber \\
%&= Z(W^{[1]}_{n-1}) \big(Z_2(W_{n-1})\big)^{\sum\limits_{j = 0}^{k-1} 2^j} \nonumber \\
%&= Z(W^{[1]}_{n-1}) \big(Z_2(W_{n-1})\big)^{2^{k} - 1} \leq \big(Z_2(W_{n-1})\big)^{2^{k} - 1} \nonumber 
%\end{align}
%
%Choose a $k$ such that $(1 -\delta)^{2^{k} - 1} < 1 - \delta $. Thus $B \cap \bar{C} = \phi$.
\end{proof}
%To summerize, for example when $d = 1$ and $d' = 2$, implication of $S^{[d]}(\delta) \subseteq C_{d'}(\delta)$  is that  $Z(W_m^{[1]}) < \delta \implies Z_2(W_m) < \delta, Z_3(W_m) < \delta$, and implication of $T^{[d]}(\delta) \subseteq T_{d'}(\delta)$ is that $Z(W_m^{[1]}) > 1 - \delta \implies Z_2(W_m) > 1 - \delta$.
\noindent \textbf{Proof of Theorem~\ref{thm:main_multi_level}}:
We have the following by definition,
\begin{align}
S_1(\delta) \cap S_2(\delta) &= A. \nonumber \\
T_1(\delta) \cap S_2(\delta) &=  B. \nonumber \\
S_1(\delta) \cap T_2(\delta) &=  C. \nonumber \\
T_1(\delta) \cap T_2(\delta) & = D. \nonumber 
\end{align}

\noindent From point $(ii).(a)$ of Lemma~\ref{lem:sub_set}, we have that $S^{[1]}(\delta) \cap S^{[2]}(\delta) \subseteq S_1(\delta) \cap S_2(\delta)$, which means $\omega \in S^{[1]} \cap S^{[2]} \implies \omega \in S_1(\delta) \cap S_2(\delta)$. Similarly, from point $(ii).(a)$ and point $(ii).(b)$ of Lemma~\ref{lem:sub_set}, we have that
\begin{align*}
S^{[1]}(\delta) \cap T^{[2]}(\delta) &\subset T_1(\delta) \cap S_2(\delta). \nonumber \\
T^{[1]}(\delta) \cap S^{[2]}(\delta) &\subset S_1(\delta) \cap T_2(\delta). \nonumber \\
T^{[1]}(\delta) \cap T^{[2]}(\delta) &\subset T_1(\delta) \cap T_2(\delta). \nonumber 
\end{align*}
From point $(i)$ of Lemma~\ref{lem:sub_set}, we know that one of the events from $S^{[1]}(\delta) \cap S^{[2]}(\delta)$, $S^{[1]}(\delta) \cap T^{[2]}(\delta)$, $T^{[1]}(\delta) \cap S^{[2]}(\delta)$ and $T^{[1]}(\delta) \cap T^{[2]}(\delta)$ happens with probability $1$, therefore, we have that
\begin{align}
S^{[1]}(\delta) \cap S^{[2]}(\delta) &= S_1(\delta) \cap S_2(\delta)=A, \label{eq:pol_1} \\
S^{[1]}(\delta) \cap T^{[2]}(\delta) &= T_1(\delta) \cap S_2(\delta)=B, \\
T^{[1]}(\delta) \cap S^{[2]}(\delta) &= S_1(\delta) \cap T_2(\delta)=C,  \\
T^{[1]}(\delta) \cap T^{[2]}(\delta) &= T_1(\delta) \cap T_2(\delta)=D, \label{eq:pol_4}  
\end{align}
Hence, $P(A \cup B \cup C \cup D) = 1$. 
\ \hfill $\qed$
%Lemma~\ref{lem:pol_cond} and ~\ref{lem:supermartingale} ensure that process $\{Z_d(W^{[d]}_n): n \geq 0\}$ for all $ d = 1, 2$ takes values in the set $\{0, 1\}$ when $n \to \infty$. Therefore, $P(S^{[d]}(\delta) \cup T^{[d]}(\delta)) = 1$ for all $d = 1, 2$.  This implies that, the partial channels associated with the virtual channel $W_m$, $(W_m^{[1]}, W_m^{[2]})$ after sufficiently large $n$, polarize into one of the subsets, $A_1(\delta) \cap A_2(\delta), A_1(\delta) \cap B_2 (\delta), B_1(\delta) \cap A_2 (\delta), B_1(\delta) \cap B_2(\delta)$ with probability $1$.
%
%From lemma~\ref{lem:sub_set},  $A_1(\delta) \subseteq D_{2}(\delta)$, $A_2(\delta) \subseteq D_{1}(\delta)$, $B_1(\delta) \subseteq C_{2}(\delta)$ and $B_2(\delta) \subseteq C_{1}(\delta)$,  therefore $W_m$ polarizes into one of the subsets, $D_2(\delta) \cap D_1(\delta), D_2(\delta) \cap C_1 (\delta), C_2(\delta) \cap D_1 (\delta),  C_2(\delta) \cap C_1(\delta)$ as well. 
%
%Hence, the virtual channels $W_m$ polarizes into one of the subsets $A, B, C \text{ or, } D$, with probability $1$. 

\section{Quantum coding scheme} \label{sec:Qpolar_constr}
In this section, we propose a quantum polar coding scheme for CMP channels based on multilevel channel polarization proven in the previous section.

\subsection{Code construction}
The construction for quantum polar code is illustrated in Figure~\ref{fig:quantum_pol_constr}, for $N = 2^3$, where the virtual channels are written above the wires corresponding to their channel inputs. Similar to~\cite{arikan09}, to construct a quantum polar code of length $N = 2^n$, we start with $N$ copies of a CMP channel $\mathcal{W}$, and divide them into $\frac{N}{2}$ pairs. Then, the channel combining and splitting procedure is applied on every pair, using our two-qubit Clifford unitary $L$ as channel combining operation, which gives $\frac{N}{2}$ copies of both $\mathcal{W}^0 \eqdef \mathcal{W} \boxast \mathcal{W}$ and $\mathcal{W}^1 \eqdef \mathcal{W} \varoast \mathcal{W}$. In the next step, for all $ i_1 \in \{0, 1\}$, we regroup $\frac{N}{2}$ copies of each $W^{i_1}$, divide them in $\frac{N}{4}$ pairs, and once again apply the channel combining and splitting procedure on each pair, which gives $\frac{N}{4}$ copies of $(\mathcal{W}^{i_1})^{i_2}$ for all $i_1i_2 \in \{0, 1\}^2$. Basically, after each step, we regroup the same copies of a virtual channel, divide them into pairs, and apply the channel combining and splitting procedure on each pair. Repeating the above procedure for $n$ steps, we get $2^n$ virtual channels denoted by $\mathcal{W}^{i_1 \cdots i_{n}}$, where $i_1 \cdots i_{n} \in \{0, 1\}^n$, and the procedure stops after $n$ steps as only one copy of each $\mathcal{W}^{i_1 \cdots i_{n}}$ is available.

\medskip\noindent We also consider the same procedure as above on $N$ copies of the classical counterpart channel $\mathcal{W}^\#$, using permutation $\Gamma(L)$ as channel combining operation. This will provide a classical polar code construction, which synthesizes $2^n$ virtual channels ${\mathcal{W}^\#}^{i_1 \cdots i_n}$ for $i_1 \cdots i_n \in \{0, 1\}$. As explained in Section~\ref{sec:preliminary} (see also ~\cite[Proposition 20 and Corollary 21]{DGMS19}), there is one to one correspondence between $\mathcal{W}^{i_1 \cdots i_n}$ and ${\mathcal{W}^\#}^{i_1 \cdots i_n}$ in the sense that ${\mathcal{W}^{i_1 \cdots i_n}}^\# \equiv {\mathcal{W^\#}}^{i_1 \cdots i_n}$.

\begin{figure}[H]
\centering
\includegraphics[scale=0.6]{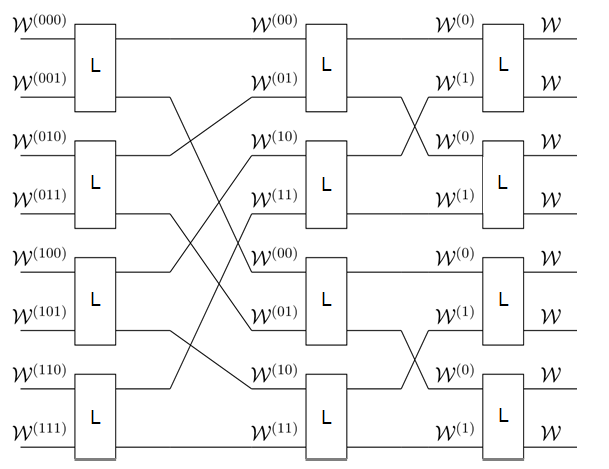}
\caption{Quantum Polar code construction for $N=2^3$. Here, $L$ is the two-qubit Clifford gate from Figure~\ref{fig:chan_com_oper}.}
\label{fig:quantum_pol_constr}
\end{figure}

%Given an infinte sequence, $(i_1, i_2 \dots, i_{2k-1}, i_{2k} \dots)$ define a sequence $(\omega_1, \dots, \omega_k \dots)$, where $\omega_k \in 

%\medskip\noindent Given $\omega \in \Omega$ such that $ \omega = ( \omega_1, \dots \omega_k, \dots ),$ where  $ \omega_k \in \{0, 1\}^2 , \forall k > 0$, define $\mathcal{W}_n \eqdef \mathcal{W}^{\omega_1 \cdots \omega_n}$. Here, $(\omega_1, \dots, \omega_k \dots)$ is obtained from $(i_1, i_2, \dots, i_k, \dots i_{2k-1}, i_{2k}, \dots)$, where  $ i_k \in \{0, 1\} , \forall k > 0$ such that $\omega_k = i_{2k-1}i_{2k}$ (this notation is used in the proof of multilevel polarization). 

\subsection{Encoding}
Consider $n$ steps of polarization with $n > 0$. As mentioned before, the polar code construction synthesizes $N = 2^{n}$ virtual channels corresponding to each $ i \in \{0, 1, \dots, N-1\}$. We shall denote, $\mathcal{W}^{(i)} \eqdef \mathcal{W}^{i_1 \cdots i_{n}}$, where $i_1 \cdots i_{n}$ is the binary representation of $i \in \{0, 1, \dots, N-1\}$. Similar to Section~\ref{sec:multilevel}, we define the following sets,
\begin{align}
 \mathcal{A} &= \{i \in  \{0, 1, \dots, N-1\}: W^{(i)} \text{ is } \delta\text{-noiseless}  \}. \label{eq:setA}\\
 \mathcal{B} &= \{ i \in  \{0, 1, \dots, N-1\} : W^{(i)} \text{ is } \delta\text{-half-noisy of type } 1 \} \label{eq:setB}. \\
\mathcal{C} &= \{i \in  \{0, 1, \dots, N-1\} : W^{(i)} \text{ is } \delta\text{-half-noisy of type } 2 \}. \label{eq:setC} \\
 \mathcal{D} &= \{i \in  \{0, 1, \dots, N-1\} : W^{(i)} \text{ is } \delta\text{-noisy} \}.  \label{eq:setD}
\end{align}
From Theorem~\ref{thm:main_multi_level}, it follows that for sufficiently large $N$, all but a vanishing fraction of  elements from the set $\{0, 1, \dots, N-1\}$ belong to one of the above sets. Let $\bar{\mathcal{D}}$ denote the complement of $\mathcal{A} \cup \mathcal{B} \cup \mathcal{C}$.  The inputs to the virtual channels $\mathcal{W}^{(i)}$ are supplied as follows for $i$ in $\mathcal{A}$, $\mathcal{B}$, $\mathcal{C}$ and $\bar{\mathcal{D}}$, 
\begin{itemize}
\item If $ a \in \mathcal{A}$, the corresponding $\mathcal{W}^{(a)}$ is used for quantum communication.
\item If  $b \in \mathcal{B}$, the input of the corresponding $\mathcal{W}^{(b)}$ is set to $\ket{+}$, the eigenstate of the Pauli $X$ operator with eigenvalue 1 (since $Z$ part of the input $x_1x_2 \in \{0,1\}^2$, that is $x_2$, is randomized by $W^{(b)}$). 
\item If $ c \in \mathcal{C}$, the input of the corresponding $\mathcal{W}^{(c)}$ is set to $\ket{0}$, the eigenstate of the Pauli $Z$ operator with eigenvalue 1 (since $X$ part of the input $x_1x_2 \in \{0,1\}^2$, that is $x_1$, is completely randomized by $W^{(c)}$). 

\item If  $ d \in \bar{\mathcal{D}}$, the input of the corresponding $\mathcal{W}^{(d)}$ is set to half of an EPR pair. The other half of the EPR pair is given to the decoder.
\end{itemize}

 With a slight abuse of notation, we shall denote $\mathcal{A}$, $\mathcal{B}$, $\mathcal{C}$ and $\bar{\mathcal{D}}$ as qudit quantum systems with dimensions $2^{|\mathcal{A}|}$, $2^{|\mathcal{B}|}$, $2^{|\mathcal{C}|}$ and $2^{|\bar{\mathcal{D}}|}$, respectively. Let a quantum state $\rho_{\cal A}$ on the system ${\cal A}$ is encoded by supplying it as input to the virtual channels corresponding to $a \in \mathcal{A}$. Define a maximally entangled state $\Phi_\mathcal{DD'}$ as follows
\begin{equation}\label{eq:max_entangled_state}
\Phi_{\cal \bar{D}\bar{D'}} = \otimes_{d\in \bar{\mathcal{D}}} \Phi_{dd'},
\end{equation}
where indices $d$ and $d'$ indicate the $d$-th qubits of  systems $\bar{\mathcal{D}}$ and $\bar{\mathcal{D}}'$, respectively, and $\Phi_{dd'}$ is the density matrix corresponding to an EPR pair.  Define $\rho_\mathcal{B}^+ \eqdef \otimes_{b \in \mathcal{B}} \ket{+}\bra{+}_b$ and $\rho_\mathcal{C}^0 \eqdef \otimes_{c \in \mathcal{C}} \ket{0}\bra{0}_c$. Let also $G_q$ denote the quantum polar transform, that is, the $N$-qubit Clifford unitary obtained by applying the two-qubit Clifford unitary $L$ for $n$ levels of recursion, as depicted in Figure~\ref{fig:quantum_pol_constr}. 

\medskip The encoded state, denoted by $\varphi_{{\cal ABC\bar{D}\bar{D'}}}$, is obtained by applying $G_q\otimes I_{{\cal \bar{D'}}}$ on the system $\mathcal{ABC\bar{D}\bar{D'}}$ as follows
%
%The system $\bar{\mathcal{D}}$ of quantum state $\Phi_{\cal \bar{D}\bar{D'}}$ is input to the virtual channels corresponding to $d \in D$.
%Let $\rho_\mathcal{B}^+ \eqdef \otimes_{b \in \mathcal{B}} \ket{+_b}\bra{+_b}$ and $\rho_\mathcal{C}^0 \eqdef \otimes_{c \in \mathcal{C}} \ket{0_c}\bra{0_c}$ be the input to the virtual channels corresponding to $ b \in \mathcal{B}$ and $c \in \mathcal{C}$, respectively
%
\begin{equation}
\varphi_{{\cal ABC\bar{D}\bar{D'}}}  \eqdef (G_q\otimes I_{{\cal \bar{D'}}}) (\rho_{\cal A} \otimes \rho_\mathcal{B}^+  \otimes  \rho_\mathcal{C}^0  \otimes \Phi_{\cal \bar{D}\bar{D'}}) (G_q^\dagger \otimes I_{{\cal \bar{D'}}}).
\end{equation}
As no errors occur on the system $\bar{\mathcal{D}}'$, the following is the channel output,
\begin{equation}
\psi_{{\cal ABC\bar{D}\bar{D'}}} \eqdef ({\cal W}^{\otimes N} \otimes I_{{\cal \bar{D'}}}) (\varphi_{{\cal ABC\bar{D}\bar{D'}}}).
\end{equation}
Since ${\cal W}$ is a Pauli channel, we have that
\begin{equation}
\psi_{{\cal ABC\bar{D}\bar{D'}}}  = (E_{{\cal ABC\bar{D}}}G_q\otimes I_{{\cal \bar{D'}}}) (\rho_{\cal A} \otimes \rho_\mathcal{B}^+  \otimes  \rho_\mathcal{C}^0  \otimes \Phi_{\cal \bar{D}\bar{D'}}) (G_q^\dagger E_{{\cal ABC\bar{D}}}^\dagger \otimes I_{{\cal \bar{D'}}}),
\end{equation}
for some $N$-qubit Pauli error $E_{{\cal ABC\bar{D}}} \in \bar{P}_N$.
%Here,we only require preshared EPR pairs for freezing bad channels in subset $D$, while bad channels in subset $B$ and $C$ are frozen by eigenstates of Pauli's $X$ or $Z$, respectively. In the following, we give an upper bound on $P(D)$ or equivalently fraction of channels for which preshared EPR pairs are required.

\subsection{Decoding} 

The decoding is similar to~\cite{DGMS19}, and which is performed in the three steps given below.

\medskip \noindent {\bf Step 1: Apply the inverse quantum polar transform on the channel output state.}  Applying $G_q^\dagger$ on the output state $\psi_{{\cal ABC\bar{D}\bar{D'}}}$, we have that
\begin{align}
G_q^\dagger \psi_{{\cal ABC\bar{D}\bar{D'}}} G_q  &= (G_q^\dagger E_{{\cal ABC\bar{D}}}G_q\otimes I_{{\cal \bar{D'}}}) (\rho_{\cal A} \otimes \rho_\mathcal{B}^+  \otimes  \rho_\mathcal{C}^0  \otimes \Phi_{\cal \bar{D}\bar{D'}}) (G_q^\dagger E_{{\cal ABC\bar{D}}}^\dagger G_q \otimes I_{{\cal \bar{D'}}}) \nonumber \\
& =( E_{{\cal ABC\bar{D}}}' \otimes I_{{\cal \bar{D'}}}) (\rho_{\cal A} \otimes \rho_\mathcal{B}^+  \otimes  \rho_\mathcal{C}^0  \otimes \Phi_{\cal \bar{D}\bar{D'}}) (E_{{\cal ABC\bar{D}}}' \otimes I_{{\cal \bar{D'}}}) \nonumber,
\end{align}
where $E_\mathcal{ABCD}' \eqdef G_q^\dagger E_{{\cal ABC\bar{D}}}G_q$. Since $G_q$ is a $N$-qubit Clifford unitary, it follows that $ E_{{\cal ABC\bar{D}}}' \in \bar{P}_N$ is also a Pauli error.

\medskip \noindent \textbf{Step 2: Quantum measurement.} Let $E_\mathcal{ABC\bar{D'}}' = \otimes_{a \in \mathcal{A}} E_a' \otimes_{b \in \mathcal{B}} E_b' \otimes_{c \in \mathcal{C}} E_c' \otimes_{d \in \bar{\mathcal{D}}} E_d'$, where $E_a', E_b', E_c', E_d' \in \bar{P}_1$. We know that any $E_i' \in \bar{P}_1$ can be written as $X^{u_1} Z^{u_2}$, where $u_1 u_2 \in \{0, 1\}^2$. The decoder performs the Pauli $X$ measurement on each $b \in \mathcal{B}$, which determines the $Z$ part ($u_2$) corresponding to $E_{b}'$, and the Pauli $Z$ measurement on each $c \in \mathcal{C}$, which determines the $X$ part ($u_1$) corresponding to $E_{c}'$. Finally, the decoder performs the \emph{Bell measurement}, that is, the measurement corresponding to the Pauli operators $X \otimes X$ and $Z \otimes Z$, on the two-qubit system $dd'$ for each $d \in \bar{\mathcal{D}}$, which determines both $X$ and $Z$ parts ($u_1u_2$) corresponding to $E_d'$.
% therefore freezing the input to the channel $\mathcal{W}^{(d)}$ for any $ d \in \mathcal{D}$.
\medskip

\medskip \noindent {\bf Step 3: Decode the classical counterpart polar code.} Note that when the {\em all-identity vector}  $I^N \in \bar{P}_1^N$ is input to the $N$ instances of the classical counterpart ${\cal W}^\#$, denoted by ${{\cal W}^\#}^{N}$, the error $E_{{\cal ABC\bar{D}}} \in \bar{P}_N$ can be considered as an output of ${{\cal W}^\#}^{N}$. As $\mathcal{W}^\#$ is a symmetric channel, we have that ${{\cal W}^\#}^{N}(E_{{\cal ABC\bar{D}}} \mid I^N) = {{\cal W}^\#}^{N}(I^N \mid E_{{\cal ABC\bar{D}}})$, therefore, we can equivalently consider $I^N$ as the observed channel output, and  $E_{{\cal ABC\bar{D}}}$ (unknown) the channel input. Hence, we have been given, 
\begin{itemize}

\item $u_2$ corresponding to $E'_b$ for any $b \in \mathcal{B}$.

\item $u_1$ corresponding to $E'_c$ for any $c \in \mathcal{C}$.

\item  $u_1 u_2$ corresponding to $E'_d$ for any $d \in \bar{\mathcal{D}}$.

\item A noisy observation (namely $I^N$) of the error $E_{{\cal ABC\bar{D}}} = G_c E'_{{\cal ABC\bar{D}}}$, where  $ G_c E'_{{\cal ABC\bar{D}}} \break = G_q E'_{{\cal ABC\bar{D}}} G_q^\dagger$.

\end{itemize}

Based on the above, we can use classical polar decoding, namely the \emph{successive cancellation} decoding, to recover the value of $u_1u_2$ corresponding to $E'_a$ for all $ a \in \mathcal{A}$, $u_1$ corresponding to $E'_b$ for all $ b \in \mathcal{B}$, and $u_2$ corresponding to $E'_c$ for all $ c \in \mathcal{C}$. 

%\subsection{Number of Good channels:} 
% that is, the input to the partial channel ${W^{(b)}}^{[1]}$ for any $ b \in \mathcal{B}$,
%that is, the input to the partial channel $W^{(d)}$ for any $ d \in \mathcal{D}$

\subsection{Number of Preshared EPR pairs}

In this section, we give an upper bound on $\frac{|\mathcal{D}|}{N}$, that is, the fraction of virtual channels requiring preshared EPR pairs, and also a lower bound on $\frac{|\mathcal{B}| + |\mathcal{C}|}{N}$, that is, the fraction of virtual channels frozen in either the Pauli $X$ or $Z$ basis. 

\begin{proposition} \label{lem:upper_D_lower_B+C}
Following inequalities hold for sufficiently large $N$,
\begin{itemize}
\item[(a)] $\frac{|\mathcal{D}|}{N} \leq  Z(W^{[1]}) Z(W^{[2]}).$
\item[(b)] $\frac{|\mathcal{B}| + |\mathcal{C}|}{N} \geq 2 - I(W) - 2 Z(W^{[1]}) Z(W^{[2]}) ,$ where $I(W)$ is the symmetric mutual information of $W$.
\end{itemize}

\end{proposition}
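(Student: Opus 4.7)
The plan is to deduce part (a) by exhibiting $\{Z(W^{[1]}_m) Z(W^{[2]}_m) : m \geq 0\}$ as a super-martingale on the probability space introduced in Section~\ref{sec:multilevel}, and then derive part (b) from (a) combined with the conservation of symmetric mutual information~(\ref{eq:I_preserve}).

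For part (a), I would first establish the super-martingale property
\begin{equation*}
E\bigl[Z(W^{[1]}_{m+1}) Z(W^{[2]}_{m+1}) \bigm| \mathcal{F}_m\bigr] \leq Z(W^{[1]}_m) Z(W^{[2]}_m).
\end{equation*}
The left-hand side is the average over $i_1 i_2 \in \{0,1\}^2$ of the four products $Z((W_m^{i_1 i_2})^{[1]}) Z((W_m^{i_1 i_2})^{[2]})$. Setting $a = Z(W_m^{[1]})$, $b = Z(W_m^{[2]})$, $z_1 = Z_1(W_m)$ and $z_2 = Z_2(W_m)$, and applying Lemma~\ref{lem:partial_canvirtual_inequality} together with~(\ref{eq:ineq_Zgood_1})--(\ref{eq:ineq_Zgood_2}) twice (first to transform $W_m \mapsto W_m^{i_1}$ and then $W_m^{i_1} \mapsto W_m^{i_1 i_2}$), each of the four products is bounded above by an explicit polynomial in $(a,b,z_1,z_2)$. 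Invoking the auxiliary comparisons $z_1 \leq b$ and $z_2 \leq a$ from Lemma~\ref{lem:inequal} (point $2$), a short algebraic expansion shows that these four polynomials sum to at most $4ab$, which gives the claimed super-martingale inequality. Once this is in hand, Lemmas~\ref{lem:pol_cond} and~\ref{lem:supermartingale} imply that each $Z(W_m^{[d]})$ converges almost surely to a $\{0,1\}$-valued limit $Z_\infty^{[d]}$, so bounded convergence together with the super-martingale yields $P(Z_\infty^{[1]} = Z_\infty^{[2]} = 1) = E[Z_\infty^{[1]} Z_\infty^{[2]}] \leq Z(W^{[1]}) Z(W^{[2]})$. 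By Lemma~\ref{lem:sub_set} and~(\ref{eq:pol_4}), this event coincides, up to a null set, with $D = T_1(\delta) \cap T_2(\delta)$ for any $\delta < 1/2$, and since $|\mathcal{D}|/N \to P(D)$ as $N \to \infty$, part (a) follows.

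For part (b), I would use that $\{I(W_m) : m \geq 0\}$ is a martingale: iterating~(\ref{eq:I_preserve}) twice gives $\sum_{i_1 i_2} I(W^{i_1 i_2}) = 4\, I(W)$. On $A$, inequality~(\ref{eq:I(W)_lower_bound}) forces $I(W_m) \to 2$; on $B \cup C$, Lemma~\ref{lem:two_0_I(W)_I(W_3)} gives $I(W_m) \to 1$; and on $D$, Lemma~\ref{lem:two_1_imp_thre_1} together with~(\ref{eq:I(W)_upper_bound}) gives $I(W_m) \to 0$. Since $I(W_m) \in [0, 2]$ is uniformly bounded and $P(A \cup B \cup C \cup D) = 1$ by Theorem~\ref{thm:main_multi_level}, bounded convergence yields $I(W) = 2 P(A) + P(B) + P(C)$. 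Combined with $P(A) + P(B) + P(C) + P(D) = 1$, this rearranges to $P(B) + P(C) = 2 - I(W) - 2\, P(D)$, and substituting the bound of (a) on $P(D) = \lim_{N \to \infty} |\mathcal{D}|/N$ gives (b).

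The hard part is the super-martingale identity leading to the bound $4ab$ on the sum of the four products: the four upper bounds are inhomogeneous in $(a, b, z_1, z_2)$ and mix degree-$2$, degree-$3$, and degree-$4$ monomials, so the cancellation occurs only after applying $z_1 \leq b$ and $z_2 \leq a$ at precisely the right places. A secondary point to verify is the convergence of $I(W_m)$ to its limits on each of $A, B, C, D$, but this is routine from the quantitative estimates already collected in Section~\ref{sec:noisy_half_noiseless}.
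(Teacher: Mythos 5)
Your proposal is correct and follows essentially the same path as the paper: establish the super-martingale property of $Z(W_m^{[1]})Z(W_m^{[2]})$ from Lemmas~\ref{lem:partial_canvirtual_inequality} and~\ref{lem:inequal}, conclude $P(D)\leq Z(W^{[1]})Z(W^{[2]})$, and then combine this with the conservation of symmetric mutual information to obtain part~(b). One simplification: the paper verifies the super-martingale inequality over a \emph{single} splitting step, where (\ref{eq::ineq_1})--(\ref{eq::ineq_4}) give $\sum_{i_1\in\{0,1\}}Z\bigl({W^{i_1}}^{[1]}\bigr)Z\bigl({W^{i_1}}^{[2]}\bigr)\leq\bigl(2-Z(W^{[2]})+Z_1(W)\bigr)Z(W^{[1]})Z(W^{[2]})\leq 2\,Z(W^{[1]})Z(W^{[2]})$ directly after $Z_1(W)\leq Z(W^{[2]})$, so the two-step algebraic expansion and delicate cancellations you flag as the hard part can be bypassed entirely by iterating the one-step bound.
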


\begin{proof}
\textbf{Point $(a)$:} From~(\ref{eq::ineq_1})-(\ref{eq::ineq_4}), we have the following for any $W$,
\begin{align}
Z({W^0}^{[1]})Z({W^0}^{[2]}) &\leq \left(2 - Z(W^{[2]})\right) Z(W^{[1]})Z(W^{[2]}). \nonumber \\
Z({W^1}^{[1]})Z({W^1}^{[2]}) &\leq Z_1(W)Z(W^{[1]})Z(W^{[2]}). \nonumber
\end{align}
Using the above two equations, we have that
\begin{small}
\begin{align}
\sum_{i_1 \in \{0, 1\}} Z({W^{i_1}}^{[1]})Z({W^{i_1}}^{[2]}) &\leq 2 Z(W^{[1]})Z(W^{[2]}) - \left(Z(W^{[2]}) - Z_1(W)\right)Z(W^{[1]})Z(W^{[2]}), \label{eq:z1z2_supmar0} \\
&\leq 2 Z(W^{[1]})Z(W^{[2]}),\label{eq:z1z2_supmar}
\end{align}\end{small}
\noindent where the second inequality follows from $Z(W^{[2]}) \geq Z_1(W)$. Applying~(\ref{eq:z1z2_supmar}) recursively, for any $W^{(i)}$, with $i_1 \cdots i_n \in \{0, 1\}^n$ being the binary representation of $i \in \{0, \dots, N - 1\}$, we have that
\begin{align} \label{eq:z1z2_supmar_1}
\sum\limits_{i = 0}^{N - 1} Z({W^{(i)}}^{[1]}) Z({W^{(i)}}^{[2]}) \leq 2^n Z(W^{[1]}) Z(W^{[2]}).
\end{align}
%Note that one can get more stronger bounds by recursively applying~(\ref{eq:z1z2_supmar0}) instead of ~(\ref{eq:z1z2_supmar}). However, as we only have upper bounds on $Z({W^{i_1}}^{[d]})$ and $Z_d(W^{i_1})$ for all $ i_1 \in \{0, 1\}$, .
\medskip\noindent We know from Theorem~\ref{thm:main_multi_level} that for sufficiently large $N = 2^n$, any  $ i \in \{0, \dots, N-1\}$ belongs to one of the sets $\mathcal{A}$, $\mathcal{B}$, $\mathcal{C}$ and $\mathcal{D}$ with probability 1. Further, we have that
\[
     Z({W^{(i)}}^{[1]}) Z({W^{(i)}}^{[2]}) \to 
\begin{cases}
    1,& \text{if }  i  \in \mathcal{D}. \\
    0,              & \text{otherwise}.
\end{cases}
\]
Therefore, from~(\ref{eq:z1z2_supmar_1}), it follows that
\begin{align*}
|\mathcal{D}| \leq N Z(W^{[1]}) Z(W^{[2]}).
\end{align*}

\noindent \textbf{Point $(b)$:} Recursively applying~(\ref{eq:I_preserve}), we have that
\begin{equation}
\sum_{i=0}^{N-1} I(W^{(i)}) = N I(W). 
\end{equation}
We know from Section~\ref{sec:noisy_half_noiseless} that $I(W^{(i)}) \to 2$ for $i \in \mathcal{A}$, $I(W^{(i)}) \to 1$ for $i \in \mathcal{B} \cup \mathcal{C}$, and $I(W^{(i)}) \to 0$ for $i \in \mathcal{D}$. Thus, we have that
\begin{align}
2|\mathcal{A}| + |\mathcal{B}| + |\mathcal{C}| = N I(W).
\end{align}
Any $i$ belongs to one of the sets $\mathcal{A}$, $\mathcal{B}$, $\mathcal{C}$ and $\mathcal{D}$ with probability $1$, therefore, 
\begin{align}
\frac{|\mathcal{A}| + |\mathcal{B}| + |\mathcal{C}|  + |\mathcal{D}|}{N} \to 1.
\end{align}
From the above two equations, we have that
\begin{align}
|\mathcal{B}| + |\mathcal{C}| + 2|\mathcal{D}| \approx N \left( 2 - I(W) \right).
\end{align}
Since $|\mathcal{D}| \leq N Z(W^{[1]}) Z(W^{[2]})$ from part $(a)$, we have that
\begin{equation*}
|\mathcal{B}| + |\mathcal{C}|  \geq N \left( 2 -  I(W) - 2 Z(W^{[1]}) Z(W^{[2]})  \right). \qedhere
\end{equation*}
\end{proof}

\noindent The upper bounds in points $(a)$ and $(b)$ of the above lemma are not strict in general as one can get a stronger bound by recursively applying~(\ref{eq:z1z2_supmar0}) instead of~(\ref{eq:z1z2_supmar}) to evaluate $\sum_{i = 0}^{N - 1} Z({W^{(i)}}^{[1]}) Z_2({W^{(i)}}^{[2]})$ in~(\ref{eq:z1z2_supmar_1}). However, here it is not possible to apply~(\ref{eq:z1z2_supmar0}) recursively as we only have upper bound for $Z({W^{i_1}}^{[2]})$ when $ i_1 = 1$.

%$\sum\limits_{i = 0}^{N - 1} Z({W^{(i)}}^{[1]}) Z_2({W^{(i)}}^{[2]}) \leq N \left[Z(W^{[1]}) Z(W^{[2]}) -\left(Z(W^{[2]}) - Z_1(W)\right)Z(W^{[1]})Z(W^{[2]}) \right]$.

\subsection{Speed of Polarization}

The reliability of the successive cancellation decoding depends on the speed of polarization, that is, if polarization happens fast enough, the block error probability of the successive cancellation decoding goes to zero. In this section, using the results from~\cite{at09}, we give a fast polarization property, which ensures reliable decoding of the quantum polar code constructed in the previous section.

\begin{proposition}
Let $W:=\mathcal{W}^\#$ be the classical counterpart of a CMP channel $\mathcal{W}$, and consider the quantum polar construction on $\mathcal{W}$ for $n$ polarization steps, using the two-qubit Clifford gate $L$ as channel combining operation. If $P_e^B$ is the block error probability of the successive cancellation decoding, then we have the following as $n \to \infty$,
\begin{equation}
P_e^B = \mathcal{O}(2^{n} 2^{-2^{\beta n}}),
\end{equation}
for any  $0 < \beta <  \frac{1}{4}$.
\end{proposition}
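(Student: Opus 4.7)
The plan is to reduce the analysis to the Arikan--Telatar fast polarization theorem~\cite{at09} applied separately to the two Bhattacharyya processes $\{Z(W_m^{[d]})\}_{m\geq 0}$ for $d=1,2$, and then to bound $P_e^B$ by the standard successive cancellation union bound over partial-channel Bhattacharyya parameters.

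First, I revisit the inequalities $(\ref{eq::ineq_1})$--$(\ref{eq::ineq_4})$ that govern the evolution of $Z(W^{[d]})$ under the basic four-to-one transform $W_m\mapsto W_{m+1}=W_m^{B_{m+1}}$ already used in Lemma~\ref{lem:supermartingale}. Setting $Z_m:=Z(W_m^{[d]})$ for each fixed $d\in\{1,2\}$, those inequalities, together with point $2$ of Lemma~\ref{lem:inequal} and equations $(\ref{eq:ineq_Zgood_1})$--$(\ref{eq:ineq_Zgood_2})$, yield: (a) $Z_{m+1}\leq Z_m^2$ for two of the four equally likely values of $B_{m+1}$ (namely $B_{m+1}\in\{01,11\}$ when $d=1$ and $B_{m+1}\in\{10,11\}$ when $d=2$, exactly as in the proof of Lemma~\ref{lem:supermartingale}); and (b) $Z_{m+1}\leq 2Z_m - Z_m^2 \leq 2Z_m$ for the other two values of $B_{m+1}$. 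This is precisely the Arikan-type recursion with squaring probability $\tfrac12$ and multiplicative blow-up bounded by $2$, so the Arikan--Telatar theorem applies to each of the two processes, giving for every $\beta'<\tfrac12$,
\[
\lim_{m\to\infty} P\bigl[Z(W_m^{[d]})\leq 2^{-2^{\beta' m}}\bigr] \;=\; P\bigl[Z_\infty^{[d]}=0\bigr].
\]
By Lemma~\ref{lem:sub_set} and equations $(\ref{eq:pol_1})$--$(\ref{eq:pol_4})$, these limits equal the asymptotic fractions $|\mathcal{A}\cup\mathcal{B}|/N$ for $d=1$ and $|\mathcal{A}\cup\mathcal{C}|/N$ for $d=2$.

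Passing from the basic-transform index $m$ to the polarization depth $n=2m$ and letting $\beta=\beta'/2<\tfrac14$, I then redefine the sets $\mathcal{A},\mathcal{B},\mathcal{C}$ in $(\ref{eq:setA})$--$(\ref{eq:setD})$ using the tightened threshold $\delta_n:=2^{-2^{\beta n}}$ in place of the constant $\delta$. By the previous limit their asymptotic fractions are unchanged, yet every $i\in\mathcal{A}\cup\mathcal{B}$ now satisfies $Z({W^{(i)}}^{[1]})\leq\delta_n$, and every $i\in\mathcal{A}\cup\mathcal{C}$ satisfies $Z({W^{(i)}}^{[2]})\leq\delta_n$. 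Since the successive cancellation decoder of Section~\ref{sec:Qpolar_constr} reduces, for each index $i$, to decoding at most one binary bit through ${W^{(i)}}^{[d]}$ (for $i\in\mathcal{B}\cup\mathcal{C}$) or two such bits in sequence (for $i\in\mathcal{A}$), the standard $P_e\leq Z$ bound for binary-input channels combined with a union bound over all decoded bits yields
\[
P_e^B \;\leq\; \sum_{i\in\mathcal{A}}\bigl(Z({W^{(i)}}^{[1]}) + Z({W^{(i)}}^{[2]})\bigr) + \sum_{i\in\mathcal{B}} Z({W^{(i)}}^{[1]}) + \sum_{i\in\mathcal{C}} Z({W^{(i)}}^{[2]}) \;\leq\; 4 N \delta_n \;=\; \mathcal{O}\bigl(2^{n}\, 2^{-2^{\beta n}}\bigr).
\]

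The main obstacle is the two-step-per-round nature of the transform: the Arikan--Telatar doubly exponential rate is measured in units of $m$, and the identification $n=2m$ costs a factor of two in the inner exponent, which is exactly what forces $\beta<\tfrac14$ rather than the familiar $\beta<\tfrac12$ of the original binary polar code. A secondary point requiring care is justifying the union-bound expression for $P_e^B$ in the present multilevel setting, namely, that the Bhattacharyya parameter governing the SC error on the bit $x_d$ input to ${W^{(i)}}$ is precisely $Z({W^{(i)}}^{[d]})$; this follows from the definitions $(\ref{eq:partial_1})$--$(\ref{eq:partial_3})$ once one observes that conditioning on correctly decoded preceding bits reduces the $4$-ary input at each stage to the appropriate binary-input partial channel.
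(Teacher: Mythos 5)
Your proof follows essentially the same route as the paper: observe (via Lemma~\ref{lem:supermartingale} and inequalities~(\ref{eq::ineq_1})--(\ref{eq::ineq_4})) that each process $\{Z(W_m^{[d]})\}$ undergoes the squaring-with-probability-$\tfrac12$ and at-most-doubling recursion, invoke the Arikan--Telatar rate theorem to get $Z\leq 2^{-2^{\alpha m}}$ for $\alpha<\tfrac12$ on the appropriate index sets, then union-bound $P_e^B$ over the decoded bits and translate $m$ to $n=2m$ to land on $\beta<\tfrac14$. The only cosmetic difference is that the paper bounds the error on $\mathcal{A}$-indices via the $4$-ary symbol error estimate $P_e(W^{(a)})\leq 3Z(W^{(a)})$ from~\cite{sta09} and then passes to $Z({W^{(a)}}^{[1]})+Z({W^{(a)}}^{[2]})$ via Lemma~\ref{lem:inequal}, whereas you go directly to the two partial-channel Bhattacharyya parameters by a bit-by-bit decomposition — the resulting bound and asymptotics are identical.
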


%Then, for all the virtual channels such that $Z({W^{\omega_i}}^{[d]}) \to 0 $, with $d \in \{1, 2\}$, we have that $Z({W^{(i)}}^{[d]}) \leq 2^{-2^{\alpha m}}, \text{ for any } 0< \alpha < \frac{1}{2}$.
%
%
%For any $\theta > 0$ and $R < \mathtt{I}(W)$, there exists a sequence of sets $\mathcal{I}_N \subset \{0,\dots,N-1\}$, $N\in\{1, 2, \cdots,2^n,\cdots\}$, such that $|\mathcal{I}_N| \geq NR$ and $Z\left( W^{(i)} \right) \leq O\left(N^{-(1+\theta)}\right)$, $\forall i\in \mathcal{I}_N$. In particular, the block error probability of  polar coding under SC decoding satisfies
%\begin{equation}
%P_e(N, \mathcal{I}_N) \leq O\left(N^{-\theta}\right).
%\end{equation}

\begin{proof}
From~(\ref{eq:bad_bad_can1})-(\ref{eq:good_good_can1}) and~(\ref{eq:bad_bad_can2})-(\ref{eq:good_good_can2}), for all $d=1, 2$ and $\omega_i \in \{0, 1\}^2$, we have that

\[ Z({W^{\omega_i}}^{[d]})  \leq 
\begin{cases}
2 Z(W^{[d]}), \text{ with probability } \frac{1}{2}\\
Z(W^{[d]})^2, \text{ with probability } \frac{1}{2}
\end{cases} 
\]
Therefore, from~\cite{at09}, for any sequence $ \omega = \omega_1 \cdots \omega_m$, with $n =2m$, and $\omega_k \in \{0, 1\}^2, \forall k > 0$, such that  $Z({W^{(\omega)}}^{[d]}) \to 0$ as $m \to \infty$, we have that
\begin{equation} \label{eq:arikan_fast_polarization}
Z({W^{(i)}}^{[d]}) \leq 2^{-2^{\alpha m}}, \text{ for any } 0< \alpha < \frac{1}{2}.
\end{equation}
From~(\ref{eq:event_S^d}), the condition $Z({W^{(\omega)}}^{[d]}) \to 0$ as $m \to \infty$ implies that  $\omega \in S^{[d]}(\delta)$ with $\delta \to 0 $. From~(\ref{eq:pol_1})-(\ref{eq:pol_4}), we know that $S^{[1]}(\delta) = A \cup B$ and $S^{[2]}(\delta) = A \cup C$. Therefore, the above equation holds for $\omega \in A \cup B$, when $d =1$, and $\omega \in A \cup C$, when $d = 2$.
%for $d = 1$ when $i \in \mathcal{A} \cup \mathcal{C}$ and $d = 2$ when  $i \in \mathcal{A} \cup \mathcal{B}$.

\medskip\noindent From~\cite[Proposition 2]{sta09}, the symbol error probability of the maximum likelihood decoder, denoted by $P_e$, is upper bounded as $P_e(W) \leq 3 Z(W)$, and $P_e(W^{[d]}) \leq Z(W^{[d]})$. Therefore, the block error probability of the successive cancellation decoding, $P_e^B$, can be upper bounded for sufficiently large codelength $2^n$ as follows
\begin{align}
P_e^B &\leq \sum_{a \in \mathcal{A}}3 Z(W^{(a)}) + \sum_{b \in \mathcal{B}} Z({W^{(b)}}^{[1]}) + \sum_{c \in \mathcal{C}} Z({W^{(c)}}^{[2]}) \nonumber  \\
& \leq  \sum_{a} 2 \left[Z({W^{(a)}}^{[1]}) + Z({W^{(a)}}^{[2]}) \right] + \sum_{b \in B} Z({W^{(b)}}^{[1]}) + \sum_{c \in C}Z({W^{(c)}}^{[2]})  \nonumber \\
&\leq ( 	4|\mathcal{A}| +|B| + |C|) 2^{-2^{\alpha m}}, \text{ as }  m \to \infty  \nonumber \\
&\leq 2^{n+2} 2^{-2^{\alpha m}} \nonumber,
\end{align}
where the second inequality uses [Lemma~\ref{lem:inequal}, Point 2] and the third inequality follows from~(\ref{eq:arikan_fast_polarization}). Therefore, $P_e^B = \mathcal{O}(2^{n} 2^{-2^{\beta n}})$ for any $0 < \beta <  \frac{1}{4}$.
\end{proof}
The above proposition implies that $P_e^B \to 0$ as $N \to \infty$, hence, the decoding is reliable for sufficiently large $N$.

%the speed of polarization is slower than the classical polarization speed for which $\beta < \frac{1}{2}$. To improve the speed of polarization, we propose an alternative construction as following,
\section{An alternative construction} \label{sec:alt_constr}
In this section, we introduce an alternative construction, the goal of which is to improve the speed of the multilevel polarization. For a quantum erasure channel, we show in the next section with the help of a computer program that the multilevel polarization occurs significantly faster for the alternative construction compared to the previous construction.

\medskip Firstly, we note the following circuit equivalence,
\begin{figure}[H]
   \centering
   \resizebox{ 0.9 \linewidth }{!}{
     \begin{tikzpicture}
     \begin{scope}
        \draw
            (0,0) node[draw](hd1){$H$} 
            (hd1.east) ++(1, 0) node[draw](can1) {$\mathcal{W}$} 
            (hd1.east) to (can1.west)
            (can1.east) to ++(1, 0)
            
            (hd1) ++(0, -1) node[draw] (hd2) {$H$} 
            (hd2.east) ++(1, 0) node[draw](can2) {$\mathcal{W}$} 
            (hd2.east) to (can2.west)
            (can2.east) to ++(1, 0)
            
            (hd2) ++(0, -2) node[draw] (hd3) {$H$} 
            (hd3.east) ++(1, 0) node[draw](can3) {$\mathcal{W}$} 
            (hd3.east) to (can3.west)
            (can3.east) to ++(1, 0)

            (hd3) ++(0, -1) node[draw] (hd4) {$H$} 
            (hd4.east) ++(1, 0) node[draw](can4) {$\mathcal{W}$} 
            (hd4.east) to (can4.west)
            (can4.east) to ++(1, 0)
            ;
         \draw 
         (hd1) to ++ (-1, 0) node[not](n1){} 
          (hd2) to ++(-1, 0) node[phase](c1){} 
          (n1) to ++(-2.75, 0) ++(-0.5, 0) node[draw](hd6){$H$} 
          (hd6.east) to ++(0.20, 0)
          (hd6.west) to ++(-1, 0) node[not](n2){}  to ++(-1, 0)
          (c1) to ++(-0.75, 0) to ++(-2, -2) ++(-0.5, 0) node[draw](hd5){$H$}
          (hd5.east) to ++(0.20, 0)
          (hd5.west) to ++(-1, 0) node[not](n3){} to ++(-1, 0)
          
         (hd3) to ++ (-1, 0) node[not](n4){} 
          (n4) to ++ (-0.75, 0) to ++(-2, 2) ++(-0.5, 0) node[draw] (hd7){$H$}
          (hd7.east) to ++(0.20, 0)
          (hd7.west) to ++(-1, 0) node[phase](c2){} to ++(-1, 0)
          (hd4) to ++(-1, 0) node[phase](c4){} 
          (c4) to ++(-2.75, 0) ++(-0.5, 0) node[draw] (hd8){$H$}
          (hd8.east) to ++(0.20, 0)
          (hd8.west) to  ++(-1, 0) node[phase](c3){} to ++(-1, 0)
           ;
          \draw
           (n1) to (c1)
           (n2) to (c2)
           (n4) to (c4)
           (n3) to (c3)
           (can1) ++(2, -2.0) node[](){$\equiv$}
           ;
\end{scope}
\begin{scope}[xshift = 10 cm]
  \draw
            (0,0) node[draw]  (can1) {$\mathcal{W}$} 
            (can1.east) to ++(1, 0)
            (can1) ++(0, -1) node[draw] (can2) {$\mathcal{W}$} 
            (can2.east) to ++(1, 0)
            (can1) ++(0, -3) node[draw] (can3) {$\mathcal{W}$}
            (can3.east) to ++(1, 0)
            (can3) ++(0, -1) node[draw] (can4) {$\mathcal{W}$}   
            (can4.east) to ++(1, 0)
            ;
          \draw 
          (can1) to ++ (-1, 0) node[phase](n1){} 
          (can2) to ++(-1, 0) node[not](c1){} 
          (n1) to ++(-3.75, 0) node[not](n2){} to ++(-1, 0)
          (c1) to ++(-0.75, 0) to ++(-2, -2) to ++(-1, 0) node[not](n3){} to ++(-1, 0)
          (can3) to ++ (-1, 0) node[phase](n4){} 
          (n4) to ++ (-0.75, 0) to ++(-2, 2) to ++(-1, 0) node[phase](c2){} to ++(-1, 0)
          (can4) to ++(-1, 0) node[not](c4){} 
          (c4) to ++(-3.75, 0) node[phase](c3){} to ++(-1, 0)
           ;
           \draw
           (n1) to (c1)
           (n2) to (c2)
           (n4) to (c4)
           (n3) to (c3)
           ;
\end{scope}
\end{tikzpicture}
}

{ (a) \hspace*{70mm} (b) }
\caption{(a) and (b) are equivalent quantum circuits.}
\end{figure}
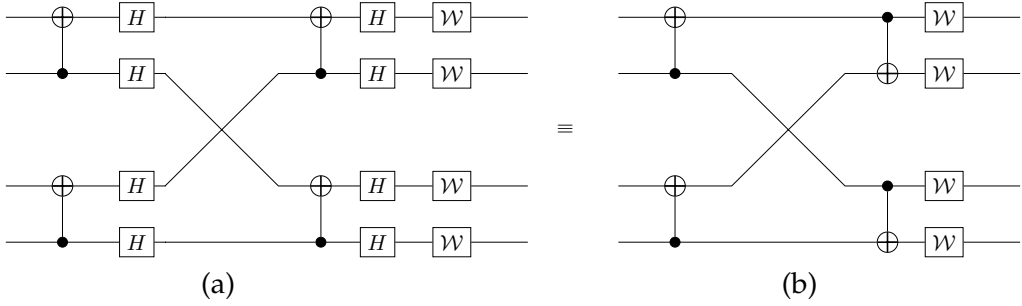
 In circuit $(b)$, the CNOT gate is used in both the first and second polarization step, however, the control and target are interchanged after the first step. To make this clear, we denote by $L_1$ and $L_2$ the CNOT gate in the first and second step, respectively. The quantum circuits $(a)$ and $(b)$ are equivalent in the sense that given any $4$-qubit quantum state as input, the outputs of quantum circuits $(a)$ and $(b)$ are identical. Therefore, the virtual channels obtained after two steps of channel combining and splitting are equal for both circuits $(a)$ and $(b)$. Hence, the multilevel polarization theorem from the previous section also holds when CNOT gates $L_1$ and $L_2$ are used as channel combining operation alternatively for odd and even  polarization steps, respectively. In other words, $L_1$ is used to combine two copies of $\mathcal{W}$, and then $L_2$ is used to combine two copies of $\mathcal{W}^{i_1}$, for all $i_1 \in \{0, 1\}$, again $L_1$ is used to combine two copies of $\mathcal{W}^{i_1i_2}$, for all $i_1, i_2 \in \{0, 1\}$, and so on.
 
\medskip\noindent Here, we propose an alternative construction, where instead of using $L_1$ and $L_2$ for odd and even steps of polarization, an optimal choice is made at each polarization step, using the classical counterpart viewpoint as follows.

\medskip   Let $\Gamma_1$ and $\Gamma_2$ be the permutations asscoiated with $L_1$ and $L_2$, respectively. We define $ T(\Gamma, W) \eqdef Z\left((W \varoast_\Gamma W)^{[1]}\right) + Z\left((W \varoast_{\Gamma} W)^{[2]}\right)$, where $W$ is the classical counterpart of $\mathcal{W}$. For combining two copies of $W$, the permutation $\Gamma \in \{\Gamma_1, \Gamma_2\}$ is selected as channel combining operation if the following holds,
\begin{equation} \label{eq:choose_channel_comb}
T(\Gamma, W) = \displaystyle \text{min}\Big\{T(\Gamma_1, W), T(\Gamma_2, W)\Big\}.
\end{equation}
 A similar selection process takes place at each polarization step, so that two copies of a virtual channel $W^{i_1 \cdots i_k}$ are combined using the permutation $\Gamma^{(i_1 \cdots i_k)} \in \{\Gamma_1, \Gamma_2\}$ minimizing $T(\Gamma^{(i_1 \cdots i_k)},  W^{i_1 \cdots i_k} )$.

%Therefore, virtual channels $\mathcal{W}^{(i)}$ for any $i = i_1 \cdots i_n \in \{0,1\}^n$ is synthesised using a sequence of gates  $L^{i_1 \cdots i_{n-1}} = \{L^1, L^2 \cdots L^{n-1}\}$ with $L^k \in \{L_1, L_2\}, \forall k>0$, such that $L^1$ is optimal for $W^{i_1}$, $L^2$ is optimal for $W^{i_1i_2}$ and so on. Equivalently, ${\mathcal{W}^\#}^{(i)}$ is synthesised using the corresponding sequence of permutations  $\Gamma(L)^{i_1 \cdots i_{n-1}} = \{\Gamma(L^1), \Gamma(L^2) \cdots \Gamma(L^{n-1})\}$.
%The synthesised virtual channels on two copies of $W$, are $ W \boxast_{\Gamma_i} W$ and $ W \varoast_{\Gamma_i} W$ for a given $L_i \in \{L_1, L_2\}$.
%For combining two copies of $\mathcal{W}$, $L_i \in \{L_1, L_2\}$ with the smaller value of $T(\Gamma, W)$, that is, $T(\Gamma, W) = \min_{\Gamma' \in \{\Gamma_1, \Gamma_2\}} T(\Gamma', W) $, is chosen as channel combining operation. 

\medskip\noindent We now give the following lemma for the Bhattacharya parameter of partial channels associated with virtual channels, $W \boxast W$ and $W \varoast W$, using permutations $\Gamma_1$ and $\Gamma_2$.
\begin{lemma}
Let $ W^0 \eqdef W \boxast W$ and $ W^1 \eqdef W \varoast W$, and for $x \leq a$, $y \leq b$, we denote $(x, y) \leq (a, b)$. When $\Gamma_1$ is used as channel combining operation, we have that
 \[
     \left(Z({W^i}^{[1]}), Z({W^i}^{[2]})\right)  \leq
\begin{cases}
   \left(Z(W^{[1]}), 2Z(W^{[2]}) - Z(W^{[2]})^2 \right) ,& \text{when }  i = 0,\\
    \left(Z(W^{[1]}), Z_1(W) Z(W^{[2]}) \right),              &  \text{when }  i = 1,
\end{cases}
\]
and when $\Gamma_2$ is used as channel combining operation, we have that
\[
     \left(Z({W^i}^{[1]}), Z({W^i}^{[2]})\right) \leq 
\begin{cases}
   \left(2Z(W^{[1]}) - Z(W^{[1]})^2, Z(W^{[2]})  \right) ,& \text{when }  i = 0.\\
    \left(Z_2(W) Z(W^{[1]}), Z(W^{[2]}) \right),              &  \text{when }  i = 1.
\end{cases}
\]
\end{lemma}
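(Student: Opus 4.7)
The plan is to prove all four inequalities (two for each of $\Gamma_1$ and $\Gamma_2$) by direct computation of the partial channels, and then appeal to either the classical Arikan bound for binary-input channels or to the data-processing inequality for the Bhattacharyya parameter. First, I would identify the symplectic action of the two permutations. Taking $\Gamma_1$ to be generated by the CNOT with control on qubit $1$ and target on qubit $2$, one finds $\Gamma_1(u_1u_2, v_1v_2) = (u_1(u_2 \oplus v_2),\, (u_1 \oplus v_1)\, v_2)$, and $\Gamma_2$ is obtained by swapping control and target, which amounts to conjugating $\Gamma_1$ by the bit-swap permutation within each qubit. Because of this symmetry, the four inequalities for $\Gamma_2$ follow from those for $\Gamma_1$ by exchanging the labels $[1]$ and $[2]$ of the partial channels, so it suffices to treat $\Gamma_1$.

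For the bad channel $W \boxast_{\Gamma_1} W$, I substitute the explicit form of $\Gamma_1$ into the definition and simplify each partial channel. Averaging over $u_2$ and applying the change of variable $v_2 \mapsto u_2 \oplus v_2$ decouples the two factors and yields $(W \boxast_{\Gamma_1} W)^{[1]}(y_1 y_2 | u_1) = W^{[1]}(y_1 | u_1)\, Q(y_2)$, where $Q$ is the marginal output distribution; since $y_2$ is pure noise with respect to $u_1$, the Bhattacharyya parameter of this channel equals $Z(W^{[1]})$. Averaging instead over $u_1$ yields $(W \boxast_{\Gamma_1} W)^{[2]}(y_1 y_2 | u_2) = \tfrac{1}{2}\sum_{v_2} W^{[2]}(y_1 | u_2 \oplus v_2)\, W^{[2]}(y_2 | v_2)$, which is precisely the classical Arikan bad channel applied to the binary-input channel $W^{[2]}$; the bound $Z \leq 2Z(W^{[2]}) - Z(W^{[2]})^2$ is then the classical Arikan inequality.

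For the good channel $W \varoast_{\Gamma_1} W$, summing over $v_1$ produces the explicit expression $(W \varoast_{\Gamma_1} W)^{[2]}(y_1 y_2 u_1 u_2 | v_2) = \tfrac{1}{4}\, W(y_1 | u_1, u_2 \oplus v_2)\, W^{[2]}(y_2 | v_2)$, and the Bhattacharyya factorises as a product of two sums: the $y_2$ factor contributes $Z(W^{[2]})$, while the $(y_1, u_1, u_2)$ factor evaluates, after identifying pairs $(u_1, u_2)$ and $(u_1, u_2 \oplus 1)$ as differing by $d = 01$, to $4 Z_1(W)$, giving the claimed equality $Z((W \varoast_{\Gamma_1} W)^{[2]}) = Z_1(W)\, Z(W^{[2]})$. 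For the remaining bound on $(W \varoast_{\Gamma_1} W)^{[1]}$ the direct Bhattacharyya expansion does not factor cleanly, so the plan is to use a data-processing argument: discarding $(y_1, u_2)$ from the output leaves a reduced channel with the explicit form $P(y_2, u_1 | v_1) = \tfrac{1}{2}\, W^{[1]}(y_2 | u_1 \oplus v_1)$, whose Bhattacharyya is exactly $Z(W^{[1]})$, and since discarding output can only increase the Bhattacharyya, the full channel satisfies $Z \leq Z(W^{[1]})$.

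The main obstacle is the bookkeeping of the many changes of summation variables and the correct identification of the reduced channel for the data-processing step; no new tool beyond the classical Arikan inequality $Z(W \boxast W) \leq 2Z(W) - Z(W)^2$ and the data-processing inequality for the Bhattacharyya parameter is needed.
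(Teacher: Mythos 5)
Your proposal is correct and follows essentially the same approach as the paper intends (the paper omits the proof, pointing to the same direct computation as in the proof of Lemma~\ref{lem:partial_canvirtual_inequality}). Your explicit form of $\Gamma_1$ as the permutation $(u_1u_2,v_1v_2)\mapsto\bigl((u_1,\,u_2\oplus v_2),\,(u_1\oplus v_1,\,v_2)\bigr)$ is the right one for the stated $\Gamma_1$ bounds, and I verified that all four derived channels and their Bhattacharyya parameters come out as you claim; in particular $Z\bigl((W\boxast_{\Gamma_1}W)^{[1]}\bigr)=Z(W^{[1]})$ and $Z\bigl((W\varoast_{\Gamma_1}W)^{[2]}\bigr)=Z_1(W)Z(W^{[2]})$ are in fact equalities, with the other two being the Arikan bad-channel bound on $W^{[2]}$ and a data-processing bound, respectively.

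Two small remarks. First, when you deduce the $\Gamma_2$ case by conjugating with the bit-swap $\sigma$ (i.e.\ $\Gamma_{H\otimes H}$), you should state that the swap also carries $Z_1(W)\leftrightarrow Z_2(W)$, not only $W^{[1]}\leftrightarrow W^{[2]}$; this is needed to turn $Z_1(W)Z(W^{[2]})$ into $Z_2(W)Z(W^{[1]})$ in the $i=1$ case for $\Gamma_2$, and it is an automatic consequence of your conjugation argument (since $\sigma$ sends the coset labeled $d=1$ to the coset labeled $d=2$), but it deserves an explicit word. This symmetry reduction is a clean shortcut relative to redoing the computation for $\Gamma_2$ from scratch, which is how the paper's omitted proof would likely go. Second, in the proof of Lemma~\ref{lem:partial_canvirtual_inequality} the paper gets the analogous last bound by a Cauchy--Schwarz step that amounts to discarding the output variable $u_1$; for the Hadamard-free permutation $\Gamma_1$ that choice does not factor cleanly (because $u_1$ enters both copies of $W$), and your alternative of discarding $(y_1,u_2)$ (or equivalently $u_2$ alone) is the natural adaptation. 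Data processing for $Z$ is just that same Cauchy--Schwarz applied to a coarsening of the output, so the two phrasings are the same tool.
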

\begin{proof}
We have omitted the proof of the lemma as it is basically the same proof as in  Lemma~\ref{lem:partial_canvirtual_inequality}.
\end{proof}
 It can be verified from the above inequalities that~(\ref{eq:z1z2_supmar}), \emph{i.e.}, {\small $\sum_{i_1 \in \{0, 1\}} Z({W^{i_1}}^{[1]})Z({W^{i_1}}^{[2]}) \break \leq 2 Z(W^{[1]})Z(W^{[2]})$}, holds for both $\Gamma_1$ and $\Gamma_2$. This implies that the upper bound on the number of preshared EPR pairs from point $(a)$ of Proposition~\ref{lem:upper_D_lower_B+C} holds for the alternative construction. It is also easy to verify that point $(b)$ of Proposition~\ref{lem:upper_D_lower_B+C} holds as well.

% Furhter, it is easy to see that supermartingale property from Lemma~ holds for both $\Gamma \in \{\Gamma_1, \Gamma_2\}$. Also, $ T(\Gamma, W) \eqdef Z\left((W \varoast_{\Gamma_i} W)^{[1]}\right) + Z\left((W \varoast_{\Gamma_i} W)^{[2]}\right)$ is smaller for the second construction by definition. Therefore, the alternative construction also polarizes.

%From above inequalities, polarization for the alternative construction can be proven similar to Theorem~\ref{thm:main_multi_level}. 

%Combining two copies of a CMP channel $\mathcal{W}$, using $L_i \in \{L_1, L_2\}$ as channel combining operation, yields good and bad channels $ \mathcal{W} \boxast_{L_i} \mathcal{W}$ and $ \mathcal{W} \varoast_{L_i} \mathcal{W}$, respectively, which depend on $L_i$. This gives virtual channels $ \mathcal{W}^\# \boxast_{\Gamma_i} \mathcal{W}^\#$ and $ \mathcal{W}^\# \varoast_{\Gamma_i} \mathcal{W}^\#$ on $\mathcal{W}^\#$, the classical counterpart of  $\mathcal{W}$. 

%We make the choice of $L_i$ is made based on the good channel of the classical counterpart $ \mathcal{W}^\# \varoast_{\Gamma_i} \mathcal{W}^\#$; the $L_i$ with the smaller value of parameter $ T(\Gamma, W) \eqdef Z\left((\mathcal{W}^\# \varoast_{\Gamma_i} \mathcal{W}^\#)^{[1]})\right) + Z\left((\mathcal{W}^\# \varoast_{\Gamma_i} \mathcal{W}^\#)^{[2]})\right)$, that is, $T(\Gamma, W) = \min_{\Gamma' \in \{\Gamma_1, \Gamma_2\}} T(\Gamma', W) $,  is chosen as channel combining operation.

\section{Quantum Erasure Channel} \label{sec:quantm_erasure}

In this section, using both the first and second construction, we construct quantum polar codes for a quantum erasure channel with the help of a computer program, and compare the two constructions in terms of their speeds of polarization.

\medskip\noindent Consider the following quantum erasure channel with erasure probability $\epsilon > 0$, 
\begin{align}
\mathcal{W}_E(\rho_A) = (1 - \epsilon) \ket{0}\bra{0}_F \otimes \rho_A  + \epsilon \ket{1} \bra{1}_F \otimes \frac{I_A}{2}.
\end{align}
The receiver is given a classical flag $F$ together with the quantum output $A$. If $F$ is found to be in the state $\ket{0}$, the output state is equal to the input state $\rho_A$, and if it is in the state $\ket{1}$, the output state is the maximally mixed state $\frac{I_A}{2}$. The quantum erasure channel is a CMP channel as it can be written as,
\begin{align}
\mathcal{W}_E(\rho_A)  = (1 - \epsilon) \ket{0}\bra{0}_F \otimes \mathcal{W}_0(\rho_A) + \epsilon \ket{1} \bra{1}_F \otimes \mathcal{W}_1(\rho_A),
\end{align}
where $\mathcal{W}_0(\rho_A) = \rho_A$ and $\mathcal{W}_1(\rho_A) = \frac{1}{4} [ \rho_A + X \rho_A X + Y \rho_A Y + Z \rho_A Z ] = \frac{I_A}{2}$ for any $\rho_{A}$, are clearly Pauli channels. 
%Lets recall that the classical counterpart of a pauli channel $\mathcal{W}_p = p_0 \rho_A +  p_1 Z \rho_A Z + p_2 X \rho_A X + p_3 Y \rho_A Y $, denoted by ${\cal W}_p^\#$, is the classical channel with input and output alphabets in $\bar{P}_1 = \{\{0, 1, 2, 3\}, \oplus\}$, with transition probabilities ${\cal W}^\#(i \mid j) = p_{i\oplus j}$.
Therefore, the classical counterpart channel $\mathcal{W}_E^\#$ is the classical mixture of Pauli channels $ \mathcal{W}_0^\#$ and $ \mathcal{W}_1^\#$ with probabilities $1 - \epsilon$ and $\epsilon$, respectively. Here, $ \mathcal{W}_0^\#$ is the identity channel as $\mathcal{W}_0^\#(i \mid j) = \delta_{ij}, \forall i, j \in \{0, 1\}^2$, and $ \mathcal{W}_1^\#$ completely randomizes the two-bit input as $\mathcal{W}_1^\#( i \mid j) = \frac{1}{4}, \forall i, j \in \{0, 1\}^2$. Thus, $\mathcal{W}_E^\#$ can be considered as a classical erasure channel with two-bits $x_1 x_2 \in \{0, 1\}^2$ as input and the erasure probability $\mathcal{W}_E^\#(? , ? \mid x_1 , x_2) = \epsilon$. Here, symbol $?$ represents the erasure of a bit. 

\medskip\noindent For the sake of clarity, we denote $ W \eqdef \mathcal{W}_E^\#$ from now on. For $W$, the two bits of the input $x_1,x_2$ is either transmitted perfectly with the probability $ 1 - \epsilon$, or both bits are erased with the probability $\epsilon$. However, polarizing $W$ yields virtual channels that may erase only one bit either $x_1$ or $x_2$ (see also Lemma~\ref{lem:erasure_bht} below). For this reason, we define a more general erasure channel $W'$, referred to as the bit-level erasure channel, as follows.
\begin{definition}[Bit-level erasure channel]
A bit-level erasure channel is defined by the following transition probabilities,
\begin{equation*}
W'(?,x_2|x_1,x_2)=\epsilon_1,  W'(x_1,?|x_1,x_2) = \epsilon_2, W'(?,?|x_1,x_2) = \epsilon_3, \forall x_1, x_2 \in \{0, 1\}. 
\end{equation*}
\end{definition}
The erasure channel $W$ is a special case of the bit-level erasure channel with $\epsilon_1 = \epsilon_2 = 0$, and $ \epsilon_3 = \epsilon$. In the next lemma, we give $Z(W'^{[1]}), Z(W'^{[2]}), Z_1(W')$ and $Z_2(W')$.

\begin{lemma} \label{lem:erasure_err}
The following equalities hold for a  bit-level erasure channel $W'$,
 \begin{align*}
Z(W'^{[1]}) &=  Z_2(W') =  \epsilon_1 + \epsilon_3.  \\
   Z(W'^{[2]}) &=  Z_1(W') = \epsilon_2 + \epsilon_3.
\end{align*} 
\end{lemma}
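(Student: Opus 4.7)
The lemma reduces to a direct enumeration, exploiting the very simple support structure of the bit-level erasure channel. I would split the work into two parts: one for the pair-averaged parameters $Z_d(W')$ and one for the partial-channel parameters $Z(W'^{[d]})$.

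For $Z_d(W')$, I would apply the identity $Z_d(W') = \tfrac{1}{2}\bigl[Z(W'_{0,d}) + Z(W'_{x'',x''\oplus d})\bigr]$ and compute $Z(W'_{x,x'}) = \sum_y \sqrt{W'(y|x)W'(y|x')}$ for a generic pair differing by $d$. The summand is nonzero only when $y$ is consistent with both $x$ and $x'$. Taking $d = 2 \equiv 10$, which flips $x_1$, the only such outputs are the first-bit-erased output $(?, x_2)$, with contribution $\sqrt{\epsilon_1\cdot\epsilon_1} = \epsilon_1$, and the fully-erased output $(?, ?)$, with contribution $\epsilon_3$. Hence $Z(W'_{x, x \oplus 10}) = \epsilon_1 + \epsilon_3$ for every $x$, giving $Z_2(W') = \epsilon_1 + \epsilon_3$. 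The case $d = 1 \equiv 01$ is symmetric and yields $Z_1(W') = \epsilon_2 + \epsilon_3$.

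For $Z(W'^{[1]})$, I would expand $W'^{[1]}(y|x_1) = \tfrac{1}{2}\sum_{x_2} W'(y|x_1, x_2)$ and enumerate the outputs. Any output whose first bit is received determines $x_1$ unambiguously, so it lies in disjoint support for $x_1 = 0$ and $x_1 = 1$ and contributes nothing to the binary Bhattacharyya sum. The remaining outputs $(?, 0)$, $(?, 1)$, and $(?, ?)$ occur with equal probability under $x_1 = 0$ and $x_1 = 1$, namely $\epsilon_1/2$, $\epsilon_1/2$, and $\epsilon_3$, respectively, contributing $\epsilon_1 + \epsilon_3$ in total. The analogous enumeration yields $Z(W'^{[2]}) = \epsilon_2 + \epsilon_3$. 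There is no conceptual obstacle; the only care required is in tracking the bit-index conventions, so that the partial channel $W'^{[1]}$, which transmits $x_1$, is matched with the pair-averaged $Z_d(W')$ in the direction $d = 2$ that flips $x_1$, and symmetrically for $x_2$ and $d = 1$.
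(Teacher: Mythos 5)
Your proposal is correct and takes essentially the same approach as the paper: a direct enumeration of which outputs are jointly supported under both conditioned inputs, for both $Z_d(W')$ and $Z(W'^{[d]})$. The paper phrases the partial-channel part more compactly by simply observing that $W'^{[1]}$ and $W'^{[2]}$ are binary erasure channels with erasure probabilities $\epsilon_1+\epsilon_3$ and $\epsilon_2+\epsilon_3$, but your explicit bookkeeping of the erasure outputs is the same calculation spelled out, including the correct matching of the partial-channel index $[d]$ with the pair-averaged direction $d'$ that flips the bit transmitted by $W'^{[d]}$.
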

\begin{proof}
Given $x_1x_2 \in \{0, 1\}^2$ as input to $W'$, the bits $x_1$ and $x_2$ are inputs to the partial channels $W'^{[1]}$ and $W'^{[2]}$, respectively. It is not very difficult to see that $W'^{[1]}$ and $W'^{[2]}$ are binary-input erasure channels with erasure probabilities $\epsilon_1 + \epsilon_3$ and $\epsilon_2 + \epsilon_3$, respectively. Since the Bhattacharyya parameter is equal to the erasure probability for a binary-input erasure channel, it follows that $Z(W'^{[1]}) = \epsilon_1 + \epsilon_3$ and $Z(W'^{[2]}) = \epsilon_2 + \epsilon_3$.

\medskip\noindent Moreover, $Z_1(W') = \epsilon_2 + \epsilon_3$ as for any $x_1, x_2$, $\sqrt{W(y|x_1,x_2)W(y|x_1,x_2 \oplus 1)}$ is non-zero only when $y = x_1, ? \text{ or } y = ?,?$. Similarly, $Z_2(W') = \epsilon_1 + \epsilon_3$.
\end{proof}
\noindent Taking advantage of the above lemma, we will only use quantities $Z(W'^{[1]})$ and $Z(W'^{[2]})$ from now on. Also, from~(\ref{eq:sym_mut_info}) and Lemma~\ref{lem:erasure_err}, the symmetric mutual information of $W'$ is given by, 
\begin{equation} \label{eq:information_erasure}
I(W') = 2 -Z(W'^{[1]}) - Z(W'^{[2]}).
\end{equation}

%Information is lost for $W'^{[1]}$ only when $x_1$ has been erased  and for for $W'^{[2]}$ only when $x_2$ has been erased, therefore erasure probabilities are $\epsilon_1 + \epsilon_3$ and $\epsilon_2 + \epsilon_3$ for $W'^{[1]}$ and $W'^{[2]}$, respectively.

%\item $I_1(W) = 1 - Z_1(W), I_2(W) = 1 - Z_2(W), I(W) = I_1(W) + I_2(W)$

%\subsection{Polarization for $W_E$ }

\subsection{First construction}
Here, we consider the quantum polar code construction given in Section~\ref{sec:multilevel}. Firstly, we prove the following lemma for the partial channels. 
\begin{lemma} \label{lem:erasure_bht}
Given $W'$ is a bit-level erasure channel, let $W'^{0} \eqdef W' \boxast_{\Gamma} W'$ and $W'^{1} \eqdef W' \varoast_{\Gamma} W' $ be the synthesized virtual channels for the channel combining operation $\Gamma = \Gamma(L)$ (Figure~\ref{fig:perm_comb_2}). Then, $W'^{0}$ and $W'^{1}$ are also bit-level erasure channels and the inequalities for partial channels in~(\ref{eq::ineq_1})-(\ref{eq::ineq_4}) are equalities, that is,
\begin{align}
Z({W'^{0}}^{[1]}) &= 2Z(W'^{[2]}) - Z(W'^{[2]})^2. \\
Z({W'^{0}}^{[2]}) &= Z({W'^{0}}^{[1]}). \\
Z({W'^{1}}^{[1]}) &= Z(W'^{[2]})^2.  \\
Z({W'^{1}}^{[2]}) &= Z(W'^{[1]}).
\end{align}

%$\left(Z({W'^{0}}^{[1]}), Z({W'^{0}}^{[2]})\right) = (2Z(W'^{[2]}) - Z(W'^{[2]})^2 ), 2Z_1(W') - Z_1(W')^2)$ and $(Z({W'^{0}}^{[1]}), Z({W'^{0}}^{[2]}) = (Z_2(W'), Z_1(W')^2)$
\end{lemma}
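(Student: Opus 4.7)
The plan is to exploit the deterministic structure of bit-level erasure channels: the output of $W'$ depends on the input only through whether each of the two input bits is erased, and the joint erasure pattern is independent of the input. I would first show that this bit-level erasure structure is preserved by the combining-and-splitting procedure using $\Gamma(L)$, and then read off the marginal erasure probabilities of the virtual channels $W'^0$ and $W'^1$ by a case analysis. Once these are known, each partial channel $W'^{i[d]}$ is a binary-input erasure channel whose Bhattacharyya parameter equals its erasure probability, so the claimed identities follow directly from Lemma~\ref{lem:erasure_err}.

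To set up notation, I would use Figure~\ref{fig:perm_comb_2}: under $\Gamma(L)$, the input to the first copy of $W'$ is $(a_1, b_1) = (u_2, u_1 \oplus v_1)$, and the input to the second copy is $(a_2, b_2) = (u_2 \oplus v_2, v_1)$. Let $\alpha_i, \beta_i \in \{0, 1\}$ denote the indicators that $a_i$ and $b_i$ are erased in the $i$-th output; the pairs $(\alpha_i, \beta_i)$ are i.i.d.\ across $i$, with joint distribution governed by $(\epsilon_0, \epsilon_1, \epsilon_2, \epsilon_3)$, and by Lemma~\ref{lem:erasure_err} one has $P(\alpha_i = 1) = Z(W'^{[1]})$ and $P(\beta_i = 1) = Z(W'^{[2]})$.

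For the bad channel $W'^0$, where $(v_1, v_2)$ are summed out uniformly in~(\ref{eq:bad_channel_W}), the receiver can infer only those linear combinations of $(u_1, u_2)$ that lie in the span of the unerased symbols after quotienting out $v_1, v_2$. A short enumeration over $(\alpha_1, \beta_1, \alpha_2, \beta_2)$ yields that $u_2$ is recovered iff $\alpha_1 = 0$ (since $a_1 = u_2$ is the only symbol containing $u_2$ without an accompanying $v_2$ that is nowhere else cancellable), and that $u_1$ is recovered iff $\beta_1 = 0$ and $\beta_2 = 0$ (so that $b_1 = u_1 \oplus v_1$ and $b_2 = v_1$ can be combined to eliminate $v_1$). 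For the good channel $W'^1$, the receiver additionally knows $(u_1, u_2)$, hence $v_1$ can be read off either from $b_2 = v_1$ or from $b_1 = u_1 \oplus v_1$, and $v_2$ from $a_2 = u_2 \oplus v_2$; a similar enumeration gives that $v_2$ is erased iff $\alpha_2 = 1$, while $v_1$ is erased iff both $\beta_1 = \beta_2 = 1$.

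Translating these erasure events into probabilities, the marginal erasure rates become $2 Z(W'^{[2]}) - Z(W'^{[2]})^2$ and $Z(W'^{[1]})$ for $u_1$ and $u_2$ in the bad channel, and $Z(W'^{[2]})^2$ and $Z(W'^{[1]})$ for $v_1$ and $v_2$ in the good channel. Applying Lemma~\ref{lem:erasure_err} then delivers the claimed identities. The main bookkeeping burden will be to verify, in the bad-channel case, that no recovery path for $u_2$ exploits $a_2$ (because $v_2$ appears only there and cannot be eliminated) and that the only path to $u_1$ uses $b_1$ and $b_2$ jointly; both checks are elementary but essential.
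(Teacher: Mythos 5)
Your proof is correct and essentially the same as the paper's: both reduce to tracking, as a function of the erasure pattern of the two $W'$ copies, which of the two effective input bits is recoverable, then invoking Lemma~\ref{lem:erasure_err}; the paper phrases this by first writing down the joint erasure parameters $\epsilon_1^0,\epsilon_2^0,\epsilon_3^0$ and $\epsilon_1^1,\epsilon_2^1,\epsilon_3^1$ of the virtual channels and then summing the relevant pairs, whereas you read off the marginal erasure events for $u_1,u_2,v_1,v_2$ directly from the indicators $(\alpha_i,\beta_i)$ --- the same enumeration in slightly lighter notation, and your case analysis for who is recoverable from where matches the paper's formulas exactly. One thing worth flagging: the second displayed identity in the lemma as printed, $Z\big({W'^{0}}^{[2]}\big) = Z\big({W'^{0}}^{[1]}\big)$, appears to be a typo for $Z\big({W'^{0}}^{[2]}\big) = Z\big(W'^{[1]}\big)$, which is what both your argument and the paper's own proof produce and which is the equality version of~(\ref{eq::ineq_2}).
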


\begin{proof} 
The erasure probabilities for $W'^0$, 
\begin{align*}
& \epsilon_1^0 \eqdef W'^{0}(?,x_2|x_1,x_2) = \epsilon_2 + (1 - \epsilon_1 - \epsilon_2 - \epsilon_3) \times (\epsilon_2 + \epsilon_3). \\
& \epsilon_2^0 \eqdef W'^{0}(x_1,?|x_1,x_2) = \epsilon_1 \times (1 - \epsilon_2 - \epsilon_3). \\
& \epsilon_3^0 \eqdef W'^{0}(?,?|x_1,x_2) = \epsilon_3 + \epsilon_1 \times (\epsilon_2 + \epsilon_3).
\end{align*}
%\noindent $ \epsilon_1^0 \eqdef W'^{0}(?,x_2|x_1,x_2) = \epsilon_2 + (1 - \epsilon_1 - \epsilon_2 - \epsilon_3) \times (\epsilon_2 + \epsilon_3)$, $  \epsilon_2^0 \eqdef W'^{0}(x_1,?|x_1,x_2) = \epsilon_1 \times (1 - \epsilon_2 - \epsilon_3)$, $ \epsilon_3^0 \eqdef W'^{0}(?,?|x_1,x_2) = \epsilon_3 + \epsilon_1 \times (\epsilon_2 + \epsilon_3).$

\medskip\noindent The erasure probabilities for $W'^1$, 
\begin{align*}
& \epsilon_1^1 \eqdef W'^{1}(?,x_2|x_1,x_2) = \epsilon_2 \times (\epsilon_2 + \epsilon_3).  \\
& \epsilon_2^1 \eqdef W'^{1}(x_1,?|x_1,x_2) = \epsilon_1 + \epsilon_3 \times (1 - \epsilon_2 - \epsilon_3). \\
& \epsilon_3^1 \eqdef W'^{0}(?,?|x_1,x_2) = \epsilon_3 \times (\epsilon_2 + \epsilon_3).
\end{align*}
%\noindent $ \epsilon_1^1 \eqdef W'^{1}(?,x_2|x_1,x_2) = \epsilon_2 \times (\epsilon_2 + \epsilon_3)$, $ \epsilon_2^1 \eqdef W'^{1}(x_1,?|x_1,x_2) = \epsilon_1 + \epsilon_3 \times (1 - \epsilon_2 - \epsilon_3)$, $ \epsilon_3^1 \eqdef W'^{0}(?,?|x_1,x_2) = \epsilon_3 \times (\epsilon_2 + \epsilon_3).$

\medskip\noindent Note that even when $W'$ is an erasure channel, that is, $\epsilon_1 = \epsilon_2 = 0$, we have that $\epsilon_1^0 = \epsilon_2^1 = (1 - \epsilon_3) \epsilon_3 $, which is non-zero except when $\epsilon_3 \in \{0, 1\}$. Therefore, the virtual channels $W'^0$ and $W'^1$ are bit-level erasure channels in general. From Lemma~\ref{lem:erasure_err}, we have that
\begin{align*}
Z({W'^{0}}^{[1]}) &= \epsilon_1^0 + \epsilon_3^0 = 2Z(W'^{[2]}) - Z(W'^{[2]})^2. \\
Z({W'^{0}}^{[2]}) &= \epsilon_2^0 + \epsilon_3^0 = Z(W'^{[1]}). \\
Z({W'^{1}}^{[1]}) &= \epsilon_1^1 + \epsilon_3^1 = Z(W'^{[2]})^2. \\
Z({W'^{1}}^{[2]}) &= \epsilon_2^1 + \epsilon_3^1 = Z(W'^{[1]}).
\end{align*}
\end{proof}
\noindent  Applying Lemma~\ref{lem:erasure_bht} recursively, we may compute $({Z(W^{(i)}}^{[1]}), {Z(W^{(i)}}^{[2]}))$ for any virtual channel $W^{(i)}$.

%As explained in Section~\ref{sec:first_constr}, after $n$ steps of polarization on $W$ using $\Gamma(L)$ as channel combining operation, one synthesises $2^n$ virtual channels $W^{(i)} \eqdef W^{i_1 \cdots i_n}$.
%Using lemmas~\ref{lem:erasure_err} and~\ref{lem:erasure_bht}, we have following recursive formula for $\Gamma(L_1)$:
%
%\[
%     (Z_1(W_{n+1}), Z_2(W_{n+1}))  = 
%\begin{cases}
%   (Z_2(W_{n}), 2Z_1(W_{n}) - Z_1(W_{n})^2) ,& \text{when }  n = 0\\
%    (Z_2(W_{n}), Z_1(W_{n})^2),              &  \text{when }  n = 1
%\end{cases}
%\]

%\begin{figure}[H]
%\centering
%\includegraphics[scale=0.8]{normal}
%\label{fig:normal}
%\caption{For, $N = 2^{n}$ with $n = 18$, Graph of $Z_1(W_m) + Z_2(W_m)$ for erasure channel with erasure probability $W(?,?|x_1,x_2) = 0.15$ when $\Gamma(L_1)$ is used as channel combining operation. For $Z_1(W_m) + Z_2(W_m) \to 0 \implies W_m \in A$, $Z_1(W_m) + Z_2(W_m) \to 1 \implies W_m \in B \cup C$, $Z_1(W_m) + Z_2(W_m) \to 2 \implies W_m \in D$. }
%\end{figure}

\subsection{Second construction}

Here, we consider the alternative construction proposed in the Section~\ref{sec:alt_constr}. First of all, we give the following Lemma for $\Gamma_1$ and $\Gamma_2$, the permutations associated with the CNOT gates $L_1$ and $L_2$, respectively.

\begin{lemma} \label{lem:erasure_bht_1}
Given a bit-level erasure channel $W'$, let $W'^0 \eqdef W' \boxast_{\Gamma} W'$ and $W'^1 \eqdef W' \varoast_{\Gamma} W'$. Then, for $\Gamma = \Gamma_1$ as channel combining operation, we have that

\medskip\noindent 
\[
     \left(Z({W'^i}^{[1]}), Z({W'^i}^{[2]})\right)  = 
\begin{cases}
   \left(Z(W'^{[1]}), 2Z(W'^{[2]}) - Z(W'^{[2]})^2 \right) ,& \text{when }  i = 0,\\
    \left(Z(W'^{[1]}), Z(W'^{[2]})^2 \right),              &  \text{when }  i = 1,
\end{cases}
\]
 and for $\Gamma = \Gamma_2$ as channel combining operation, we have that
\[
     \left(Z({W'^i}^{[1]}), Z({W'^i}^{[2]})\right)  = 
\begin{cases}
   \left(2Z(W'^{[1]}) - Z(W'^{[1]})^2, Z(W'^{[2]})  \right) , & \text{when }  i = 0.\\
    \left(Z(W'^{[1]})^2, Z(W'^{[2]}) \right),              &  \text{when }  i = 1.
\end{cases}
\]

\begin{proof}
The proof has been omitted as it is basically the same proof as in Lemma~\ref{lem:erasure_bht}.
\end{proof}

\end{lemma}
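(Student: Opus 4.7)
The plan is to follow the same methodology as in Lemma~\ref{lem:erasure_bht}: for each of $\Gamma_1, \Gamma_2$, I would first establish that $W'^0$ and $W'^1$ remain bit-level erasure channels by tracking the erasure patterns explicitly, then appeal to Lemma~\ref{lem:erasure_err} which converts bit-level erasure probabilities into the Bhattacharyya parameters $Z(\cdot^{[1]})$ and $Z(\cdot^{[2]})$. Because of the expected answer, it is actually cleaner to skip computing all three parameters $\epsilon_1^i, \epsilon_2^i, \epsilon_3^i$ separately and instead directly compute the quantities $Z({W'^i}^{[d]}) = P(\text{bit } d \text{ of the input to } W'^i \text{ is erased})$ from first principles.

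First, I would work out the permutations explicitly. Conjugation by the CNOT gate $L_1$ (control on the first wire) maps a Pauli $(u_1u_2, v_1v_2) \in \bar{P}_1\times\bar{P}_1$ to $(u_1, u_2 \oplus v_2, u_1 \oplus v_1, v_2)$, so under $\Gamma_1$ the inputs to the two physical copies of $W'$ are $(u_1, u_2 \oplus v_2)$ and $(u_1 \oplus v_1, v_2)$. Conjugation by $L_2$ gives the symmetric permutation $(u_1u_2,v_1v_2)\mapsto(u_1\oplus v_1, u_2, v_1, u_2\oplus v_2)$. Using these, I would read off from the channel combining and splitting formulas (\ref{eq:classical_com_1})--(\ref{eq:classical_com_2}) which input bits of $(u_1,u_2)$ (for $\boxast$) or $(v_1,v_2)$ (for $\varoast$) can be recovered from the pair of outputs.

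The central observation is that the CNOT only couples the two copies along one axis: $\Gamma_1$ couples the second (Z-part) bit through the relation $(u_2\oplus v_2, v_2)$ while leaving the first (X-part) bit of the second copy an independent randomized bit $u_1\oplus v_1$. Consequently, for the bad channel $W'^0 = W'\boxast_{\Gamma_1} W'$, bit $u_1$ is recoverable iff the first-bit output of copy~1 is not erased, giving $Z({W'^0}^{[1]}) = Z(W'^{[1]})$; whereas bit $u_2$ is recoverable iff both second-bit outputs are not erased, giving $Z({W'^0}^{[2]}) = 1 - (1-Z(W'^{[2]}))^2 = 2Z(W'^{[2]}) - Z(W'^{[2]})^2$, using independence of the two copies. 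For the good channel $W'^1$, $v_1$ is recoverable iff the first-bit output of copy~2 is not erased (since $u_1$ is known), yielding $Z({W'^1}^{[1]}) = Z(W'^{[1]})$, and $v_2$ is erased only if both second-bit outputs are erased, giving $Z({W'^1}^{[2]}) = Z(W'^{[2]})^2$. The case of $\Gamma_2$ is completely symmetric with the roles of the two input bits exchanged.

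I do not anticipate a real obstacle in this proof; the only slightly delicate point is being careful that the two bits within a single copy of $W'$ are \emph{not} independent (their joint erasure pattern is governed by $\epsilon_1,\epsilon_2,\epsilon_3$), whereas the two copies of $W'$ \emph{are} independent. This distinction is what makes the factorization of probabilities work correctly in the formulas above. Once the four pairs are computed, the equalities in the statement follow immediately, and the remark at the end of the lemma (that the proof is essentially the same as for Lemma~\ref{lem:erasure_bht}) becomes fully justified.
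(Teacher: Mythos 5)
Your proposal is correct and follows essentially the same route the paper intends (the paper says the proof of Lemma~\ref{lem:erasure_bht_1} is "basically the same" as that of Lemma~\ref{lem:erasure_bht}, i.e., track how the bit-level erasure pattern propagates through the CNOT and read off the resulting erasure probabilities). Your streamlining — computing $Z({W'^i}^{[d]})$ directly as the probability that bit $d$ is unrecoverable, rather than first tabulating $\epsilon_1^i,\epsilon_2^i,\epsilon_3^i$ and then summing via Lemma~\ref{lem:erasure_err} — is a minor reorganization of the same combinatorial argument, and your observation that the two copies of $W'$ are independent while the two bits within one copy are not is exactly the point that makes the factorizations $2Z-Z^2$ and $Z^2$ come out correctly.
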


\noindent Recall from Section~\ref{sec:alt_constr} that $\Gamma \in \{\Gamma_1, \Gamma_2\}$ is chosen as channel combining operation if it satisfies~(\ref{eq:choose_channel_comb}). From Lemma~\ref{lem:erasure_bht_1}, for a virtual channel $W'^{i_1 \cdots i_k}$, we have that 
\begin{align}
T(\Gamma_1, W'^{i_1 \cdots i_k}) &= Z({W'^{i_1 \cdots i_{k}}}^{[1]}) + Z({W'^{i_1 \cdots i_{k}}}^{[2]})^2.  \\
T(\Gamma_2, W'^{i_1 \cdots i_{k}}) &= Z({W'^{i_1 \cdots i_{k}}}^{[1]})^2  + Z({W'^{i_1 \cdots i_{k}}}^{[2]}) .
\end{align}
Therefore, for a virtual channel $W^{i_1 \cdots i_k}$, we first determine the optimal permutation from $\{\Gamma_1, \Gamma_2\}$ using the above two equations, and subsequently compute $( Z({W^{i_1 \cdots i_k i_{k+1}}}^{[1]}), \break  Z({W^{i_1 \cdots i_n i_{k+1}}}^{[2]}))$ using Lemma~\ref{lem:erasure_bht_1}.

%for virtual channels $W'^{i_1}, W'^{i_1i_2}\dots, \break W'^{i_1 \cdots i_{n-1}}$,
%Therefore, if $Z(W'^{[1]}) + Z(W'^{[2]})^2 \leq Z(W'^{[1]})^2 + Z(W'^{[2]})$, $\Gamma_1$ is selected, otherwise $\Gamma_2$.
%After making the choice of $\Gamma$, one can compute $( Z({W^{i_1}}^{[1]}), Z({W^{i_1}}^{[2]}))$ for all $i_1 \in \{0, 1\}$ using Lemma~\ref{lem:erasure_bht_1}  

\subsection{Numerical Results}
%The parameter $\left(Z({W_{\Gamma}'^{i_1}}^{[1]}), Z({W_{\Gamma}'^{i_1}}^{[2]}) \right)$,  for any $i_1 \in \{0, 1\}$ and any bit-level erasure channel $W'$, can be exactly computed for both constructions given $\left(Z({W_{\Gamma}}'^{[1]}), Z({W_{\Gamma}}'^{[2]}) \right)$.
It follows from Lemmas~\ref{lem:erasure_bht} and~\ref{lem:erasure_bht_1} that for a bit-level erasure channel $W'$,~(\ref{eq:z1z2_supmar}) is an equality for both the first and second construction, \emph{i.e.}, $\sum_{i_1 \in \{0, 1\}} Z({W^{i_1}}^{[1]})Z({W^{i_1}}^{[2]}) = 2Z(W^{[1]})Z(W^{[2]})$. Therefore, the upper bound on $|\mathcal{D}|$ and the lower bound on $|\mathcal{B}| + |\mathcal{C}|$ from Proposition~\ref{lem:upper_D_lower_B+C} are also equalities for both first and second constructions. Hence, as $N \to \infty$, we have that
\begin{align}
\frac{|\mathcal{D}|}{N} &\to Z(W'^{[1]})Z(W'^{[2]}),  \\
\frac{|\mathcal{B}| + |\mathcal{C}|}{N} &\to \left(Z(W'^{[1]}) + Z(W'^{[2]}) - 2 Z(W'^{[1]}) Z(W'^{[2]})\right),  \\
\frac{|\mathcal{A}|}{N} &\to \left( 1 - Z(W'^{[1]}) - Z(W'^{[2]}) + Z(W'^{[1]}) Z(W'^{[2]}) \right),
\end{align}
where the first equation follows from~[Proposition \ref{lem:upper_D_lower_B+C},  point (a)], the second equation follows from~[Proposition \ref{lem:upper_D_lower_B+C},  point (b)] and~(\ref{eq:information_erasure}), and the third equation is obtained by using $ \frac{|\mathcal{A}| + |\mathcal{B}| + |\mathcal{C}| + |\mathcal{D}|}{N} \to 1$.

\medskip\noindent We now consider a quantum erasure channel with erasure probability $W(?, ?|x_1,x_2) = 0.1$. From Lemma~\ref{lem:erasure_err}, $Z(W^{[1]}) = Z(W^{[2]}) = 0.1$. From above three equations, it follows that $\frac{|\mathcal{D}|}{N} \to 0.01$, $\frac{|\mathcal{B}| + |\mathcal{C}| }{N} \to 0.18 $ and $\frac{|\mathcal{A}|}{N} \to 0.81$ as $N \to \infty$. Therefore, we have saved $9\%$ of EPR pairs as compared to~\cite{DGMS19}, and are left with only $1\%$ of preshared EPR pairs. For this erasure channel, we perform a numerical simulation for $n = 20$ steps of polarization for both the first and second construction, and compare their speeds of polarization.

\medskip In  Figure~\ref{fig:optimized}, the parameter $ T^{(i)} \eqdef Z({W^{(i)}}^{[1]}) + Z({W^{(i)}}^{[2]})$ is plotted for both first and second constructions after $n = 20$ polarization steps. 
\begin{figure}[H]
\centering
\includegraphics[scale=0.8]{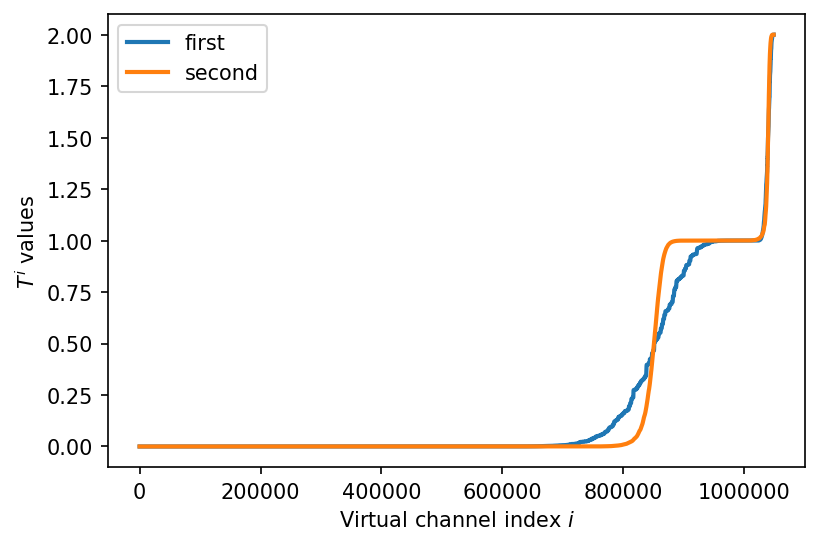}
\caption{$T^{(i)}$ values for a quantum erasure channel with erasure probability $W(?, ? | x_1, x_2) = 0.1$ after $n=20$  polarization steps. The virtual channel indices $i \in \{0,\dots, 2^n-1\}$ are sorted according to increasing $T^{(i)}$ values.  }
\label{fig:optimized}
\end{figure}

The multilevel polarization is evident in the above figure, especially for the second construction, as $T^{(i)}$ approaches faster the limit values $0, 1, \text{ or } 2$. In particular, we have the following,
\begin{itemize}
\item When $T^{(i)} \to 0$, that is, the plateau in the begining of the plot, $ i \in \mathcal{A}$. 

\item When $T^{(i)} \to 1$,  that is, the plateau in the middle of the plot,  $ i \in \mathcal{B} \cup \mathcal{C}$. 

\item When $T^{(i)} \to 2$, that is, the plateau in the end of the plot, $ i \in \mathcal{D}$.
\end{itemize}

\noindent For $\delta= 10^{-6}$, we compare the first and second construction in the following table,
\begin{center}
\begin{tabular}{ | c| c|  c| c| c|}
\hline
& $\frac{|\mathcal{A}|}{N}$& $\frac{|\mathcal{B}| + |\mathcal{C}|}{N}$  &$\frac{|\mathcal{D}|}{N}$ & $\frac{|\mathcal{A}| + |\mathcal{B}| + |\mathcal{C}| + |\mathcal{D}|}{N}$\\ 
 \hline
First construction & $0.49438$ & $0.03021$ & $0.00046$ & $0.52505$ \\ 
 \hline 
second construction &$0.64493$ &$0.07359$ & $0.00071$ & $0.71923$  \\
 \hline
\end{tabular}
\end{center}
Therefore, the numerical simulation suggests that the multilevel polarization happens significantly faster for the second construction compared to the first one.

\section{Conclusion}
For the family of Pauli channels (in fact, more general CMP channels), we have proven multilevel polarization using a fixed two-qubit Clifford unitary as channel combining operation. We have shown that the multilevel polarization can be used to build an efficient quantum code and allows to reduce the number of preshared EPR pairs with respect to~\cite{DGMS19}. Finally, we have presented an alternative construction to improve the speed of polarization and have shown by numerical simulation that the speed of polarization improves significantly for a quantum erasure channel. A natural future direction would be to investigate whether the number of preshared EPR pairs can be further reduced by combining several two-qubit Clifford unitaries.

\section*{Acknowledgments}
This research was supported in part by  the ``Investissements d’avenir'' (ANR-15-IDEX-02) program of the French National Research Agency. AG acknowledges the European Union’s Horizon 2020 research and innovation programme under the Marie Skłodowska-Curie grant agreement No 754303.

\newpage

% %%%%%%%%%%%%%%%%%%%%%%%%%%%%%%%%%%%%%%%%%%%%%%%%%%%% % 
%                 APPENDIX
% %%%%%%%%%%%%%%%%%%%%%%%%%%%%%%%%%%%%%%%%%%%%%%%%%%%% % 
\appendix

\section{Proof of Lemma~\ref{lem:inequal}}\label{sec:proof-inequality}

\textbf{Point 1:}  The Bhattacharyya parameter of the partial channel $W^{[d_1]}$ is given by,
\begin{align*}
Z(W^{[d_1]}) & = \sum_y \sqrt{W^{[d_1]}(y|0) W^{[d_1]}(y|1)} \nonumber \\
       & = \frac{1}{2}  \sum_y \sqrt{\sum_{\stackindx{l \in \{0, d_1\}} {m \in \{d_2, d_3\}}} W(y|l) W(y|m) }  \nonumber \\
       & \leq \frac{1}{2} \sum_{\stackindx {l \in \{0, d_1\}} {m \in \{d_2, d_3 \}}} \sum_y \sqrt{ W(y|l) W(y|m)} \nonumber \\
       & = Z_{d_2}(W) + Z_{d_3} (W), \nonumber 
\end{align*}
where the second equality follows from~(\ref{eq:def:partial_w_d}),  the third inequality follows from $\sqrt{\sum_x a_x} \leq \sum_x \sqrt{a_x}$, and the fourth equality follows from $ l \oplus m \in \{d_2, d_3\}, \forall l, m$ as $d_3 = d_1 \oplus d_2$ and~(\ref{eq:bhattacharyya_d}).

\medskip\noindent \textbf{Point 2:} For $W^{[d_1]}$, we consider the following two-dimensional vectors:
\begin{align}
 \vec{B}_0(y) & = ( \sqrt{W(y|0)}, \sqrt{W(y|d_1)} ). \nonumber \\
\vec{B}_1(y) & =   (\sqrt{W(y|d_2)}, \sqrt{W(y|d_1 \oplus d_2)}). \nonumber \\
\vec{B}_2(y) & =  (\sqrt{W(y|d_1 \oplus d_2)}, \sqrt{W(y|d_2)}). \nonumber 
\end{align}
Then, we have that
\begin{align*}
 &|\vec{B}_{0}(y)| = \sqrt{ W(y|0) + W(y|d_1)}.  \\
& |\vec{B}_{1}(y)| = |\vec{B}_{2}(y)|  =  \sqrt{W(y| d_2) + W(y| d_1 \oplus d_2)}. \\
&  \vec{B}_{0}(y) \cdot \vec{B}_{1} (y) = \sqrt{W(y|0)} \sqrt{W(y|d_2)} + \sqrt{W(y| d_1 )} \sqrt{W(y| d_1 \oplus d_2)}. \\ 
& \vec{B}_{0}(y) \cdot \vec{B}_{2} (y) = \sqrt{W(y|0)} \sqrt{W(y|d_1 \oplus d_2)} + \sqrt{W(y| d_1 )} \sqrt{W(y| d_2)}.  \\
\end{align*}
From the definitions of  $Z(W^{[i]})$ and $Z_d(W)$, it follows:
\begin{align} \label{eq:sum_y}
Z(W^{[d_1]}) &=  \frac{1}{2}\sum_y |\vec{B}_{0}(y)||\vec{B}_{1}(y)| = \frac{1}{2}\sum_y |\vec{B}_{0}(y)||\vec{B}_{2}(y)|. \\
 Z_{d_2}(W) & = \frac{1}{2} \sum_y \vec{B}_{0}(y) \cdot \vec{B}_{1} (y). \\
Z_{d_3}(W) =  Z_{d_1 \oplus d_2}(W) & = \frac{1}{2} \sum_y \vec{B}_{0}(y) \cdot \vec{B}_{2} (y).
\end{align}
Then, from the Cauchy-Schwartz inequality, we have that

\begin{equation}
Z_d(W) \leq Z(W^{[d_1]}), \text{ for } d = d_2, d_3.
\end{equation}

\medskip\noindent \textbf{Point 3:} $I(W)$ can be written as following~\cite[Lemma 10]{PB12}

{\small
\begin{align}
I(W) = \frac{1}{4} \sum\limits_y \sum_{x \in \bar{P}_1} W(y|x) \text{ log}_2 \frac{W(y|x)}{P(y)} \quad \quad \quad \quad \quad \quad \quad \quad \quad \quad \quad \quad \quad \quad \quad \quad \quad \quad \quad \quad \quad \quad \quad \quad \quad \quad \quad \quad  & \nonumber \\
      = \frac{1}{4} \sum\limits_y \frac{1}{6}\sum\limits_{d\in \{d_1, d_2, d_3\}}  \sum\limits_{x} \big[ W(y|x) \text{ log}_2 \frac{W(y|x)}{P(y)} +  W(y|x \oplus d) \text{ log}_2 \frac{W(y|x \oplus d)}{P(y)} \big] \quad \quad \quad \quad \quad \quad \quad \quad \quad \quad \quad & \nonumber \\
      = \frac{1}{24} \sum\limits_{ d \in \{d_1, d_2, d_3\}} \sum\limits_x W(y|x) \text{ log}_2 \frac{W(y|x)}{\frac{1}{2}[W(y|x) + W(y|x \oplus d)]} +  W(y|x \oplus  d) \text{ log}_2 \frac{W(y|x \oplus d)}{\frac{1}{2}[W(y|x) + W(y|x \oplus d)]} & \nonumber\\
      \quad \quad \quad \quad + \frac{1}{12} \sum\limits_y \sum\limits_{d \in \{d_1, d_2, d_3\}}  \sum\limits_{x} \frac{W(y|x) + W(y|x \oplus d)}{2} \text{ log}_2 \frac{\frac{1}{2}[W(y|x) + W(y|x \oplus d)] }{ P(y)} & \nonumber \\
      = \frac{1}{12} \sum\limits_{d \in \{d_1, d_2, d_3\} } \sum_x I(W_{x, x \oplus d}) + \frac{1}{6} \sum\limits_y  \sum\limits_{d \in \{d_1, d_2, d_3\}} \frac{W(y|0) + W(y|d)}{2} \text{ log}_2 \frac{\frac{1}{2}[W(y|0) + W(y| d)] }{ P(y)} \quad \quad \quad  & \nonumber \\
       \quad \quad \quad \quad \quad \quad  \quad \quad + \frac{1}{12} \sum\limits_y \sum\limits_{d \in \{d_1, d_2, d_3\}} \sum\limits_{x \neq 0, d} \frac{W(y|x) + W(y|x \oplus d)}{2} \text{ log}_2 \frac{\frac{1}{2}[W(y|x) + W(y| x \oplus d)] }{ P(y)} & \nonumber \\
     = \frac{1}{3} \sum\limits_{d \neq 0} I_d(W) + \frac{1}{6} \sum\limits_{d \in \{d_1, d_2, d_3\}} \sum\limits_y \Big[ W^{[d]}(y|0)  \text{ log}_2 \frac{W^{[d]}(y|0) }{ P(y)} + W^{[d]}(y|1)  \text{ log}_2 \frac{W^{[d]}(y|1) }{ P(y)} \Big] \quad \quad \quad \quad \quad \quad & \nonumber \\
    = \frac{1}{3} \sum\limits_{d \neq 0} I_d(W) +  \frac{1}{3} \sum\limits_{d \in \{1, 2, 3\} } I(W^{[d]}), \quad \quad \quad \quad \quad \quad \quad \quad  \quad \quad \quad \quad \quad \quad \quad \quad \quad \quad \quad \quad \quad \quad \quad \quad \quad \quad \quad \quad \quad & \nonumber
\end{align}
}
where $I(W_{x,x'})$ and $I_d(W)$ are defined in Section~\ref{sec:definitions}. Also, $I(W^{[d]})$ is the symmetric mutual information of the binary-input partial channel $W^{[d]}$. Using $I(W_{x, x'}) \leq \sqrt{1-Z(W_{x, x'})^2}$ from~\cite{arikan09}, and concavity of the function $f(x) = \sqrt{1-x^2}$, we have that

\begin{equation} \label{eq:bound_mutual}
 I(W)  \leq  \frac{1}{3} \sum_{d \in \{1, 2, 3\}} \sqrt{1 - Z_d(W)^2} + \frac{1}{3} \sum_{i \in \{1, 2, 3\}} \sqrt{1 - Z(W^{[i]})^2}.
\end{equation}
\hfill
% it follows that $I_d(W) \leq \sqrt{1 - Z_d(W)^2}$. Inequality $I(W^{[i]}) \leq \sqrt{1 - Z(W^{[i]})^2}$ also holds from~\cite{arikan09}.

\section{Proof of Lemma~\ref{lem:inequality_bht}} \label{sec:proof_inequality_bht}
\begin{proof}
\textbf{Proof of~(\ref{eq:ineq_Zgood_1}):}
\begin{align}
Z_1(W \varoast W) &= \frac{1}{4} \sum_{ \substack{y_1,y_2, u_1, u_2 \\ v_1, v_2}} \sqrt{(W \varoast W)(y_1,y_2, u_1,u_2|v_1,v_2)(W \varoast W)(y_1,y_2, u_1,u_2|v_1,v_2 + 1)} \nonumber \\
& = \frac{1}{16} \sum_{ \substack{y_1,y_2, u_1, u_2 \\ v_1, v_2}} W(y_1|u_2,u_1+v_1)\sqrt{W(y_2|u_2+v_2,v_1)W(y_2|u_2+v_2+1,v_1)} \nonumber  \\
& = \frac{1}{16} \sum_{ \substack{y_2, u_1, u_2 \\ v_1, v_2}} \sqrt{W(y_2|u_2+v_2,v_1)W(y_2|u_2+v_2+1,v_1)} \nonumber \\
& = \frac{1}{4} \sum_{u_1,u_2} Z_2(W) = Z_2(W).
\end{align}

\noindent \textbf{Proof of~(\ref{eq:ineq_Zgood_2}):}
\begin{align}
Z_2(W \varoast W) &= \frac{1}{4} \sum_{ \substack{y_1,y_2, u_1, u_2 \\ v_1, v_2}} \sqrt{(W \varoast W)(y_1,y_2, u_1,u_2|v_1,v_2)(W \varoast W)(y_1,y_2, u_1,u_2|v_1+1,v_2)} \nonumber \\
& = \frac{1}{16} \sum_{y_1, u_1,u_2} \sqrt{W(y_1|u_2,u_1+v_1)W(y_1|u_2,u_1+v_1+1)} \nonumber \\
& \qquad \qquad \qquad \cdot \sum_{y_2, v_1 , v_2} \sqrt{W(y_2|u_2+v_2,v_1)W(y_2|u_2+v_2,v_1+1)} \nonumber  \\
& = Z_1(W)^2.  
\end{align}
\end{proof}

\section{Proof of Lemma~\ref{lem:partial_canvirtual_inequality}} \label{sec:proof_inequality_partial_bht}

The transition probabilities of the partial channels (see~(\ref{eq:partial_1}) and~(\ref{eq:partial_2})) $(W \varoast W)^{[i]}$ and $(W \boxast W)^{[j]}$ for $i,j \in \{1, 2\}$ is given by

\begin{equation}
(W \boxast W)^{[1]}(y_1, y_2|u_1) = \frac{(W \boxast W)(y_1, y_2|u_1,0) + (W \boxast W)(y_1, y_2|u_1,1)}{2} \quad \quad \quad \quad\quad \quad \quad \quad \quad \quad \quad \quad \nonumber \\
\end{equation}
\begin{align}
& = \frac{1}{8}  \sum_{v_1,v_2} [W(y_1|0 , u_1 + v_1) W(y_2|v_2 , v_1) + W(y_1|1 , u_1 + v_1) W(y_2|v_2+1 , v_1)]   \nonumber \\
&= \frac{1}{8}  \sum_{v_1} [W(y_1|0 , u_1 + v_1) \sum_{v_2} W(y_2|v_2 , v_1) + W(y_1|1 , u_1 + v_1) \sum_{v_2} W(y_2|v_2+1 , v_1)]  \nonumber \\
&= \frac{1}{4} \sum_{v_1} [W(y_1 | 0 , u_1 + v_1) W^{[2]} (y_2|v_1) + W(y_1|1 , u_1 + v_1) W^{[2]}(y_2| v_1)]   \nonumber \\
&= \frac{1}{2} \sum_{v_1}W^{[2]}(y_1 | u_1 + v_1) W^{[2]}(y_2| v_1).   \label{eq:bad_1}
\end{align}

\begin{equation}
(W \boxast W)^{[2]}(y_1, y_2|u_2) =  \frac{(W \boxast W)(y_1, y_2|0 , u_2) + (W \boxast W)(y_1, y_2|1 , u_2)}{2}   \quad \quad \quad \quad \quad \quad \quad \quad \quad \quad \quad \quad  \nonumber \\
\end{equation}
\begin{align}
& = \frac{1}{8} \sum_{v_1 , v_2} [ W(y_1|u_2 ,v_1) W(y_2|u_2 + v_2 ,v_1) + W(y_1|u_2 ,v_1 + 1) W(y_2|u_2 + v_2 ,v_1)] \nonumber \\
&= \frac{1}{8} \sum_{v_1} [W(y_1|u_2 ,v_1) \sum_{v_2} W(y_2|u_2 + v_2 ,v_1) + W(y_1|u_2 ,v_1 + 1) \sum_{v_2} W(y_2|u_2 + v_2 ,v_1)]  \nonumber \\
 &= \frac{1}{4} \sum_{v_1}  [W(y_1|u_2 ,v_1)   +  W(y_1|u_2 ,v_1 + 1)] W^{[2]}(y_2|v_1) \nonumber \\
& = \frac{1}{2} W^{[1]}(y_1|u_2) \sum_{v_1} W^{[2]}(y_2|v_1).   \label{eq:bad_2}
\end{align}

%\begin{equation}
%(W \varoast W)^{[1]}(y_1,y_2, u_1,u_2|v_1) = \frac{(W \varoast W)^{[1]}(y_1,y_2,u_1,u_2|v_1,0) + (W \varoast W)^{[1]}(y_1,y_2|v_1,1)}{2}  \nonumber \\
%\end{equation}
%\begin{align}
%&= \frac{1}{8} \big[W(y_1|u_2,u_1 + v_1) W(y_2|u_2,v_1 ) + W(y_1|u_2,u_1 + v_1) W(y_2|u_2 + 1,v_1) \big] \nonumber \\
%&=\frac{1}{4} W(y_1|u_2,u_1 + v_1) W^{[2]}(y_2|v_1).  &\label{eq:good1}
%\end{align}

\begin{align}
(W \varoast W)^{[1]}(y_1,y_2, u_1,u_2|v_1) = \frac{(W \varoast W)^{[1]}(y_1,y_2,u_1,u_2|v_1,0) + (W \varoast W)^{[1]}(y_1,y_2|v_1,1)}{2} & \nonumber \\
  = \frac{1}{8} \big[W(y_1|u_2,u_1 + v_1) W(y_2|u_2,v_1 ) + W(y_1|u_2,u_1 + v_1) W(y_2|u_2 + 1,v_1) \big]  & \nonumber \\
=\frac{1}{4} W(y_1|u_2,u_1 + v_1) W^{[2]}(y_2|v_1). \quad \quad \quad \quad  \quad \quad \quad \quad  \quad \quad \quad \quad \quad \quad \quad \quad \quad  &\label{eq:good1} 
\end{align}

\begin{align}
(W \varoast W)^{[2]}(y_1,y_2, u_1,u_2|v_2) = \frac{(W \varoast W)^{[1]}(y_1,y_2,u_1,u_2|0,v_2) + (W \varoast W)^{[1]}(y_1,y_2|1,v_2)}{2}  & \nonumber \\
= \frac{W(y_1|u_2,u_1) W(y_2|u_2+v_2,0)+ W(y_1|u_2,u_1 + 1) W(y_2|u_2+v_2,1)}{8}.  \quad \quad \quad \quad  \quad \quad \quad \quad  &  \label{eq:good2}
\end{align}

\noindent \textbf{Proof of~(\ref{eq::ineq_1})}:
From~(\ref{eq:bad_1}), we have that
\begin{align}
(W \boxast W)^{[1]}(y_1, y_2|0) &= \frac{W^{[2]}(y_1 | 0) W^{[2]}(y_2| 0)  + W^{[2]}(y_1 | 1) W^{[2]}(y_2| 1)}{2} \nonumber \\
(W \boxast W)^{[1]}(y_1, y_2|1) &= \frac{W^{[2]}(y_1 | 0) W^{[2]}(y_2| 1)  + W^{[2]}(y_1 | 1) W^{[2]}(y_2| 0)}{2}  \nonumber 
\end{align}

\noindent Define $\alpha(y_1) = W^{[2]}(y_1 | 0)$, $\beta(y_2) = W^{[2]}(y_2|0)$, $\delta(y_1) = W^{[2]}(y_1|1) $ and $\gamma(y_2) = W^{[2]}(y_2|1)$. Then, the following equalities hold,
\begin{align}
Z(W^{[2]}) &= \sum_{y_1} \sqrt{\alpha(y_1) \delta(y_1)} = \sum_{y_2} \sqrt{\beta(y_2) \gamma(y_2)}\label{eq:notn_bht} \\
\sum_{y_1} \alpha(y_1) &= \sum_{y_1} \delta(y_1) = \sum_{y_2} \beta(y_2) =  \sum_{y_2} \gamma(y_2) = 1 \label{eq:not_prob}
\end{align}
The Bhattacharyya parameter of the partial channel $(W \boxast W)^{[1]}$ is given by
\begin{align*}
Z\big((W \boxast W)^{[1]}\big) &= \sum_{y_1,y_2} \sqrt{(W \boxast W)^{[1]}(y_1, y_2|0)(W \boxast W)^{[1]}(y_1, y_2|1)} \nonumber \\
& = \frac{1}{2} \sum_{y_1, y_2} \sqrt{\alpha(y_1)\beta(y_2) + \delta(y_1)\gamma(y_2)} \sqrt{\alpha(y_1)\gamma(y_2) + \delta(y_1)\beta(y_2)} \nonumber \\
& \leq \frac{1}{2} \sum_{y_1} [\alpha(y_1)+\delta(y_1)] \sum_{y_2} \sqrt{\beta(y_2)\gamma(y_2)} + \frac{1}{2} \sum_{y_1} \sqrt{\alpha(y_1)\delta(y_1)}\sum_{y_2} [\beta(y_2) + \gamma(y_2)] \nonumber \\
& \qquad \qquad \qquad \qquad \qquad \qquad \quad \quad \quad \quad - \sum_{y_1, y_2} \sqrt{\alpha(y_1) \delta(y_1) \beta(y_2) \gamma(y_2)} \nonumber \\
& = 2 Z(W^{[2]}) - Z(W^{[2]})^2,
\end{align*}
where for the third inequality, we have used the following inequality from~\cite{arikan09}
\begin{align}
\sqrt{(\alpha \beta + \delta \gamma) (\alpha\gamma + \delta \beta)} \leq (\sqrt{\alpha\beta} + \sqrt{\gamma \delta} ) (\sqrt{\alpha\gamma} + \sqrt{\delta \beta}) - 2 \sqrt{\alpha\beta\gamma\delta}.
\end{align}
and the fourth equality follows from~(\ref{eq:notn_bht}) and~(\ref{eq:not_prob}). \\
\medskip

\noindent \textbf{Proof of~(\ref{eq::ineq_2})}:
The Bhattacharyya parameter of the partial channel $(W \boxast W)^{[2]}$ is given by
\begin{align*}
Z((W \boxast W)^{[2]}) &= \frac{1}{2} \sum_{y_1, y_2} \sqrt{(W \boxast W)^{[2]}(y_1, y_2|0)} \sqrt{(W \boxast W)^{[2]}(y_1, y_2|1)} \nonumber \\
&= \frac{1}{2} \sum_{y_1} \sqrt{W^{[1]}(y_1|0) W^{[1]}(y_1|1)} \sum_{y_2} \sum_{v_1} W^{[1]}(y_2|v_1) \nonumber \\
&=  Z(W^{[1]}),
\end{align*}
where the second equality follows from~(\ref{eq:bad_2}).

\medskip\noindent \textbf{Proof of~(\ref{eq::ineq_3})}:
The Bhattacharyya parameter of the partial channel $(W \varoast W)^{[1]}$ is given by
\begin{align}
Z\big((W \varoast W)^{[1]}\big) & = \sum_{y_1,y_2, u_1,u_2} \sqrt{(W \varoast W)^{[1]}(y_1,y_2,u_1,u_2|0) (W \varoast W)^{[1]}(y_1,y_2,u_1,u_2|1)} \nonumber \\
&= \frac{1}{4} \sum_{y_1} \sum_{u_1,u_2} \sqrt{W(y_1|u_2,u_1) W(y_1|u_2,u_1+1)} \sum_{y_2} \sqrt{ W^{[2]}(y_2|0)W^{[2]}(y_2|1)} \nonumber \\
&= Z_{1}(W)Z(W^{[2]}),
\end{align}
where the second equality follows from~(\ref{eq:good1}).

\medskip\noindent \textbf{Proof of~(\ref{eq::ineq_4})}:
The Bhattacharyya parameter of the partial channel $(W \varoast W)^{[2]}$ is given by
\begin{align}
Z\big((W \varoast W)^{[2]}\big)  = \sum_{y_1,y_2, u_1, u_2} \sqrt{(W \varoast W)^{[2]}(y_1,y_2,u_1,u_2|0) (W \varoast W)^{[2]}(y_1,y_2,u_1,u_2|1)} \quad \quad \quad & \nonumber \\
\leq \sum_{y_1,y_2,u_2} \sqrt{\sum_{u_1}(W \varoast W)^{[2]}(y_1,y_2,u_1,u_2|0) \sum_{u_1'}(W \varoast W)^{[2] }(y_1,y_2,u_1',u_2|1)} \quad \quad \quad \quad \quad \quad \quad & \nonumber \\
 = \frac{1}{4} \sum_{y_1, y_2, u_2} \sqrt{\frac{1}{2} \left(\sum_{u_1} W(y_1|u_2,u_1)\right) W(y_2|u_2,0)+ \frac{1}{2}\left(\sum_{u_1}W(y_1|u_2,u_1 + 1)\right) W(y_2|u_2,1)} \quad & \nonumber \\
 \qquad  \cdot \sqrt{\frac{1}{2}\left(\sum_{u_1'} W(y_1|u_2,u_1')\right) W(y_2|u_2 + 1,0)+ \frac{1}{2}\left(\sum_{u_1'}W(y_1|u_2,u_1' + 1)\right) W(y_2|u_2+1,1)} & \nonumber \\
 = \frac{1}{2} \sum_{y_1, y_2, u_2} W^{[1]}(y_1|u_2) \sqrt{\frac{W(y_2|u_2,0) + W(y_2|u_2,1)}{2}} \sqrt{\frac{W(y_2|u_2+1,0) + W(y_2|u_2+1,0)}{2}}  & \nonumber \\
 = \frac{1}{2} \sum_{y_2, u_2} \sqrt{W^{[1]}(y_2|u_2) W^{[1]}(y_2|u_2+1)} \quad \quad \quad \quad  \quad \quad \quad \quad  \quad \quad \quad \quad  \quad \quad \quad \quad \quad \quad \quad \quad  \quad \quad &\nonumber \\
 = Z(W^{[1]}), \quad \quad \quad \quad  \quad \quad \quad \quad  \quad \quad \quad \quad  \quad \quad \quad \quad  \quad \quad \quad \quad  \quad \quad \quad \quad \quad \quad \quad \quad  \quad \quad \quad \quad \quad \quad & \nonumber
\end{align}
where for the second inequality, consider vectors {\small $\vec{A}(y_1, y_2, u_2) = ( \sqrt{(W \varoast W)^{[2]}(y_1,y_2,u_1,u_2|0)})_{u_1} $} and $\vec{B}(y_1, y_2, u_2) = ( \sqrt{(W \varoast W)^{[2]}(y_1,y_2,u_1,u_2|1)})_{u_1} $. Then, it follows from the Cauchy -Schwartz inequality, $|  \vec{A}(y_1, y_2, u_2) \cdot \vec{B}(y_1, y_2, u_2)| \leq |\vec{A}(y_1, y_2, u_2)| |\vec{B}(y_1, y_2, u_2)|$. The third equality follows from~(\ref{eq:good2}).

 \printbibliography

\end{document}